\pgfplotsset{compat=1.9}
\newtheorem{definition}{Definition}
\newtheorem{example}{Example}
\newtheorem{theorem}{Theorem}
\newtheorem{lemma}{Lemma}
\newtheorem{proposition}{Proposition}
\newtheorem{claim}{Claim}
\newcommand{\advantageGaining}[1]{{#1}\dash{}\textsc{AB}}
\newcommand{\advantageGainingnp}{\textsc{AB}}
\newcommand{\multiBribery}[1]{{#1}\dash{}\textsc{MAB}}
\newcommand{\pluralityAG}[1]{{#1}\dash{}\textsc{AWB}}
\newcommand{\pluralityAGnp}{\textsc{AWB}}
\newcommand{\AGnp}{\advantageGainingnp}
\newcommand{\multiPluralityAG}[1]{{#1}\dash{}\textsc{MAWB}}
\newcommand{\multiBriberyAG}[1]{{#1}\dash{}\textsc{MAWB}}
\newcommand{\quota}{{{\mathrm{lqu}}}}
\newcommand{\remainder}{{{\mathrm{rem}}}}
\newcommand{\sainte}{Sainte Lagu\"e}
\newcommand{\orderedlistingof}[2]{\ensuremath{#1_1, #1_2, \ldots, #1_#2}}
\newcommand{\orderedsetof}[2]{\ensuremath{\{\orderedlistingof{#1}{#2}\}}}
\newcommand{\namedorderedsetof}[3]{\ensuremath{{#1}=\orderedsetof{#2}{#3}}}
\newcommand{\dash}{\hbox{-}}
\newcommand{\cubicVertexCover}{\textsc{Cubic\dash{}Vertex\dash{}Cover}}
\newcommand{\unarybinpacking}{\textsc{Unary\dash{}Bin\dash{}Packing}}
\newcommand{\unarybinpackingstar}{\textsc{Unary\dash{}Bin\dash{}Packing$^{\star}$}}
\newcommand{\np}{\ensuremath{\mathrm{NP}}}
\newcommand{\fpt}{\ensuremath{\mathrm{FPT}}}                                     
\newcommand{\wone}{\ensuremath{\mathrm{W[1]}}}                                   
\newcommand{\wonehard}{\wone-hard}
\newcommand{\wonehardness}{\wonehard{}ness}
\newcommand{\p}{\ensuremath{\mathrm{P}}}
\newcommand{\nphard}{\np-hard}                                       
\newcommand{\nphardness}{\nphard{}ness}
\newcommand{\calR}{{\mathcal{R}}}
\newcommand{\parties}{\ensuremath{\mathcal{P}}}
\newcommand{\partiesabove}{\ensuremath{\parties_{\threshold}}}
\newcommand{\votes}{\ensuremath{\mathcal{V}}}
\newcommand{\threshold}{\ensuremath{\tau}}
\newcommand{\supportalloc}{\ensuremath{\sigma}}
\newcommand{\seats}{\ensuremath{\kappa}}
\newcommand{\appmethod}{\ensuremath{\mathcal{R}}}
\newcommand{\seatalloc}{\ensuremath{\alpha}}
\newcommand{\desiredseats}{\ensuremath{\ell}}
\newcommand{\budget}{\ensuremath{K}}
\DeclareMathOperator{\fairshare}{fs}
\newcommand{\cost}{{\mathrm{cost}}}
\newcolumntype{Z}[1]{>{\raggedright\let\newline\\\arraybackslash\hspace{0pt}}m{#1}}
\newcolumntype{Y}[1]{>{\centering\let\newline\\\arraybackslash\hspace{0pt}}m{#1}}
\newcolumntype{L}{>{\raggedright\arraybackslash}X}
\newcolumntype{R}{>{\raggedleft\arraybackslash}X}
\newcolumntype{C}{>{\centering\arraybackslash}X}
\pgfplotsset{compat=1.16}
\definecolor{col_bribery}{rgb}{0.31, 0.31, 0.33}
\definecolor{col_controladd}{rgb}{0.85, 0.85, 0.86}
\definecolor{col_controldel}{rgb}{0.51, 0.50, 0.52}
\newcommand{\EP}[3]{
\begin{center}
{\small 
\begin{tabularx}{1.0\columnwidth}{lX}
\toprule
\multicolumn{2}{c}{\sc{#1}} \\
\midrule
{\bf Given:}& \parbox[t]{0.84\columnwidth}{#2\vspace*{1mm}} \\
{\bf Question:}& \parbox[t]{0.84\columnwidth}{#3\vspace*{.5mm}} \\ 
\bottomrule
\end{tabularx}
}
\end{center}
}
\begin{document}

\title{How to Tamper with a Parliament: Strategic Campaigns in Apportionment Elections\thanks{This paper combines and substantially extends two conference contributions, one of which appeared in the proceedings of the \emph{29th International Joint Conference on Artificial Intelligence} (IJCAI'20)~\cite{bre-fal-fur-kac-lac:c:strategic-campaign-management-in-apportionment-elections} and the other in the proceedings of the \emph{49th International Conference on Current Trends in Theory and Practice of Computer Science} (SOFSEM'24)~\cite{lau-rot-see:c:apportionment-with-thresholds}.\\ The code and data used for our experiments can be found at \protect\url{https://github.com/bredereck/Strategic-Campaigns-in-Apportionment-Elections}.}}

\author{
Robert Bredereck$^1$ \and
Piotr Faliszewski$^2$ \and
Micha\l{} Furdyna$^2$ \and
Andrzej Kaczmarczyk$^2$ \and
Joanna Kaczmarek$^4$ \and
Martin Lackner$^3$ \and
Christian Laußmann$^4$ \and
Jörg Rothe$^{4,*}$ \and
Tessa Seeger$^4$
}

\date{
$^1$TU Clausthal, Clausthal-Zellerfeld, Germany\\
$^2$AGH University, Kraków, Poland\\
$^3$TU Wien, Vienna, Austria\\
$^4$Heinrich-Heine-Universität Düsseldorf, MNF, Düsseldorf, Germany\\[1ex]
$^*$Corresponding author: \texttt{rothe@hhu.de}
}

\maketitle

\begin{abstract}
	In parliamentary elections, parties compete for a limited, typically fixed number of seats.
	Most parliaments are assembled using apportionment methods that distribute the seats based on the parties' vote counts.
	Common apportionment methods include divisor sequence methods (like D'Hondt or Sainte-Lagu\"e), the largest-remainder method, and first-past-the-post.
	In many countries, an electoral threshold is implemented to prevent very small parties from entering the parliament.
	Further, several countries have apportionment systems that incorporate multiple districts.
	We study how computationally hard it is to change the election outcome (i.e., to increase or limit the influence of a distinguished party) by convincing a limited number of voters to change their vote.
        We refer to these bribery-style attacks as \emph{strategic campaigns} and study the corresponding problems in terms of their computational (both classical and parameterized) complexity.
    We also run extensive experiments on real-world election data and study the effectiveness of optimal campaigns, in particular as opposed to using heuristic bribing strategies and with respect to the influence of the threshold and the influence of the number of districts.
	For apportionment elections with threshold, finally, we propose---as an alternative to the standard top-choice mode---the second-chance mode where voters of parties below the threshold receive a second chance to vote for another party, and we establish computational complexity results also in this setting.
\end{abstract}

\noindent\textbf{Keywords:} apportionment method, bribery attack, computational complexity

\section{Introduction}

Apportionment elections (also called parliamentary or party-list elections)  \citep{bal-you:b:polsci:fair-representation,puk:b-2nd-edition:proportional-representation} are among the most common ways of electing parliaments, used, e.g., in Austria, New Zealand, Poland, Spain, Turkey, and
many other countries.  
In such elections, voters cast their ballots according to their preferences, i.e., depending on the ballot form, they either simply name the party they support most or cast a ranking of the parties. 
The vote counts are then tallied, and---based on this data---an \emph{apportionment method} is used to assign a specific number of seats to each party.\footnote{Note that there are also procedures that specify which particular party members enter the parliament, but we will disregard this issue.}
After the seats have been allocated, the parliament members
propose and discuss bills and then vote on them on behalf of their voters.
Therefore, it is important that the parliament represents the voters as proportionally as possible.

The easiest way, without a doubt, would be to give a party with $\nicefrac{1}{x}$ of the total number of votes exactly $\nicefrac{1}{x}$ of the total number of seats in parliament.
However, since this is usually not an integer (and seats cannot be shared nor split), apportionment methods have to be more sophisticated.
Various apportionment methods that each aim at representativeness have been proposed.
Most notably, the D'Hondt method (a.k.a.\ the Jefferson method), the Sainte-Lagu\"e method (a.k.a.\ the Webster method), and the largest-remainder method (a.k.a.\ the Hamilton method) have been proposed and widely used (see, e.g., the classical textbook by Balinski and Young~\cite{bal-you:b:polsci:fair-representation}).
We formally describe these methods in Section~\ref{sec:apportionment-procedures} and illustrate them by examples.

In many countries, the basic apportionment procedure is extended by taking a so-called \emph{legal electoral threshold} (or simply, a \emph{threshold}, for short) into account.
A threshold specifies the minimum number of votes (or fraction of all votes) a party must receive before any seat is allocated to it at all.
For instance, in Germany, Poland, and Scotland a party must receive at least $5\%$ of the total vote count to participate in the apportionment process.
Electoral thresholds are important for the government to quickly form and allow for effective decision-making by minimizing the effects of fragmentation of the parliament, i.e., by reducing the number of parties in it (see the work of Pellicer and Wegner~\cite{pel-weg:j:effects-of-legal-thresholds} for a study of how mechanical and psychological effects reduce fragmentation).
Undoubtedly, with fewer parties in the parliament compromises can be made more efficiently.
However, a disadvantage of the threshold is that voters supporting a party that did not make it above the threshold are not represented in the parliament at all because their votes are simply ignored.
For example, more than 19\% of the votes in the French election of the European Parliament in 2019 were lost due to a threshold of 5\%, i.e., \emph{``of five votes, just four become effective, and one is discarded as ineffective''} \cite[p.~30]{oel-puk:study:european-elections-of-may-2019}.

Further, some countries, such as Poland, are partitioned into districts and hold separate apportionment elections in each; the results of these elections are then summed up. Other countries, such as Austria, have a single, nation-wide district.\footnote{Although Austria has multiple electoral districts, for the final seat apportionment the country is treated as a single district.}

In this work, we study so-called \emph{strategic campaigns}.
In such scenarios, an external agent intends to change the election outcome in their favor by bribing voters within a certain budget to change their vote.
That is, an external agent seeks to change a limited number of votes in order to either ensure a party they support receives at least a desired number of seats in the parliament (constructive case), or to limit the influence of a party they despise by ensuring it receives no more than a desired number of seats (destructive case).
Strategic campaigns attract increasing attention in political elections, e.g., because today's possibilities to process enormous amounts of data from social networks, search engines, etc.\ make it possible to predict the voting behavior of individuals and to target them highly accurately with individualized (political) advertising.

Traditionally, such problems---where voter preferences are modified by an external agent at some cost and within a given budget---fall into the category of \emph{bribery} problems~\citep{fal-hem-hem:j:bribery,fal-rot:b:handbook-comsoc-control-and-bribery,bau-rot:b-2nd-edition:economics-and-computation-preference-aggregation-by-voting}, but they also have other, more benign and perhaps more useful, interpretations.
We mention the following two (already taken, e.g., by Elkind and Faliszewski~\cite{elkind2010shiftbribery}, Faliszewski et al.~\cite{fal-sko-tal:c:bribery-success}, Xia~\cite{xia:c:margin-of-victory}, Bredereck et al.~\cite{bre-fal-kac-nie-sko-tal:j:mw-robustness}, and Baumeister et al.~\cite{bau-fal-lan-rot:c:campaigns-for-lazy-voters}): 
First, consider a political campaign preceding a given election.
Based on poll data, leaders of each party may wish to know which voters they should try to convince to vote for their party, i.e.,
they may wish to find the most effective way of obtaining additional seats.	
Second, after an election, one may wish to check how close the parties actually are to obtaining additional seats.
If this number is small enough, it may be reasonable to request a post-election audit.	

While it is clear that strategic campaigns can work in practice and such attempts are already being used in the real world, concrete knowledge about how much the election outcome can be changed and how easy it is to conduct a suitable campaign is rather limited.
We contribute to this understanding by studying the computational complexity of such strategic campaigns, i.e., of deciding whether a successful campaign exists, and of finding optimal campaigns.

\paragraph{Related Work}
This paper merges and extends two conference contributions, one by Bredereck et al.~\cite{bre-fal-fur-kac-lac:c:strategic-campaign-management-in-apportionment-elections} and the other by Laußmann et al.~\cite{lau-rot-see:c:apportionment-with-thresholds}. 
Apportionment methods have been deeply studied in mathematics and political science (see, e.g., the classical texts of Balinski and Young~\cite{bal-you:j:quota-method-of-apportionment,bal-you:b:polsci:fair-representation} and Pukelsheim~\cite{puk:b-2nd-edition:proportional-representation}),
but much less so in computer science and computational social choice---a field at the interface of theoretical computer science and artificial intelligence on the one hand and economics and social choice theory on the other hand that studies computational properties of elections and has applications, for example, in multiagent systems.  

Within computational social choice, the research line on \emph{bribery}, a.k.a.\ \emph{strategic campaigning}, was initiated by Faliszewski et al.~\cite{fal-hem-hem:j:bribery,fal-hem-hem-rot:j:llull-copeland-full-techreport}.
Bribery is linked to both \emph{manipulation} of elections by strategic voters and to \emph{electoral control}.
Manipulation was introduced and studied by Bartholdi et al.~\cite{bar-tov-tri:j:manipulating} for unweighted elections and a single manipulator in the constructive case (where the goal is to make a distinguished candidate win), and by Conitzer et al.~\cite{con-san-lan:j:when-hard-to-manipulate} for weighted, coalitional manipulation in both the constructive and the destructive case (where a coalition of strategic voters aims at preventing the victory of a despised candidate in a weighted election).
For more background on manipulation, we refer to the book chapters by Conitzer and Walsh~\cite{con-wal:b:handbook-comsoc-manipulation} and Baumeister and Rothe~\cite{bau-rot:b-2nd-edition:economics-and-computation-preference-aggregation-by-voting}.
Electoral control was introduced and studied by Bartholdi et al.~\cite{bar-tov-tri:j:control} in the constructive case and by Hemaspaandra et al.~\cite{hem-hem-rot:j:destructive-control} in the destructive case.
In this scenario, an election chair changes the structure of an election by actions such as adding, deleting, or partitioning either voters or candidates so as to reach the constructive or destructive control goal.
Bribery and control have been studied for a wide range of voting rules, as surveyed by Faliszewski and Rothe~\cite{fal-rot:b:handbook-comsoc-control-and-bribery} and Baumeister and Rothe~\cite{bau-rot:b-2nd-edition:economics-and-computation-preference-aggregation-by-voting} in their book chapters.

While both bribery and control have been mainly investigated for single-winner and multiwinner voting rules, there is not so much literature yet regarding strategic campaigns in \emph{apportionment elections}, even though these are arguably the most important elections in many countries.
Nonetheless, apportionment can be viewed as a special case of approval-based multiwinner voting~\cite{bri-las-sko:j:approval-as-apportionment,lac-sko:c:approval-based-multiwinner-rules-and-strategic-voting}, especially so when using no electoral threshold. 
As a consequence, for single-district settings with few parties only, one could use the $\fpt$ bribery algorithms of Faliszewski et al.~\cite{fal-sko-tal:c:bribery-success}.
Yet, our algorithms are faster and more direct (for other ways of manipulating approval-based multiwinner rules, see, e.g., the work of Yang~\cite{yan:c:multiwinner-control}, Peters~\cite{peters2018proportionality}, and Lackner and Skowron~\cite{lac-sko:c:approval-based-multiwinner-rules-and-strategic-voting}).

The work of G\"uney~\cite{gue:c:mixed-integer-linear-program-for-election-campaign-optimization} is most closely related to ours: The author studies a campaign management problem similar to ours (specifically for the D'Hondt rule), but instead of providing polynomial-time algorithms, he models it as a mixed integer linear program.
The paper shows that Turkish parliamentary elections (which are based on a multi-district system) are susceptible to strategic campaign management.
Finally, Ostapenko et al.~\cite{ostapenko2012mathematical} developed a loosely related game-theoretic model of campaign management.

\paragraph{Our Contribution}
We propose efficient algorithms for (variations of) the following problem both in the single- and the multi-district setting: Given an apportionment election consisting of a set of parties and a list of votes, a threshold, a number of seats, a distinguished party, and a desired number of seats, what is the smallest number of votes (i.e., cost) that need to be moved
to ensure that the distinguished party obtains at least (in the constructive case)---or at most (in the destructive case)---a desired number of seats.
To formulate this as a decision problem, we are also given a budget that must not be exceeded by the cost spent to reach this goal.
We also consider the problem where we ask for the lowest cost of ensuring that the distinguished party has more seats than any other one, i.e., that this party is a \emph{winner} of the election (in the constructive case), or that this party is not the strongest party (in the destructive case).
Indeed, sometimes winning an election this way may be better for a party than just obtaining more seats when the strongest opponent has even more.
We focus on the family of divisor apportionment methods (which includes
the well-known D'Hondt and \sainte{} methods), and on the largest-remainder method.
As a border case, we also study the first-past-the-post method.

We find that for the single-district setting, all our problems can be solved in polynomial time.
Depending on the apportionment method, these algorithms range from straightforward iterative procedures to involved dynamic programming.
For the multi-district case, the complexity spectrum is more interesting: While the single-district algorithms can be used to obtain at least a given number of seats for the distinguished party in the constructive case, the problem of getting the largest number of seats is $\np$-hard for all methods. 
In the destructive case, we show similar results with the exception that under first-past-the-post, the problem of ensuring that the distinguished party does not receive the largest number of seats is computable in polynomial time.
We further strengthen these results by proving W[1]-hardness for parameterization by the number of districts.

We complement our study by performing a series of experiments on real-world
election data. 
First, we ask how useful it is to use our optimal algorithms, as opposed to using simple heuristic bribing strategies.
We find that in most cases, moving votes optimally is (significantly) more effective than simple bribing strategies: Often, they require between $1.2$ and $3$ times as many votes as the optimal strategy, and in some cases even up to over $6.5$ times as many.
In our second experiment, we study the effect of a threshold on the effectiveness of the strategic campaigns. We find that they can exploit the electoral threshold and significantly benefit from it: 
There are some ``spikes'' at thresholds with a party being directly above the threshold where with bribing only $0.25\%$ of the votes one can make the distinguished party gain sometimes $4.4\%$ or even $12.5\%$ of all seats on top in the constructive case. In the destructive case, we observe similar results at thresholds where a party is directly below it.
Third, we ask how the number of districts affects the easiness (in terms of cost) of changing the election result. Our results show that for both the constructive and destructive case, the fewer districts there are, the more vote moves are required to obtain (respectively, take away) a certain number of seats, e.g., up to more than $3.5$ times as many votes need to be changed in the constructive case and up to almost $6$ times as many in the destructive case when reducing the number of districts to a quarter.

As our third contribution, we introduce and study a simple extension of the usual apportionment procedure with electoral threshold: 
Voters who supported a party below the threshold can reuse their vote for one of the remaining parties above the threshold.
These voters thus get a second chance. 
We provide complexity results showing that this modification renders the corresponding problems intractable (i.e., $\np$-hard) both in the single- and the multi-district setting.
\section{Preliminaries}
\label{sec:preliminaries}
For an integer $t$, we write $[t]$ to denote the set $\{1, \dots, t\}$, and by $[t]_0$ we mean the set $\{0\} \cup [t] = \{0, 1, \dots, t\}$.
We use the Iverson bracket notation, i.e., for a logical expression~$P$, we write $[P]$ to mean $1$ if $P$ is true, and to mean $0$ if $P$ is false.

\subsection{Useful Computational Problems}
\label{sec:preliminaries:useful-problems}

To show various computational hardness results in our subsequent analysis of the
computational complexity of problems related to apportionment procedures, we will use three fundamental decision problems to reduce from.
The first one is \textsc{Cubic\dash{}Vertex\dash{}Cover}, which is a simplified variant of \textsc{Vertex\dash{}Cover} and was shown to be \np-complete by Garey~et~al.~\cite{gar-joh-sto:j:simplified-np-completeness}.

\EP{Cubic-Vertex-Cover}
   {A graph $G=(V,E)$, where each vertex has exactly three neighbors, and a
    positive integer~$k$.}
   {Is there a vertex cover of size at most~$k$ in~$G$, i.e., is it possible to choose at most $k$ vertices in $G$ such that each edge of $G$ is incident to at least one of them?}

\newcommand{\universe}{\ensuremath{U}}
\newcommand{\universeSize}{\ensuremath{n}}
\newcommand{\element}{\ensuremath{u}}
\newcommand{\binSize}{\ensuremath{b}}
\newcommand{\binsCount}{\ensuremath{k}}
In several proofs, we will use
a variant of \textsc{Unary-Bin-Packing}, which itself is the variant of
the classical \textsc{Bin-Packing} problem in which all numbers are encoded in
unary. Notably, \textsc{Unary-Bin-Packing} is not only
\np-hard~\cite{gar-joh:b:int} but also~\wonehard{} when
parameterized by the number of
bins~\cite{jan-kra-mar-sch:j:bin-packing-revisited}.
Specifically, we define the variant \unarybinpackingstar{} of this
problem, where the star in the problem name indicates that our definition
of the problem slightly differs from the classical one.

	\EP{\unarybinpackingstar}
  {A bin size~\binSize{}, a
    collection \namedorderedsetof{\universe}{\element}{\universeSize}
    of $\universeSize \geq 3$
    even positive integers encoded in unary
    such that $u_i \leq b$ for all~$i \in [\universeSize]$, and the
    number $\binsCount{} \geq 3$ of bins such that $\sum_{i=1}^{\universeSize}
  \element_i=\binsCount{}\binSize{}$.}
	{Is it possible to fit all elements from~\universe{} in these~\binsCount{}~bins not exceeding the size of any bin?} 

        Our definition differs from the classical one by assuming the
        evenness of the 
elements of~$\universe$, the lower bound of~$3$ on the numbers~$\universeSize$
and~$\binsCount$, and that all elements of~$\universe$ exactly fit in all bins.
We make these (standard) assumptions without loss of generality, in order to simplify our proofs and their exposition, and it is easy to see that the slightly simplified problem variant remains \nphard\ and \wonehard.

Finally, we define the well-known \textsc{Hitting-Set} problem, which was shown to be \nphard\ by Karp~\cite{kar:b:reducibilities}. 

\EP{Hitting-Set}
{A set $U = \{u_1, \dots, u_p\}$, a collection $S = \{S_1, \dots, S_q\}$ of nonempty subsets of $U$, and an integer~$\budget$, $1 \leq \budget \leq \min\{p,q\}$.}
{Is there a hitting set $U' \subseteq U$, $|U'| \leq \budget$, i.e., a set $U'$ of size at most $\budget$ such that $U' \cap S_i \neq \emptyset$ for each $S_i \in S$?}

\subsection{Apportionment Procedures}
\label{sec:apportionment-procedures}

An election $E = (\parties, \votes)$ consists of a set of $m$ \emph{parties} $\parties = \{P_1, \ldots, P_m\}$, and a list of $n$ \emph{votes} $\votes$ over the parties in~$\parties$.
For a party $X \in \parties$, we denote the set $\parties \setminus \{X\}$ by~$\parties_{-X}$.
Each vote in $\votes$ is a strict ranking of the parties from most to least preferred, and we write $A \succ_v B$ if voter $v$ prefers party $A$ to $B$ (where we omit the subscript $v$ when it is clear from the context).
We sometimes refer to the most preferred party of a voter~$v$ as $v$'s \emph{top choice}.
From an election we can derive an \emph{apportionment problem} $I = (\supportalloc, \threshold, \seats)$ consisting of a \emph{support allocation} $\supportalloc : \parties \to \mathbb{N}$, a \emph{threshold} $\threshold \in \mathbb{N} = \{0, 1, 2, \ldots\}$, and the \emph{seat count} $\seats \in \mathbb{N}$.
The support allocation maps a party to the numerical value of support it receives in the election. 
Let
$(p_1, ..., p_m)$ be the vote distribution, i.e., $p_i$ gives the number of votes for which $P_i \in \parties$ is the top choice. 
Unless specified otherwise, we assume that the support of a party $P_i$ is
\[
\supportalloc(P_i) = \begin{cases}
	p_i \ \ \ \text{if }p_i \geq \threshold \\
	0 \ \ \ \ \text{otherwise,}
\end{cases}
\]
i.e., the number of votes for which
$P_i$ is the top choice if
$P_i$ receives at least $\threshold$ top choices, and~$0$ otherwise.
That is, votes for parties that receive
fewer than $\threshold$ top choices are ignored, and the voters have no opportunity to change their vote.\footnote{Note that also other support functions are possible, e.g., support functions using the Borda count.
  However, one would never do so.
  The problem with Borda is that if everyone has preference order $P_1 \succ P_2 \succ \cdots \succ P_m$ (i.e., everyone is absolutely sure that $P_1$ is the best party), this party would still get only very few seats.
  Indeed, its number of seats is proportional to $m-1$, while the total number of seats that the other parties get is proportional to $O(m^2)$.
  One could say that it is unfair to compare one party against all the other ones, but we are talking about the completely unanimous situation here. So it is quite paradoxical that the other parties can form a successful coalition against the clear, unanimous winner.}
This ensures that parties that do not reach the threshold are not assigned seats by our apportionment methods.
We refer to this setting as the \emph{top-choice mode}.
An alternative mode (which we call the \emph{second-chance mode}) will be proposed in Section~\ref{sec:second-chance}.

\begin{example}[support allocation in the top-choice mode]\label{ex:supportallocation}
	Consider the election \[E=(\{P_1, \dots, P_5\}, \{v_1, \dots, v_{2052}\})\] with the votes
	\begin{align*}
		v_1, \dots, v_{604} :&\quad P_1 \succ P_2 \succ P_3 \succ P_5 \succ P_4\\
		v_{605}, \dots, v_{819} :&\quad P_2 \succ P_1 \succ P_5 \succ P_4 \succ P_3\\
		v_{820}, \dots, v_{1174} :&\quad P_3 \succ P_2 \succ P_1 \succ P_4 \succ P_5\\
		v_{1175}, \dots, v_{1474}  :&\quad P_1 \succ P_5 \succ P_3 \succ P_2 \succ P_4\\
		v_{1475}, \dots, v_{1652} :&\quad P_4 \succ P_2 \succ P_3 \succ P_1 \succ P_5\\
		v_{1653}, \dots, v_{1800} :&\quad P_2 \succ P_4 \succ P_3 \succ P_1 \succ P_5\\
		v_{1801}, \dots, v_{2000} :&\quad P_1 \succ P_3 \succ P_4 \succ P_5 \succ P_2\\
		v_{2001}, \dots, v_{2052} :&\quad P_5 \succ P_2 \succ P_3 \succ P_4 \succ P_1
	\end{align*}
	and the apportionment problem $I = (\supportalloc, 100, 6)$, i.e., we have a threshold of $100$ and want to distribute $6$ seats.
	In total, $P_1$ is ranked in the first position in $1104$ votes, $P_2$ in $363$, $P_3$ in $355$, $P_4$ in $178$, and $P_5$ has
        $52$ top choices.
        As $P_5$ does not reach the threshold, its support is set to zero.
        All other parties reach the threshold and thus their support is simply the number of their top choices.
        Overall, in the top-choice mode, this results in the following support allocation for these five parties:
	\begin{align*}
		\supportalloc(P_1) &= 1104,\\
		\supportalloc(P_2) &= 363, \\
		\supportalloc(P_3) &= 355,\\
		\supportalloc(P_4) &= 178,\\
		\supportalloc(P_5) &= 0.
	\end{align*}
\end{example}

We make the very natural assumption that we have more votes than we have parties and more votes than we have seats.
Note that in reality an electoral threshold is usually given as a relative threshold in percent (e.g., a $5\%$ threshold).
However, we can easily convert such a relative threshold into an absolute threshold, as required by our definition.

Given an apportionment problem, we can determine the \emph{seat allocation} by employing an \emph{apportionment method}.
The seat allocation $\seatalloc: \parties \to \{0, \dots, \seats\}$ must satisfy $\sum_{A \in \parties} \seatalloc(A) = \seats$, i.e., exactly $\seats$ seats are allocated to parties.
Note that
if no party reaches the threshold, the seat allocation is undefined, and in case of ties, an apportionment method may output more than one apportionment (in real-world applications and in our computational problems, a specific tie-breaking rule will be assumed, so our apportionment rules in fact are resolute).

In this paper, we mostly consider \emph{divisor sequence methods} (with a focus on the \emph{D'Hondt} and \emph{\sainte{}} methods), the \emph{largest-remainder} method, and the \emph{first-past-the-post} method. 
\paragraph{Divisor sequence methods}
A divisor sequence method is defined by a (finite) sequence $d = (d_1, d_2, \dots, d_{\seats}) \in \mathbb{R}^{\seats}$ with $d_i < d_j$ for all $i,j \in \{1,\dots,\seats\}$ with $i<j$, and $d_1 \geq 1$.
We require that each value
$d_i$ is computable in polynomial time with respect to $i$ and is polynomially bounded in $i$ (indeed, the standard divisor sequence methods use very simple divisor sequences).
For each party~$P \in \parties$, we compute its \emph{fraction list}
$\left[\frac{\supportalloc(P)}{d_1},  \frac{\supportalloc(P)}{d_2}, \dots, \frac{\supportalloc(P)}{d_{\seats}}\right]$.
Then we go through the fraction lists of all parties to find the
$\seats$ highest values. In case of a tie, there exists more than one correct seat assignment.
Each party receives one seat for each of its list values among the $\seats$ highest values.

The \emph{D'Hondt method} (also known as the \emph{Jefferson method}) is defined by the divisor sequence $1,2,3,\dots$; the \emph{\sainte{} method} (also known as the \emph{Webster method}) is defined by the divisor sequence $1,3,5,\dots$. There are also further divisor sequence methods, such as Huntington--Hill, Adams, and Dean~\citep{bal-you:b:polsci:fair-representation}, but we do not further consider them in this paper.

\begin{example}[D'Hondt]\label{ex:dhondt}
	 Consider the election and apportionment problem of Example~\ref{ex:supportallocation}. %
	Then, by dividing the support allocation of each party by the divisor sequence $1,2,3,\dots$, the resulting D'Hondt fraction lists are
	\[
	\begin{array}{r@{\hspace*{2.1mm}}c@{\hspace*{2.1mm}}l@{\hspace*{2.1mm}}r@{\hspace*{2.1mm}}r@{\hspace*{2.1mm}}r@{\hspace*{2.1mm}}r@{\hspace*{2.1mm}}r}
		\text{$P_1$} & :  & [\mathbf{1104}, & \mathbf{552},\phantom{.5} & \mathbf{368},\phantom{.3} & \mathbf{276},\phantom{.5} & 220.8, & 184\phantom{.5}],\\
		\text{$P_2$} & :  & [\phantom{\mathbf{1}}\mathbf{363}, & 181.5, & 121,\phantom{.3} & 90.8, & 72.6, & 60.5],\\
		\text{$P_3$} & :  & [\phantom{\mathbf{1}}\mathbf{355}, & 177.5, & 118.3, & 88.8, & 71,\phantom{.8} & 59.2],\\
		\text{$P_4$} & :  & [\phantom{\mathbf{1}}178, & \phantom{1} 89,\phantom{.5} & 59.3, & 44.5, & 35.6, & 29.7],\\
		\text{$P_5$} & :  & [\phantom{110}0, & \phantom{17} 0,\phantom{.5} & 0,\phantom{.3} & 0,\phantom{.5} & 0,\phantom{.8} & 0\phantom{.5}].
	\end{array}
	\]
	The $\seats = 6$ highest values among the fraction lists of all parties are highlighted in boldface.
	Party~$P_1$ thus receives four seats, parties~$P_2$ and~$P_3$ receive one seat each, and parties~$P_4$ and $P_5$ receive no seats at all.
\end{example}

Note that it is possible that a party does not receive any seats in a parliament when an allocation procedure such as D'Hondt or Sainte-Lagu\"e is applied, even if it receives enough votes to exceed the threshold, i.e., we only have the implication stating that if a party does not get enough votes to pass the threshold, it will not get any seats.
However, the reverse implication does not apply.
This can also be seen in the previous example, where party~$P_4$ has support greater than zero, i.e., exceeds the threshold, but it does not receive any seats when D'Hondt is applied.

\paragraph{Largest-remainder method (LRM)} 
The \emph{largest-remainder method} (also known as the \emph{Hamilton method}) is based directly on the fair share $\fairshare(P_i) = \seats\cdot \frac{\supportalloc(P_i)}{n}$, where $n = \sum_{P_i \in \parties} \supportalloc(P_i)$ (i.e., in the top-choice mode it is simply the number of votes for parties above the threshold).
Intuitively, the fair share is the number of seats the party would receive in a perfectly proportional parliament.
Unfortunately, this number is usually not an integer, so we now have to round the fair shares of the parties.
Therefore, each party $P_i$ is assigned at least
$\quota(\supportalloc(P_i),n,\seats) = \lfloor \seats \cdot \frac{\supportalloc(P_i)}{n}\rfloor$ seats, i.e., the integer part of $\fairshare(P_i)$ (we refer to them as \emph{lower quota seats}).
The remaining seats are distributed to the parties with the largest remainder values
$\remainder(\supportalloc(P_i),n,\seats) =
\seats \cdot \frac{\supportalloc(P_i)}{n} - \lfloor\seats \cdot \frac{\supportalloc(P_i)}{n}\rfloor$ (if there are $r$ seats left to be assigned, then $r$ parties with the highest remainder values get one seat each; we refer to them as \emph{remainder seats}).\footnote{It is possible (though not very likely) that the remainder value of enough parties is exactly zero, so that a party below the threshold could receive one of the remainder seats due to tie-breaking.
  We therefore exclude all parties below the threshold from the distribution of remainder seats.}

\begin{example}[LRM]\label{ex:lrm}
	Consider the election and apportionment problem of Example~\ref{ex:supportallocation}. 
	Recall that the support allocation is $\supportalloc(P_1) = 1104$,  $\supportalloc(P_2) = 363$, $\supportalloc(P_3) = 355$, $\supportalloc(P_4) = 178$, and $\supportalloc(P_5) = 0$.
	Recall further that $n=2000$.
	First, we calculate the parties' fair share that is (rounded to two decimal places): $6 \cdot \frac{1104}{2000} = 3.31$ seats for~$P_1$, $6 \cdot \frac{363}{2000} = 1.09$ seats for~$P_2$, $6 \cdot \frac{355}{2000} = 1.07$ seats for~$P_3$, $6 \cdot \frac{178}{2000} = 0.53$ seats for~$P_4$, and $6 \cdot \frac{0}{2000} = 0$ seats for~$P_5$.
	By lower quota, $P_1$ receives three seats while $P_2$ and $P_3$ receive one seat each.
	One seat is left and since party $P_4$ has the largest remainder of $0.53$, this seat is allocated to~$P_4$.
\end{example}

\paragraph{First-past-the-post (FPTP)}
Finally, in a given district, \emph{first-past-the-post} simply gives all seats to the strongest party, i.e., the party with the highest support. 
The FPTP method is not an apportionment method (as it is not proportional). However, we consider it as a corner case and due to its use in US presidential elections, where each state holds its own election and the winner receives all the electoral votes (seats, in our language), and in the UK for electing Members of Parliament to the House of Commons. 
Note that, since we assume there to be at least one party to reach the threshold, there is no difference in the outcome under FPTP whether or not there exists a threshold $\threshold > 0$.

\begin{example}[FPTP]\label{ex:fptp}
	Again, consider the election and apportionment problem of Example~\ref{ex:supportallocation}. Since $P_1$ has the highest support, i.e., is the strongest party, all $\seats= 6$ seats are assigned to $P_1$.
\end{example}

\subsection{Strategic Campaigns and Apportionment Bribery Problems}

Now we define strategic campaigns, modeled as bribery scenarios.
In the following, we assume that apportionment methods are resolute, i.e., they employ a tie-breaking scheme and return a single apportionment.  Specifically, we consider lexicographic tie-breaking (unless specified otherwise), where ties are broken in favor of the party with the lowest index (this simplifies the formulation of some of our algorithms; our
results can easily be adapted to other tie-breaking orders).

Let $\appmethod$ be an apportionment method.
In our (constructive) bribery problem, we are given an election~$E$, a threshold~$\threshold$,\footnote{If there is no threshold for a given apportionment election, $\threshold$ can simply be set to zero.}
a seat count~$\seats$, a budget~$\budget$, and a desired number of seats~$\desiredseats$ for a distinguished party~$P^*$.
By bribing at most $\budget$ voters to change their vote in our favor (i.e., we can alter their votes as we like), we seek to ensure that party
$P^*$ receives at least $\desiredseats$ seats.

\EP{$\appmethod$-Apportionment-Bribery ($\appmethod$-AB)}
{An election $E=(\parties, \votes)$, a threshold $\threshold$, a total number of seats~$\seats$, a distinguished party $P^* \in \parties$, a desired number of seats~$\desiredseats$, $1 \leq \desiredseats \leq \seats$, and a budget~$\budget$, $0 \leq \budget \leq |\votes|$.}
{Is there a successful campaign, i.e., is it possible to make $P^*$ receive at least $\desiredseats$ seats using apportionment method $\appmethod$ by changing at most $\budget$ votes in~$\votes$?}

The destructive variant of the problem
is defined analogously:

\EP{$\appmethod$-Destructive-Apportionment-Bribery ($\appmethod$-DAB)}
   {An election $E=(\parties, \votes)$, a threshold $\threshold$, a total number of seats~$\seats$, a distinguished party $P^* \in \parties$, a desired number of seats~$\desiredseats$,
     $0 \leq \desiredseats \leq \seats$, and a budget~$\budget$, $0 \leq \budget \leq |\votes|$.}
{Is there a successful destructive campaign, i.e., is it possible to make $P^*$ receive at most $\desiredseats$ seats using apportionment method $\appmethod$ by changing at most $\budget$ votes in~$\votes$?}

Note that we require $0 \leq \desiredseats \leq \seats$ in the destructive case, so the distinguished party may also be not allowed to receive any seat at all.
Moreover, since
$\desiredseats \leq \seats \leq |\votes|$ and $\budget \leq |\votes|$, for the complexity analysis it does not matter whether $\seats$, $\desiredseats$, and $\budget$ are encoded in binary or unary. 
However, we assume that the number of voters is encoded in unary because we primarily focus on parliamentary elections where each vote corresponds to a single person casting it.
Thus the number of votes is of 
reasonable size (typically, no more than around~$10^9$).
Additionally, each vote is physically stored as a separate ballot.

The just-defined problems can be slightly modified, to ask for a vote distribution (within the specified bribing budget) so that the distinguished party receives more seats than each of the other parties, i.e., wins the election. 
Note that in contrast to the previously defined problems, the input for these problems therefore does not contain a desired number of seats:

\EP{$\appmethod$-Apportionment-Winner-Bribery ($\appmethod$-AWB)}
{An election $E=(\parties, \votes)$, a threshold $\threshold$, a total number of seats~$\seats$, a distinguished party $P^* \in \parties$, and a budget~$\budget$, $0 \leq \budget \leq |\votes|$.}
{Is there a successful winner campaign, i.e., is it possible to ensure $\seatalloc(P^*)>\seatalloc(P_i)$ for all $P_i \in \parties_{-P^*}$ using apportionment method $\appmethod$ by changing at most $\budget$ votes in~$\votes$?}

The destructive variant
of this problem
is defined analogously by asking if it is possible to ensure that $P^*$ does \emph{not} win the election:

\EP{$\appmethod$-Destructive-Apportionment-Winner-Bribery ($\appmethod$-DAWB)}
{An election $E=(\parties, \votes)$, a threshold $\threshold$, a total number of seats~$\seats$, a distinguished party $P^* \in \parties$, and a budget~$\budget$, $0 \leq \budget \leq |\votes|$.}
{Is there a
  successful campaign to preclude the victory of~$P^*$, i.e., is it possible to ensure that there exists at least one party $P_i$ with $\seatalloc(P^*)<\seatalloc(P_i)$ using apportionment method $\appmethod$ by changing at most $\budget$ votes in~$\votes$?}

Some countries, such as Poland, are partitioned into districts and hold separate apportionment elections in each.
The results of these elections are then summed up.
We model
the problem variant for multiple districts as follows.

\EP{$\appmethod$-Multidistrict-Apportionment-Bribery ($\appmethod$-MAB)}
   {A number of districts $q$, a list of elections $[E_1, E_2, \dots, E_{q}]$ with $E_i=(\parties_i, \votes_i)$ and the corresponding thresholds~$\threshold_i$ and seat counts $\seats_i$, $1 \leq i \leq q$, a distinguished party $P^* \in \parties=\bigcup_{i=1}^{q} \parties_i$, a desired number of seats~$\desiredseats$, $1 \leq \desiredseats \leq \seats =\sum_{i=1}^{q}\seats_i$, and a budget~$\budget$,
     $0 \leq \budget \leq \sum_{i=1}^{q} |\votes_i|$.}
{Is there a successful campaign, i.e., is it possible to make $P^*$ receive at least $\desiredseats$ seats using apportionment method $\appmethod$ in each district individually by changing in total at most $\budget$ votes in $\votes_1, \dots, \votes_{q}$, where the seat count for each party is the sum of seat counts in all districts?}

Again, the destructive variant
is defined analogously.

\EP{$\appmethod$-Destructive-Multidistrict-Apportionment-Bribery ($\appmethod$-DMAB)}
   {A number of districts $q$, a list of elections $[E_1, E_2, \dots, E_{q}]$ with $E_i=(\parties_i, \votes_i)$ and the corresponding thresholds~$\threshold_i$ and seat counts $\seats_i$, $1 \leq i \leq q$, a distinguished party $P^* \in \parties=\bigcup_{i=1}^{q} \parties_i$, a desired number of seats~$\desiredseats$,
     $0 \leq \desiredseats \leq \seats =\sum_{i=1}^{q}\seats_i$, and a budget~$\budget$,
     $0 \leq \budget \leq \sum_{i=1}^{q} |\votes_i|$.}
{Is there a successful
  destructive campaign, i.e., is it possible to make $P^*$ receive at most $\desiredseats$ seats using apportionment method $\appmethod$ in each district individually by changing in total at most $\budget$ votes in $\votes_1, \dots, \votes_{q}$, where the seat count for each party is the sum of seat counts in all districts?}

As before, we also consider the
corresponding winner problem, where we ask for $P^*$ to have more seats than each of the other parties, and its destructive variant.
Again, note that the input for these problems in the winner variant does not contain a desired number of seats.

\EP{$\appmethod$-Multidistrict-Apportionment-Winner-Bribery ($\appmethod$-MAWB)}
   {A number of districts $q$, a list of elections $[E_1, E_2, \dots, E_{q}]$ with $E_i=(\parties_i, \votes_i)$ and the corresponding thresholds~$\threshold_i$ and seat counts $\seats_i$, $1 \leq i \leq q$, a distinguished party $P^* \in \parties=\bigcup_{i=1}^{q} \parties_i$,
     and a budget~$\budget$,
     $0 \leq \budget \leq \sum_{i=1}^{q} |\votes_i|$.}
{Is there a successful
  winner campaign, i.e., is it possible to ensure $\seatalloc(P^*)>\seatalloc(P_i)$ for all $P_i \in \parties_{-P^*}$ using apportionment method $\appmethod$ in each district individually by changing in total at most $\budget$ votes in $\votes_1, \dots, \votes_{q}$, where the seat count for each party is the sum of seat counts in all districts?}

\EP{$\appmethod$-Destructive-Multidistrict-Apportionment-Winner-Bribery ($\appmethod$-DMAWB)}
   {A number of districts $q$, a list of elections $[E_1, E_2, \dots, E_{q}]$ with $E_i=(\parties_i, \votes_i)$ and the corresponding thresholds~$\threshold_i$ and seat counts $\seats_i$, $1 \leq i \leq q$, a distinguished party $P^* \in \parties=\bigcup_{i=1}^{q} \parties_i$,
     and a budget~$\budget$,
     $0 \leq \budget \leq \sum_{i=1}^{q} |\votes_i|$.}
{Is there a 
  successful campaign to preclude the victory of~$P^*$, i.e., is it possible to ensure that there exists at least one party $P_i$ with $\seatalloc(P^*)<\seatalloc(P_i)$
  using apportionment method $\appmethod$ in each district individually by changing in total at most $\budget$ votes in $\votes_1, \dots, \votes_{q}$, where the seat count for each party is the sum of seat counts in all districts?}

\section{Classical Top-Choice Mode}\label{sec:top-choice} 

We now present our results for the classical top-choice mode, starting with the single-district case.

\subsection{Single-District Case}\label{subsec:top-choice-single-district}
	
	We start the discussion by considering the single-district cases,
	i.e., by considering the
        \advantageGaining{$\calR$} and
	\pluralityAG{$\calR$} problems. We find polynomial time algorithms 
	for all our apportionment methods, for these two problems.
	As a warm-up, let us consider the FPTP method, for which a simple
	greedy algorithm suffices for both problems.
	
	\begin{proposition}\label{prop:fptp}
		\advantageGaining{FPTP} and \pluralityAG{FPTP} are both in~$\p$.
	\end{proposition}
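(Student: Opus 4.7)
The first observation is structural: under FPTP the unique strongest party (with lexicographic tie-breaking) receives all $\seats$ seats, and every other party receives zero. Consequently $P^*$ can only end up with $0$ or $\seats$ seats, and since the constructive input requires $\desiredseats \geq 1$, \advantageGaining{FPTP} reduces to the question of whether $P^*$ can be made the unique FPTP winner---which is precisely \pluralityAG{FPTP}. It therefore suffices to exhibit a polynomial-time algorithm for the winner version.

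For \pluralityAG{FPTP} I would first establish a simple dominance argument: an optimal bribery can be assumed to only ever redirect votes \emph{into} $P^*$, since any bribe moving a voter from some party $P_a$ to a third party $P_b \neq P^*$ merely strengthens a competitor, whereas rerouting that voter to $P^*$ weakly dominates. Under this restriction, every competitor $P_j$ ends up with at most $\sigma(P_j)$ top votes, $P^*$ ends up with $\sigma(P^*)$ plus the number of bribes, and $P^*$ wins if and only if its final support $t$ satisfies $t \geq \threshold$ and $\sigma'(P_j) \leq U_j(t)$ for every $j \neq i^*$, where $i^*$ denotes the index of $P^*$ and $U_j(t) = t - [j < i^*]$ simultaneously encodes the strict-inequality requirement against lower-indexed competitors and the weak-inequality requirement against higher-indexed ones.

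The algorithm then iterates $t$ over the integers from $\max(\threshold, \sigma(P^*))$ up to $|\votes|$. For each $t$ it computes the mandatory reductions $M(t) = \sum_{j \neq i^*} \max(0, \sigma(P_j) - U_j(t))$ and evaluates the cost $\max(t - \sigma(P^*),\, M(t))$: any surplus arising from over-quota competitors can be routed to $P^*$, while any shortfall can be filled by further bribes from below-quota competitors. The minimum of this cost over all valid $t$ is compared against the budget $\budget$. With $O(|\votes|)$ candidate values of $t$ and $O(|\parties|)$ work each, the total running time is polynomial in the input size.

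The only mildly subtle step I anticipate is verifying that the expression $\max(t - \sigma(P^*), M(t))$ is simultaneously achievable and a valid lower bound: the dominance argument rules out more exotic schemes that cycle votes through third parties, after which the two requirements---raising $P^*$ to $t$ votes and pushing each over-quota competitor down to $U_j(t)$---are independent lower bounds whose maximum is attained by the explicit construction. The threshold and the lexicographic tie-breaking fold cleanly into $U_j(t)$, so no further case analysis is required.
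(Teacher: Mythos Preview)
Your argument is correct, but it takes a more analytical route than the paper's proof. The paper simply runs a greedy loop for $\budget$ iterations: in each iteration it identifies the competitor with the currently highest vote count and moves a single vote from that party to $P^*$; it accepts if and when $P^*$ becomes the strongest party. Both proofs share the opening observation that the two problems coincide under FPTP and (implicitly, in the greedy case) the dominance principle that bribed votes should go to~$P^*$. Where they diverge is in the bookkeeping: you parametrize by a target support level $t$ and derive the closed-form cost $\max(t-\sigma(P^*),M(t))$, whereas the paper avoids any formula by letting the greedy process discover the right distribution one vote at a time. The paper's version is shorter and needs almost no justification beyond ``correctness is straightforward''; your version, by contrast, yields an explicit expression for the optimal cost and makes the role of tie-breaking (via $U_j(t)=t-[j<i^*]$) and of the threshold (via the lower bound on~$t$) fully transparent, which can be convenient if one later wants to plug this into a larger argument or handle weighted costs. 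One small notational caution: throughout your computation you should be using the raw top-choice counts $p_j$ rather than the thresholded support $\supportalloc(P_j)$, since you may need to pull votes from parties currently below the threshold.
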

		\begin{proof}
		  Under FPTP the party with the highest vote count gets
                  all $\seats$ seats,
			so the same algorithm works for both our problems.
                        Consider the support allocation of a given
                        apportionment problem and 
			let $\budget$ be the bribing budget. Our algorithm executes $\budget$ iterations, where in each iteration we proceed as follows: 
			\begin{enumerate}
				\item We identify the party, say~$P_i$,
				with the currently highest vote count (breaking ties in favor of the
				party with the lowest index; if this is distinguished party~$P^*$, then we terminate and
				accept), and 
				\item we move one vote from party $P_i$ to $P^*$. 
			\end{enumerate}
			If in the end party $P^*$ has the highest number of votes, then we accept.
                        Otherwise, we reject. 
			The correctness and polynomial running time are straightforward to verify.
		\end{proof}

	        A similar approach also works for the destructive variants, where instead of moving votes from the strongest party to $P^*$, we now move them from $P^*$ to the
                currently strongest party in each iteration.
		
	\begin{proposition}\label{prop:fptpdestructive}
		FPTP-\textsc{DAB} and FPTP-\textsc{DAWB} are both in~$\p$.
	\end{proposition}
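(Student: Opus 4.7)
The plan is to adapt the greedy strategy from Proposition~\ref{prop:fptp} in a symmetric fashion. Under FPTP the distinguished party $P^*$ either receives all $\seats$ seats (if it is the tie-broken winner) or none at all, so both destructive variants collapse to the very same task: decide whether, within the budget $\budget$, we can prevent $P^*$ from being the FPTP winner. The case $\desiredseats \geq \seats$ of FPTP-\textsc{DAB} is a trivial ``yes'' instance; otherwise FPTP-\textsc{DAB} and FPTP-\textsc{DAWB} impose exactly the same condition, so a single algorithm suffices for both.

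To solve this task, I would run up to $\budget$ iterations. Each iteration first identifies the party $P_j \in \parties_{-P^*}$ that is ``closest to overtaking'' $P^*$ (accounting for lexicographic tie-breaking, as discussed below), and then moves one vote from $P^*$ to $P_j$. The procedure terminates and accepts as soon as $P^*$ ceases to be the FPTP winner, and rejects after $\budget$ unsuccessful iterations. If $P^*$ already has fewer votes than some rival at the start, the algorithm accepts immediately with zero bribes.

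Correctness will follow from the observation that each such move narrows the gap between $P^*$ and its strongest rival at the maximum possible rate of two per bribe: any move that does not remove a vote from $P^*$ changes the gap by at most one, and redirecting the removed vote to a weaker rival can only be worse. The main obstacle I expect is the bookkeeping around tie-breaking: since ties favor the lowest-indexed party, a higher-indexed rival must strictly exceed $P^*$'s support to win, whereas a lower-indexed rival only needs to match it. The greedy rule must therefore select, among all rivals, the one requiring the fewest bribes under these asymmetric criteria, rather than simply the one with the highest raw support. Once this detail is handled carefully, the polynomial-time bound is immediate and parallels the argument for Proposition~\ref{prop:fptp}.
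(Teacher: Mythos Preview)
Your proposal is correct and follows essentially the same approach as the paper: a greedy algorithm that repeatedly moves one vote from $P^*$ to its strongest rival in $\parties_{-P^*}$, accepting once $P^*$ is no longer the FPTP winner. Your treatment of the tie-breaking asymmetry (lower-indexed rivals needing only to match, higher-indexed rivals needing to strictly exceed) is in fact more careful than the paper's own terse proof, which glosses over this detail.
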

		\begin{proof}
		  Again, since under FPTP the party with the highest vote count gets
                  all $\seats$ seats, the same algorithm works for both problems. 
                  Let $\supportalloc$ be the support allocation of a given apportionment problem
                  and let $\budget$ be the bribing budget. 
			Our algorithm executes $\budget$ iterations, where in each iteration we proceed as follows: 
			\begin{enumerate}
				\item We identify the party $P_i \in \parties_{-P^*}$ with the currently highest vote count (breaking ties in favor of the party with the lowest index; if $\supportalloc(P_i) > \supportalloc(P^*)$, then we terminate and accept), and 
				\item we move one vote from party $P^*$ to $P_i$. 
			\end{enumerate}
			If in the end party $P^*$ does not have the highest number of votes, then we
			accept.
                        Otherwise, we reject. 
			The correctness and polynomial running time again are straightforward to verify.
		\end{proof}
	
Unfortunately, greedy algorithms do not seem to work for the other apportionment methods and, therefore, we resort to dynamic programming for them.
We present algorithms that are nearly identical for the divisor sequence methods and LRM.\footnote{Note that Bredereck et al.~\cite{bre-fal-fur-kac-lac:c:strategic-campaign-management-in-apportionment-elections} have previously proposed an alternative polynomial-time algorithm that does not handle thresholds.}
		The general structure of our algorithms is that first we add as many
		votes as the bribing budget allows to party~$P^*$, and then we compute
		from which parties these votes have to be taken to guarantee the
		desired number of seats for~$P^*$.

The proof of polynomial-time computability for
$\appmethod$-\textsc{AB}, where $\appmethod$ is a divisor sequence method, relies on the following lemma. 

	 \begin{lemma}\label{lem:moving-voters}
	 	The following two statements hold for all divisor sequence methods.
	 	\begin{enumerate}[label=(\roman*)]
	 		\item The maximum number of additional seats for party $P^*$ by bribing at most $\budget$ votes can always be achieved by moving exactly $\budget$ voters from parties in $\parties_{-P^*}$ to $P^*$. %
	 		\item The maximum number of seats we can remove from party $P^*$ by bribing at most $\budget$ votes can always be achieved by moving exactly $\budget$ voters from $P^*$ to parties in $\parties_{-P^*}$. %
	 	\end{enumerate}
	 \end{lemma}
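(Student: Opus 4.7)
The plan is to establish two basic monotonicity observations for divisor sequence methods and then use them to transform any optimal bribery into the claimed canonical form. The two observations are: (M1) if the vote count $p_{P^*}$ weakly increases with all other counts fixed, then the support $\sigma(P^*)$ weakly increases (possibly jumping from $0$ to a positive value upon crossing $\tau$), so every entry of $P^*$'s fraction list weakly increases and $P^*$'s seat count weakly grows; and (M2) if $p_Q$ weakly decreases for some $Q \neq P^*$ with all other counts fixed, then $\sigma(Q)$ weakly decreases (possibly jumping from a positive value to $0$), so $Q$'s fraction list values weakly drop, pushing $Q$'s seat count weakly down and consequently $P^*$'s seat count weakly up. Both claims follow directly from the definition of divisor sequence methods and the threshold-based support; the only subtlety is that threshold-induced jumps occur, but they happen in the monotone direction.

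For statement (i), I would argue by a two-step rewriting. Start with any bribery $B$ using at most $\budget$ moves that achieves the maximum number of additional seats for $P^*$. First, process each move of $B$ individually to make it \emph{canonical} (take one vote from some $Q \in \parties_{-P^*}$ and give it to $P^*$) without hurting $P^*$. A move transferring from $P^*$ to some $Q \neq P^*$ is deleted: this weakly increases $p_{P^*}$ and weakly decreases $p_Q$, which by (M1) and (M2) does not decrease $P^*$'s seat count. A move transferring from $R \neq P^*$ to $Q \neq P^*$ is redirected to instead transfer from $R$ to $P^*$: compared with $B$, this leaves $p_R$ unchanged, decreases $p_Q$ by one, and increases $p_{P^*}$ by one, again weakly helping $P^*$. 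After this pass we have a bribery with at most $\budget$ canonical moves and an outcome at least as good as that of $B$. Second, I pad up to exactly $\budget$ moves by repeatedly selecting any $Q \in \parties_{-P^*}$ with $p_Q \geq 1$ and moving one of its votes to $P^*$; each added move weakly helps $P^*$ by (M1) and (M2), so padding never hurts. If at some point no such $Q$ remains, then $P^*$ already holds every vote and the padding clause is vacuous.

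Statement (ii) is handled by the completely symmetric argument, with the roles of $P^*$ and $\parties_{-P^*}$ swapped: by the dual monotonicity, decreasing $p_{P^*}$ or increasing $p_Q$ for any $Q \neq P^*$ both weakly decrease $P^*$'s seat count, so any non-canonical move (one not of the form ``take a vote from $P^*$ and give it to some $Q \in \parties_{-P^*}$'') can be either deleted (if it has $P^*$ as destination) or redirected (if both endpoints lie in $\parties_{-P^*}$, change its source to $P^*$) without hurting the destructive goal, and padding with further canonical moves only further weakens $P^*$.

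The main obstacle I expect is keeping the threshold bookkeeping clean. Because $\sigma$ is a step function of the underlying vote count, a single voter change can move a party's support by much more than one (jumping between $0$ and a value at least $\tau$), so the monotonicity arguments must be phrased directly in terms of raw vote counts and then lifted through the support definition, rather than tracked incrementally in support values. Once this point is articulated cleanly, both parts of the lemma follow uniformly for every divisor sequence method, and the argument is robust to the choice of the (strictly increasing) divisor sequence $d$.
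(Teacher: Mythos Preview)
Your proposal is correct and follows essentially the same approach as the paper: both rely on the monotonicity of divisor sequence methods to transform an arbitrary optimal bribery into the canonical all-to-$P^*$ (respectively, all-from-$P^*$) form, and then invoke monotonicity once more to justify spending the full budget~$\budget$. Your version is somewhat more explicit than the paper's---you separately treat the threshold step function and the case of moves whose source is $P^*$, both of which the paper's two-case argument leaves implicit---but the underlying idea is identical.
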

	 
	 \begin{proof}
           Note that when a party receives additional support, the parties fraction list values increase, while they decrease when the support is decreased.
	 	
	   We start with the proof of the first claim.
	 	Assume we found a way to convince voters to change their vote (which, in the following, is called a \emph{bribery action}) such that $P^*$ receives $X$ additional seats.
	 	\begin{description}%
	 		\item[\normalfont{Case 1:}] There are parties other than $P^*$ that receive additional votes.
	 		Let $P_i \neq P^*$ be such a party. 
	 		Now consider that we move all votes that were moved to $P_i$ to $P^*$ instead.
	 		Clearly, $P^*$'s fraction list values increase while those of $P_i$ decrease.
	 		Thus $P^*$ receives at least as many seats in
                        the modified election resulting from this bribery action as in the original election.
	 		\item[\normalfont{Case 2:}] Votes are only moved from parties in $\parties_{-P^*}$ to~$P^*$.
	 		By moving (some) votes to party $P_i$ instead of to~$P^*$, the fraction list values of $P^*$ would decrease while those of $P_i$ would increase.
	 		Therefore, $P^*$ cannot receive more seats in the modified election resulting from this bribery action than in the original election.
	 	\end{description}
	 	Finally, note that by the monotonicity of divisor sequence methods, moving more votes to $P^*$ never makes $P^*$ lose any seats.
	 	Thus we can spend the whole budget $\budget$ on moving votes to~$P^*$.
	 	This together with the two given cases implies that the best possible number of additional seats can always be achieved by moving $\budget$ votes only from $\parties_{-P^*}$ to $P^*$ (although there might be other solutions that are equally good). 
	 	
	 	To prove the second claim, just swap the roles of $P^*$ and the other parties.
	 \end{proof}
	 
	 Lemma~\ref{lem:moving-voters} is crucial for the correctness of our polynomial-time algorithms because it implies that we should exhaust the whole budget $\budget$ for moving votes from other parties to $P^*$ in the constructive case, and for moving votes from $P^*$ to other parties in the destructive case.
	 That is, we do not need to consider moving votes within $\parties_{-P^*}$.
	 
	 We now introduce our algorithm (Algorithm~\ref{alg:bribery-constructive}) and describe it intuitively. 
	 Then we proceed by showing that the algorithm indeed is correct and
      runs in polynomial time.
      For better readability, without loss of generality, for the remainder of this Section~\ref{subsec:top-choice-single-district}, we assume that $P_1$ is our distinguished party, i.e., $P^* = P_1$ (whenever there exists a distinguished party).
	 
	 \begin{algorithm}[h]
	 	\caption{for $\appmethod$-\textsc{AB}, where $\appmethod$ is a divisor sequence method}
	 	\label{alg:bribery-constructive}
	 	\textbf{Input}: $\parties$, $\votes$, $\threshold$, $\seats$, $P_1$, $\budget$, $\desiredseats$
	 	\begin{algorithmic}[1] 
	 		\STATE $\budget \gets \min\{n-p_1, \budget\}$
	 		\IF{$p_1 + \budget < \threshold$}
	 		\RETURN \textsc{No}
	 		\ENDIF
	 		\STATE $\supportalloc(P_1) \gets p_1 + \budget$
	 		\STATE compute $\gamma$
	 		\COMMENT{\textcolor{gray}{\emph{For each party $P \in \parties$, the entries of array $\gamma[P]$ are pairs $(x,\mathrm{cost})$, where $\mathrm{cost}$ is the bribery budget needed to ensure that $P$ receives exactly $x$ seats before $P_1$ gets $\desiredseats$ seats}}}
	 		\STATE initialize table $\mathrm{tab}$ with $\seats - \desiredseats$ columns and $m$ rows, where $\mathrm{tab}[0][0] \gets 0$ and the other entries are~$\infty$
	 		\STATE let $o:\{1, \dots, |\parties_{-P_1}|\} \to \parties_{-P_1}$ be an ordering
	 		\FOR{$i \gets 1$ to $|\parties_{-P_1}|$}
	 		\FOR{$s \gets 0$ to $\seats - \desiredseats$}
	 		\FOR{$(x, \mathrm{cost}) \in \gamma[o(i)]$}
	 		\IF{$s-x \geq 0$}
	 		\STATE $\mathit{tmp} \gets \mathrm{tab}[i-1][s-x] + \mathrm{cost}$
	 		\IF{$\mathit{tmp} < \mathrm{tab}[i][s]$}
	 		\STATE $\mathrm{tab}[i][s] = \mathit{tmp}$
	 		\ENDIF
	 		\ENDIF
	 		\ENDFOR
	 		\ENDFOR
	 		\ENDFOR
	 		\FOR{$s \gets 0$ to $\seats - \desiredseats$}
	 		\IF{$\mathrm{tab}[|\parties_{-P_1}|][s] \leq \budget$}
	 		\RETURN \textsc{Yes}
	 		\ENDIF
	 		\ENDFOR
	 		\RETURN \textsc{No}
	 	\end{algorithmic}
	 \end{algorithm} 

	 Intuitively, we first set $\budget$ to the minimum of $n-p_1$ and $\budget$ because this is the maximum number of votes we can move from other parties to $P_1$.
	 Should it be impossible for $P_1$ with this
         new $\budget$ to reach the threshold, we can already answer \textsc{No}, as $P_1$ never receives any seat at all.
	 The crucial part of the algorithm is computing the dictionary~$\gamma$.
	 As commented in Algorithm~\ref{alg:bribery-constructive}, $\gamma$ is defined as a list of arrays $\gamma[P]$: For each party $P \in \parties$, the entries of their corresponding array $\gamma[P]$ are (implementable) pairs $(x,\mathrm{cost})$, where $\mathrm{cost}$ is the minimum number of votes that must be removed from party~$P$ so that $P$ receives only $x$ seats before $P_1$ receives the $\desiredseats$-th seat, assuming $P_1$ has exactly $\budget$ additional votes in the end.	 
	 We later show in the proof of Theorem~\ref{thm:bribery-top-choice} how $\gamma$ can be efficiently computed.
	 We now define an order $o$ over the parties
         $\parties_{-P_1}$.
	 This can be any order; we just use it to identify each party with a row in the table which we now begin to fill.
	 For each~$i$, $1 \leq i \leq |\parties_{-P_1}|$, and each~$s$, $0 \leq s \leq \seats-\desiredseats$, the cell $\mathrm{tab}[i][s]$ contains the minimum number of votes needed to be moved away from parties $o(1), \dots, o(i)$ such that $o(1), \dots, o(i)$ receive $s$ seats in total before $P_1$ is assigned its $\desiredseats$-th seat (again, assuming that $P_1$ has exactly $\budget$ additional votes in the end).
	 This table can also be efficiently computed with dynamic programming, as we describe later.
	 Finally, we check if there exists a value of at most $\budget$ in the last row of the table.
	 If this holds, we answer \textsc{Yes} because there do exist bribes that do not exceed $\budget$ and ensure that the other parties leave the $\desiredseats$-th seat for~$P_1$.
	 
	 Note that by tracing back through the table $\mathrm{tab}$ we can find the individual numbers of votes we need to move from each party to $P_1$ for a successful campaign.
	 This number does not sum up to $\budget$ in many cases.
	 If so, we can simply remove the remaining votes from arbitrary parties (except~$P_1$).

	 \begin{theorem}\label{thm:bribery-top-choice}
	 	For each divisor sequence method~$\appmethod$,
	 	$\appmethod$-\textsc{AB} is in~$\p$.
	 \end{theorem}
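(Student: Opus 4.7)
The plan is to verify correctness of Algorithm~\ref{alg:bribery-constructive} and show that every step runs in polynomial time. By Lemma~\ref{lem:moving-voters}(i), no optimal successful campaign is worse than one that spends the entire budget on moving voters from $\parties_{-P_1}$ to $P_1$. So after line~5 we may assume that the final support of $P_1$ is exactly $\supportalloc(P_1)=p_1+\budget$, and the question reduces to: can we decrease the supports of the remaining parties (by a total of at most $\budget$ votes) so that $P_1$ ends up with at least $\desiredseats$ seats? Since $P_1$'s support is fixed, its $\desiredseats$-th fraction value $C:=\supportalloc(P_1)/d_{\desiredseats}$ is a fixed cutoff, and $P_1$ receives at least $\desiredseats$ seats under the divisor sequence method precisely when the number of fraction values from $\parties_{-P_1}$ that outrank $C$ (with the chosen tie-breaking) is at most $\seats-\desiredseats$. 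This justifies the outer loop that sums per-party seat counts up to $\seats-\desiredseats$.

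Next I would describe how to compute $\gamma[P]$ for each $P\in\parties_{-P_1}$ in polynomial time. For each target $x\in\{0,1,\dots,\seats\}$, party $P$ receives exactly $x$ seats ``before'' the cutoff $C$ iff exactly $x$ of its values $\supportalloc(P)/d_1,\dots,\supportalloc(P)/d_x$ outrank $C$ while $\supportalloc(P)/d_{x+1}$ does not. Writing this out, the minimum number of votes $c$ that must be removed from $P$ to achieve this is the smallest nonnegative integer such that $(\supportalloc(P)-c)/d_{x+1}$ falls below the cutoff with respect to the tie-breaking rule; this can be computed by a single closed-form expression (essentially $\supportalloc(P)-\lfloor C\cdot d_{x+1}\rfloor$, adjusted by one if needed to respect lexicographic tie-breaking). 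For each $P$ we produce at most $\seats+1$ entries $(x,\mathrm{cost})$, each in constant time, so $\gamma$ is computable in $O(m\seats)$ time, and if any $c$ would exceed $\supportalloc(P)$, that entry is simply discarded. The slightly delicate point---and the main obstacle of the argument---is to check that these independently computed per-party costs actually combine correctly: after removing $c_P$ votes from each party $P$ as prescribed, every $P$ truly receives exactly $x_P$ seats in the global seat allocation. This follows because the cutoff $C$ depends only on $\supportalloc(P_1)$ (which is fixed) and the seat count of $P$ under the divisor sequence method depends only on how many of $P$'s own fraction values outrank the $\seats$-th global value; since the per-party costs are defined with respect to the same cutoff $C$, and since $\sum_P x_P\le \seats-\desiredseats$ means at most $\seats-\desiredseats$ values from $\parties_{-P_1}$ outrank $C$, the $\desiredseats$ fraction values of $P_1$ at positions $1,\dots,\desiredseats$ all lie within the top $\seats$ globally.

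With $\gamma$ in hand, the DP table $\mathrm{tab}[i][s]$ stores the minimum total cost to make the first $i$ parties (in the ordering $o$) contribute exactly $s$ seats, using the standard knapsack-style transition $\mathrm{tab}[i][s]=\min_{(x,c)\in\gamma[o(i)],\ x\le s}\bigl(\mathrm{tab}[i-1][s-x]+c\bigr)$ with base case $\mathrm{tab}[0][0]=0$; the algorithm accepts iff some entry in the last row with $s\le\seats-\desiredseats$ is at most $\budget$. If the minimum cost found is strictly less than $\budget$, the leftover votes can be moved to $P_1$ from any party (splitting them one-by-one from parties that still have positive support), and by Lemma~\ref{lem:moving-voters}(i) this cannot decrease $P_1$'s seat count, so an accepting answer of the table genuinely corresponds to a successful campaign, and conversely any successful campaign induces a feasible filling of the table.

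Finally, for the running time: the table has $m\cdot(\seats-\desiredseats+1)$ cells, and each transition iterates over at most $|\gamma[o(i)]|\le \seats+1$ pairs, giving an overall cost of $O(m\seats^2)$ on top of the $O(m\seats)$ time to build $\gamma$ and the $O(n)$ preprocessing. Since $m$, $\seats$, and $n$ are all polynomially bounded in the input (with votes in unary), the algorithm runs in polynomial time, establishing $\appmethod\text{-}\textsc{AB}\in\p$.
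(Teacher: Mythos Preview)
The proposal is correct and follows essentially the same approach as the paper: invoke Lemma~\ref{lem:moving-voters} to fix $P_1$'s final support at $p_1+\budget$, compute per-party cost arrays $\gamma$ relative to the fixed cutoff $C=\supportalloc(P_1)/d_{\desiredseats}$, and combine them via a knapsack-style DP over $\parties_{-P_1}$. Your closed-form computation of $\gamma$ is a mild streamlining of the paper's binary-search formulation (the paper defines an auxiliary function $\phi(y)$ giving the number of seats a party with $y$ votes gets before $P_1$'s $\desiredseats$-th seat, and binary-searches for its jump points in $O(\seats\log\budget)$ time), but the underlying quantity is identical. One small point you gloss over and the paper handles explicitly: the threshold. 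If removing votes pushes a party $P$ below $\threshold$, its support becomes $0$ and it gets $0$ seats, so the cost entry for $x=0$ may be smaller than your formula $\supportalloc(P)-\lfloor C\cdot d_1\rfloor$ suggests; the paper's $\phi$ has a dedicated case $\phi(y)=0$ for $y<\threshold$ to cover this.
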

	 \begin{proof}
		 Algorithm~\ref{alg:bribery-constructive} decides whether a successful campaign exists in the constructive case.
		 The algorithm works with every divisor sequence method.
	 	We first prove that the algorithm indeed runs in polynomial time.
	 	For most parts of the algorithm this is easy to see: We essentially fill a table with $|\parties|$ rows and at most $\seats$ columns.
	 	Since $\seats \leq |\votes|$, the table size is indeed polynomial in the input size.
	 	However, it is yet unclear how $\gamma$ is computed.
	 	Computing $\gamma$ works with a binary search for the jumping points of a function~$\phi$, which is defined to give the number of seats a party with $y$ votes receives before $P_1$ receives $\desiredseats$ seats, assuming that $P_1$ has exactly $\budget$ additional votes in the end.
	 	Let $q$ be the final vote count of $P_1$ (i.e., with the $\budget$ additional votes).
	 	Then, for a divisor sequence method with the sequence $d = (d_1, d_2, \dots, d_{\seats})$, we have
	 	\[
                \phi(y) = \begin{cases} 0 &\text{ if } y < \threshold\\0 &\text{ if } y \leq \nicefrac{q}{d_{\desiredseats}}\\ \max\{z \in \{1, \dots, \seats\} \mid \nicefrac{y}{d_{z}} > \nicefrac{q}{d_{\desiredseats}}\} & \text{ otherwise. }\end{cases}
                \]
	 	The jumping points can be found by binary search in time $\mathcal{O}(\seats \cdot \log(\budget))$.
	 	
	 	We now prove the correctness of the algorithm.
	 	Starting in the beginning, setting $\budget$ to the minimum of $n-\supportalloc(P_1)$ and $\budget$ is necessary to ensure that we never move more votes from parties in $\parties_{-P_1}$ to $P_1$ than allowed; in particular, setting $\budget$ higher than that would result in false positive results.
	 	For the remainder of this proof, we assume that all $\budget$ votes are moved from $\parties_{-P_1}$ to $P_1$, i.e., $P_1$ receives $\budget$ additional votes in the end.
	 	This is optimal according to Lemma~\ref{lem:moving-voters}.
	 	The first if-statement returns \textsc{No} if $P_1$ cannot reach the threshold. This answer is correct since $P_1$ can never get any seat as long as it is below the threshold, i.e., in this case the bribe is unsuccessful.
	 	
	 	In the middle part of the algorithm, we fill a table.
	 	Recall that for each~$i$, $1 \leq i \leq |\parties_{-P_1}|$, and each~$s$, $0 \leq s \leq \seats-\desiredseats$, the cell $\mathrm{tab}[i][s]$ contains the minimum number of votes needed to be removed from parties $o(1), \dots, o(i)$ such that $o(1), \dots, o(i)$ receive $s$ seats in total before $P_1$ is assigned its $\desiredseats$-th seat.
	 	The values are computed dynamically from the previous row to the next row.
	 	This is possible because the
                number of seats that parties $o(1), \dots, o(i)$ receive in total before $P_1$ is assigned its $\desiredseats$-th seat is exactly the sum of the seat counts the parties have individually before $P_1$ receives its $\desiredseats$-th seat.
	 	Further, since this number can be computed directly by comparing the divisor list of
                a party with the divisor list of $P_1$ (i.e., the $\phi$ function of each party is independent of other parties' support), the required bribery budget is also exactly the sum of the individual bribes.
	 	Thus the values in the
                table are indeed computed correctly.
	 	
	 	Finally, if in the last row there exists a value of at most $\budget$, we correctly answer \textsc{Yes}, by the following argument.
	 	Suppose we have a value of at most $\budget$ in cell $\mathrm{tab}[|\parties_{-P_1}|][s]$.
	 	Then there are bribes that do not exceed $\budget$ and ensure that the other parties receive at most $s$ seats before $P_1$ is assigned its $\desiredseats$-th seat.
	 	Since there are a total of $\seats$ seats available, and the other parties get $s\leq \seats-\desiredseats$ seats before $P_1$ receives the $\desiredseats$-th seat, $P_1$ will indeed receive its $\desiredseats$-th seat.
	 	However, if all cells of the last row contain a value greater than~$\budget$, the given budget is too small to ensure that the other parties receive at most $\seats-\desiredseats$ seats before $P_1$ receives its $\desiredseats$-th seat.
	 	Thus the other parties receive at least $\seats-\desiredseats +1$ seats in this case, which leaves at most $\desiredseats-1$ seats for~$P_1$, so we correctly answer \textsc{No}.
	 \end{proof}

	 We can easily adapt Algorithm~\ref{alg:bribery-constructive} for the destructive case.
	 Now, we \emph{remove} $\min\{\budget, \supportalloc(P_1)\}$ votes from party $P_1$ and \emph{add them to the other} parties.
	 Of course, when $P_1$ is pushed below the threshold, we immediately answer \textsc{Yes}.
	 For the destructive case, $\phi$ and $\gamma$ need to be defined slightly differently.
	 Here, we define $\phi(y)$ as the number of seats a party with $y$ votes receives before $P_1$ is assigned its $(\desiredseats+1)$-th seat (which we try to prevent).
	 And
         $(x,cost)$ in $\gamma[P]$ is defined as the minimum number of votes we need to add to party $P$ such that it receives at least $x$ seats before $P_1$ is assigned its $(\desiredseats+1)$-th seat.
	 Again, we fill the table with dynamic programming but this time, whenever we have filled a row completely, we check if it is possible for the parties corresponding to all
         already filled rows to receive at least $\seats - \desiredseats$ seats before $P_1$ receives its $(\desiredseats+1)$-th seat by bribery.
	 That is, we check whether we filled the last cell in the current row with a value of at most~$\budget$ or whether it would be possible to fill a cell beyond the last table cell in the current row with such a value.
	 In that case we answer \textsc{Yes}, since there are not enough seats left for $P_1$ to be assigned its $(\desiredseats + 1)$-th seat with this bribery action.
	 If this was never possible, we answer \textsc{No} because the best we could do is to occupy at most $\seats - \desiredseats-1$ seats with parties in~$\parties_{-P_1}$, which still leaves the $(\desiredseats + 1)$-th seat for~$P_1$.
	 This proves the following result.

 	 \begin{theorem}\label{thm:destr-bribery-top-choice}
	   For each divisor sequence method~$\appmethod$,
           $\appmethod$-\textsc{DAB} is in~$\p$.
	 \end{theorem}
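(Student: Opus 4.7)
The plan is to mirror Algorithm~\ref{alg:bribery-constructive} essentially step by step, reversing the direction of vote movement. By Lemma~\ref{lem:moving-voters}(ii), the optimal destructive strategy moves all available votes \emph{out} of $P_1$ and distributes them among $\parties_{-P_1}$, so I would first set $\budget \gets \min\{\supportalloc(P_1), \budget\}$ and then fix the final support of $P_1$ to $\supportalloc(P_1) - \budget$. Before invoking the dynamic program, I would include a shortcut: if $\supportalloc(P_1) - \budget < \threshold$, then $P_1$ receives no seats at all under the top-choice mode, so the instance is a trivial \textsc{Yes} whenever $\desiredseats \geq 0$.

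Next I would redefine the auxiliary objects for the destructive setting. Let $q := \supportalloc(P_1) - \budget$ be the fixed final support of $P_1$ (assumed $\geq \threshold$ by the shortcut). For a divisor sequence $d = (d_1, \dots, d_{\seats})$, I would define $\phi(y)$ as the number of seats a party with $y$ votes receives \emph{strictly before} $P_1$ is awarded its $(\desiredseats+1)$-th seat, i.e.\
\[
\phi(y) = \begin{cases} 0 & \text{if } y < \threshold,\\ 0 & \text{if } y \le q/d_{\desiredseats+1},\\ \max\bigl\{z \in [\seats] \mid y/d_z > q/d_{\desiredseats+1}\bigr\} & \text{otherwise,}\end{cases}
\]
and compute the entries $(x,\mathrm{cost}) \in \gamma[P]$ as the minimum number of votes that must be \emph{added} to $P$ so that $\phi(\supportalloc(P)+\mathrm{cost}) \ge x$. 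As in the constructive case, $\phi$ is a step function in $y$, so its jumping points can be located by $\mathcal{O}(\seats)$ binary searches in $\mathcal{O}(\seats \log \budget)$ total time.

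I would then fill the same $|\parties_{-P_1}| \times (\seats - \desiredseats)$-style table $\mathrm{tab}$ by dynamic programming, with $\mathrm{tab}[i][s]$ recording the minimum number of additional votes one must inject into $o(1), \dots, o(i)$ so that their combined seat count before $P_1$'s $(\desiredseats+1)$-th seat is at least $s$. Unlike the constructive version, the acceptance test is now one-sided from above: after each row is completed, I would check whether there exists any $s \ge \seats - \desiredseats$ reachable within cost $\le \budget$, interpreting "reachable" to include values exceeding the rightmost explicit column (i.e.\ entries whose cost is $\le \budget$ even though $s > \seats - \desiredseats$). If so, at most $\desiredseats$ seats remain for $P_1$ and I accept; otherwise, after the last row I reject.

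Correctness rests on two points: Lemma~\ref{lem:moving-voters}(ii) justifies that it is optimal to spend the entire remaining budget on votes leaving $P_1$, and the additive structure of $\phi$ across parties (each party's seat count before $P_1$'s $(\desiredseats+1)$-th seat depends only on its own support and on $q$, not on the other parties) justifies the sum-decomposition used by the DP. The subtle point I expect to watch most carefully is the treatment of the threshold for the other parties: when adding votes to some $P \in \parties_{-P_1}$ one must respect that $\phi(y) = 0$ for $y < \threshold$, so a party starting below $\threshold$ contributes nothing unless enough votes are added to cross it, which must be reflected in the jumping-point search used to build $\gamma[P]$. Once this bookkeeping is in place, polynomial running time follows exactly as in Theorem~\ref{thm:bribery-top-choice}.
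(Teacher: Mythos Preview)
Your proposal is correct and follows essentially the same route as the paper: reverse the vote flow via Lemma~\ref{lem:moving-voters}(ii), shortcut when $P_1$ drops below~$\threshold$, redefine $\phi$ and $\gamma$ relative to $P_1$'s $(\desiredseats+1)$-th seat, and adjust the DP acceptance test to trigger whenever the other parties can absorb at least $\seats-\desiredseats$ seats (including values past the rightmost column). Your explicit attention to the threshold case for parties in $\parties_{-P_1}$ and the degenerate case $\desiredseats=\seats$ (where $d_{\desiredseats+1}$ would be undefined, but the answer is trivially \textsc{Yes}) are worth keeping in the write-up.
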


	 Note that since we can decide in polynomial time whether there exists a successful campaign guaranteeing $P_1$ at least $\desiredseats$ seats by bribing at most $\budget$ voters, we can also find the maximum number of seats we can guarantee for $P_1$ with a budget of $\budget$ (using a simple binary search) in polynomial time.
	 Analogously, for the destructive variant, it is possible to efficiently determine the maximum number of seats we can steal from $P_1$.
	 We demonstrate this experimentally in Section~\ref{sec:experiments}, testing the effectiveness of the campaigns on real-world elections.
	 
	 We now show the LRM analogues of all previous results, starting with the LRM analogue of Lemma~\ref{lem:moving-voters}.
	 
	 \begin{lemma}\label{lem:moving-voters-LRM}
	 	The following two statements hold for LRM.\footnote{Note that these two statements also hold for moving votes from other parties to $P_1$ and from $P_1$ to other parties.}
	 	\begin{enumerate}[label=(\roman*)]
	 		\item Adding votes to support the distinguished party $P_1$ can never make $P_1$ lose seats.
	 		Further, removing votes from $P_1$ can never make $P_1$ gain seats.
	 		\item Adding votes to support another party than $P_1$ cannot increase the number of seats for $P_1$ by more than if we would add the same number of supporters to $P_1$.
	 		Conversely, removing votes from another party than $P_1$ cannot decrease the number of seats for~$P_1$ by more than if we would remove the same number of supporters directly from $P_1$.
	 	\end{enumerate}
	 \end{lemma}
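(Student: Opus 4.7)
The plan is to establish both parts by reducing to single-vote changes and performing a case analysis on how fair shares, lower quotas, and remainders evolve under LRM.

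For claim (i), I would analyze the effect of a single unit change in $P_1$'s support on $P_1$'s seat count, focusing on the move case first (a transfer of one vote from some party $Q \neq P_1$ to $P_1$). Only $\fairshare(P_1)$ and $\fairshare(Q)$ change, each by $\seats/n$ in opposite directions, while all other fair shares (and hence all other lower quotas and remainders) are unaffected. I would then split into sub-cases based on whether $\fairshare(P_1)$ and $\fairshare(Q)$ cross integer boundaries. If $\fairshare(P_1)$ does not cross an integer, then its lower quota is unchanged and its remainder grows, so $P_1$'s rank in the remainder ordering can only improve; in particular, $P_1$ cannot lose a remainder seat. If $\fairshare(P_1)$ crosses upward, then $P_1$'s lower quota increases by one and its remainder drops from close to $1$ to close to $0$; any potential loss of a remainder seat is compensated by the gained lower-quota seat, so the net change in $P_1$'s seats is non-negative. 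The pure addition case (where $n$ also grows) is analogous but requires tracking that the other parties' fair shares decrease by a small amount bounded by $\seats/(n+1) < 1$, so each can cross at most one integer downward. Iterating the one-vote argument over $k$ changes yields the global statement. The symmetric claim for removing votes from $P_1$ follows by reversing the direction of the transfer.

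For claim (ii), I would deduce it directly from claim (i) via an interpolation argument. Let scenario A denote the election modified by adding $k$ votes to $P_1$, and scenario B the election modified by adding the same $k$ votes to some other party $Q$. Starting from scenario B, perform $k$ single-vote moves, each shifting one of the added votes from $Q$ back to $P_1$. The resulting election coincides exactly with scenario A, since the $k$ added votes now support $P_1$ and $Q$'s support is restored to its original level. By the movement version of claim (i), each such single-vote move cannot decrease $P_1$'s seat count. Hence the seat count of $P_1$ in scenario A is at least the seat count of $P_1$ in scenario B, which is precisely the claimed inequality on increments. The analogous statement for removals follows from the same interpolation argument applied in reverse.

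The hard part will be the simultaneous-crossing sub-case in (i): when both $\fairshare(P_1)$ and $\fairshare(Q)$ cross integer boundaries, $P_1$'s remainder drops from close to $1$ to close to $0$ while $Q$'s jumps from close to $0$ to close to $1$, and so $P_1$ may plausibly lose its remainder seat to $Q$. The key is to verify that the concurrent gain of one lower-quota seat by $P_1$ and loss of one by $Q$ exactly compensates this swap, keeping $P_1$'s total seat count non-decreasing. The edge cases where several parties tie for the cutoff remainder also require care, but they reduce to confirming that $P_1$'s position under lexicographic tie-breaking is monotone in its remainder, which is routine.
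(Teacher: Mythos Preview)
Your approach is correct and matches the paper's. For claim (ii) you and the paper make the identical reduction: adding $k$ votes to $P_1$ versus adding $k$ votes to some other party $Q$ yields the same total $n$, so the two scenarios differ exactly by a transfer of $k$ votes from $Q$ to $P_1$, and the move-version of (i) then gives the comparison; the paper states this in one sentence (``we would achieve the same $n$ \ldots\ $P_1$'s fair share is even higher, so $P_1$ will not receive fewer seats''), while you phrase it as an interpolation through single-vote moves, but it is the same argument.

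For claim (i) you give considerably more than the paper does---the paper simply declares the monotonicity ``easy to see'' and moves on---so your case analysis on integer crossings is additional rigor rather than a different route. One caution: your treatment of the pure-addition case as ``analogous'' to the two-party move case understates the bookkeeping. When $n$ grows, \emph{every} other party's fair share drops, so several lower quotas may decrease simultaneously; you then need the observation that each such drop creates one extra remainder seat, and that these extra seats absorb the newly high-remainder parties without displacing $P_1$ from the top-$r'$ list. This is routine but is not the same two-party split you wrote out for the move case.
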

	\begin{proof}
		The first claim, which expresses that LRM satisfies some kind of ``monotonicity,'' is easy to see.
		For the second claim, note that adding or removing votes changes $n$ and thus the fair shares of all parties.
		It is in fact possible (although rare) that $P_1$ receives more seats than before by adding a vote to \emph{another} party than $P_1$ due to the changed remainders with the changed value of~$n$.\footnote{Party $P_1$'s remainder may be decreased a little less than the remainder of a much bigger party.}
		However, note that we would achieve the same $n$ by adding the same votes to $P_1$ instead of to the other party.
		In that case, $P_1$'s fair share is even higher, so $P_1$ will not receive fewer seats than before.
		An analogous argumentation applies to the converse case since removing votes from $P_1$ reduces $P_1$'s seat count at least as much as removing votes from another party.
	\end{proof}
	
	Next, we show the LRM analogue of Theorem~\ref{thm:bribery-top-choice}.

	\begin{theorem}\label{thrm:bribery-top-choice-lrm}
		LRM-\textsc{AB} is in~$\p$.
	\end{theorem}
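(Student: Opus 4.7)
The plan is to adapt Algorithm~\ref{alg:bribery-constructive} to LRM, preserving its two-stage structure but enumerating the global quantities on which LRM's seat allocation depends. By Lemma~\ref{lem:moving-voters-LRM}, an optimal campaign moves $\budget' := \min\{\budget, |\votes| - p_1\}$ votes from parties in $\parties_{-P_1}$ to $P_1$, fixing $P_1$'s final support at $q := p_1 + \budget'$; if $q < \threshold$, we return No, since $P_1$ then receives no seats. The remaining task is to choose, for each $P_i \in \parties_{-P_1}$, a final support $s_i \in \{0, 1, \ldots, p_i\}$ with $\sum_{i \neq 1} s_i = |\votes| - q$ (equivalent to spending exactly $\budget'$ votes) such that the resulting LRM allocation assigns $P_1$ at least $\desiredseats$ seats.

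The main obstacle compared to the divisor sequence case is that LRM's per-party seat count depends on the global quantity $n := q + \sum_{i : s_i \geq \threshold} s_i$ and on the ranking of the remainders, so we cannot precompute a per-party cost array in isolation as in Algorithm~\ref{alg:bribery-constructive}. We overcome this by enumerating $n$ over $\{q, q+1, \ldots, |\votes|\}$; this is polynomially many values since votes are encoded in unary. For fixed $n$, let $lq_1 := \lfloor \seats q / n \rfloor$ and $r_1 := \seats q / n - lq_1$. Then $P_1$ receives at least $\desiredseats$ seats iff either $lq_1 \geq \desiredseats$, or $lq_1 = \desiredseats - 1$ and strictly fewer than $R := \seats - lq_1 - \sum_{i \neq 1} \lfloor \seats s_i / n \rfloor$ of the other parties have remainder exceeding $r_1$ (our lexicographic tie-breaking favors $P_1$, so $P_1$ wins any tie on remainders).

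For each candidate $n$, we run a dynamic program over $\parties_{-P_1}$ whose state is the tuple (index of next party, cumulative above-threshold support, cumulative total support, cumulative lower quota of processed parties, count of processed parties whose remainder exceeds $r_1$), storing reachability. For each $P_i$ we branch over all choices $s_i \in \{0, 1, \ldots, p_i\}$, which determines whether $s_i$ contributes as above-threshold (updating all four cumulative quantities) or as below-threshold (updating only the total-support counter, since below-threshold parties have lower quota $0$ and receive no remainder seat). We accept iff some reachable final state satisfies cumulative above-threshold $= n - q$, cumulative total $= |\votes| - q$, and the appropriate case condition from the previous paragraph holds.

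Each DP dimension is polynomially bounded (by $|\votes|$, $\seats$, or $m$, all polynomial in the input since votes are in unary), there are at most polynomially many transitions per state, and we iterate over polynomially many candidate values of $n$, so the overall algorithm runs in polynomial time. The main technical subtlety will be correctly handling the tie-breaking in the remainder-seat condition and ensuring consistency between the enumerated $n$ and the DP's chosen final supports; both aspects are enforced by the two equality constraints in the acceptance test (which simultaneously guarantee that the guessed $n$ matches the realized above-threshold sum and that the full budget $\budget'$ is spent).
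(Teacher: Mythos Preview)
Your approach is correct and yields a polynomial-time algorithm, but it proceeds differently from the paper. Both arguments begin by invoking Lemma~\ref{lem:moving-voters-LRM} to fix $P_1$'s final support at $q = p_1 + \budget'$. From there, the paper's Algorithm~\ref{alg:bribery-constructive-lrm} exploits a structural observation: once $q$ is fixed, the LRM denominator $n$ changes only when some opponent is pushed below the threshold, and the paper argues that it suffices to consider pushing the parties with smallest support first. This lets the paper run at most $m$ iterations of a while-loop, each with one specific value of $n$ (determined by which prefix of the smallest parties has been removed), inside of which a DP analogous to Algorithm~\ref{alg:bribery-constructive} decides whether $P_1$ can secure the missing remainder seat. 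Your approach instead sidesteps this structural claim entirely by brute-force enumerating every candidate $n \in \{q, \ldots, |\votes|\}$ and, for each, running a richer five-dimensional DP whose state tracks both the above-threshold and the total cumulative supports, so that consistency with the guessed $n$ and with the budget are enforced only at acceptance. The trade-off is that the paper's algorithm is faster (roughly an $O(|\votes|)$ factor fewer outer iterations and a smaller per-iteration DP) but leans on the ``push smallest first'' insight, whereas your enumeration is heavier yet more transparently correct, since it directly accommodates arbitrary subsets of parties dropping below the threshold rather than only prefixes of the sorted order.
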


	\begin{algorithm}[h!]
	\caption{for LRM-\textsc{AB}}
	\label{alg:bribery-constructive-lrm}
	\textbf{Input}: $\parties$, $\votes$, $\threshold$, $\seats$, $P_1$, $\budget$, $\desiredseats$
	\begin{algorithmic}[1]
		\STATE $\budget \gets \min\{n-p_1, \budget\}$
		\IF{$p_1 + \budget < \threshold$}
		\RETURN \textsc{No}
		\ENDIF
		\STATE $\supportalloc(P_1) \gets p_1 + \budget$
		\WHILE{True}
		\STATE $n \gets (\sum_{P \in \parties} \supportalloc(P)) - \budget$
		\IF{$\seats \cdot \frac{\supportalloc(P_1)}{n} \geq \desiredseats$}
		\RETURN \textsc{Yes}
		\ENDIF
		\IF{$\seats \cdot \frac{\supportalloc(P_1)}{n} == \desiredseats-1$}
		\STATE compute $\gamma$
		\STATE initialize table $\mathrm{tab}$ with $\seats - \desiredseats$ columns and $m$ rows, where $\mathrm{tab}[0][0] \gets 0$ and the other entries are~$\infty$
		\STATE let $o:\{1, \dots, |\parties_{-P_1}|\} \to \parties_{-P_1}$ be an ordering
		\FOR{$i \gets 1$ to $|\parties_{-P_1}|$}
		\FOR{$s \gets 0$ to $\seats - \desiredseats$}
		\FOR{$(x, \mathrm{cost}) \in \gamma[o(i)]$}
		\IF{$s-x \geq 0$}
		\STATE $\mathit{tmp} \gets \mathrm{tab}[i-1][s-x] + \mathrm{cost}$
		\IF{$\mathit{tmp} < \mathrm{tab}[i][s]$}
		\STATE $\mathrm{tab}[i][s] = \mathit{tmp}$
		\ENDIF
		\ENDIF
		\ENDFOR
		\ENDFOR
		\ENDFOR\\
		\FOR{$s \gets 0$ to $\seats - \desiredseats$}
		\IF{$\mathrm{tab}[|\parties_{-P_1}|][s] \leq \budget$}
		\RETURN \textsc{Yes}
		\ENDIF
		\ENDFOR
		\ENDIF
		\STATE let $P_i$ be the party with the smallest positive support
		\IF{$\supportalloc(P_i) - \threshold + 1 \leq \budget$}
		\STATE\COMMENT{\textcolor{gray}{\emph{push $P_i$ below the threshold}}}
		\STATE $\supportalloc(P_i) \gets 0$
		\STATE $\budget \gets \budget - (\supportalloc(P_i) - \threshold + 1)$
		\ELSE
		\RETURN \textsc{No}
		\ENDIF
		\ENDWHILE	  
	\end{algorithmic}
	\end{algorithm} 
	
	\begin{proof}
		We prove that Algorithm~\ref{alg:bribery-constructive-lrm} (which obviously runs in polynomial-time) correctly decides LRM-\textsc{AB}.
		First, note that according to Lemma~\ref{lem:moving-voters-LRM} we should move as many votes as possible to~$P_1$.
		In case $P_1$ still does not reach the threshold, we can answer \textsc{No} since there is no way for $P_1$ to get any seat.
		
		Otherwise, the while-loop will now first check whether it is possible to make $P_1$ get the desired seats without pushing additional parties below the threshold, then whether this is possible when pushing exactly one party below the threshold, and so on. Note that a party $P_i$ is pushed below the threshold by removing $\supportalloc(P_i) - \threshold + 1$ votes from $P_i$.
		This step-by-step approach is important because every time we push a party below the threshold, $n$ changes, and so will the fair shares of all parties
                still in the race.
		In each iteration of the while-loop, we distinguish three cases:
		\begin{description}
			\item[\normalfont{Case 1:}] If $P_1$ gets $\desiredseats$ seats by lower quota, we can answer \textsc{Yes} since it does not matter from which parties we move the votes to $P_1$.
			\item[\normalfont{Case 2:}] In case $P_1$ receives
                          fewer than $\desiredseats - 1$ seats as lower quota, we cannot decide the instance in the current iteration,
                          which is why we remove (i.e., push below the threshold) the next party and continue with the next iteration.
			This is correct, as $P_1$ can only get up to one additional (remainder) seat by strategically moving votes from the other parties to $P_1$.
		        \item[\normalfont{Case 3:}] Finally, in the third case, $P_1$ gets exactly $\desiredseats - 1$ seats as lower quota.
                          This is the most challenging part, since now we have to check whether it is possible to move the votes strategically such that $P_1$ has one of the largest remainders, and thus receives a remainder seat.
		
			This is done using a similar dynamic programming approach as we did for divisor sequence methods.
			However, $\gamma$ is defined slightly differently this time.
			That is,
                        $(x,cost)$ in $\gamma[P]$ gives the minimum number of votes that have to be removed from party $P$ such that $P$ receives only $x$ seats before $P_1$ receives \emph{the remainder seat} (assuming $P_1$ has exactly $\budget$ additional votes in the end), and $P$ is still above the threshold.
			Computing the $\gamma$ values once again works with a binary search for the jumping points of the function $\phi$.
			Here, $\phi(y)$ is defined as the number of seats a party with $y$ votes receives before $P_1$ receives a remainder seat, i.e., the lower quota whenever the remainder is higher than the remainder of $P_1$, or else the lower quota plus one. 
			With dynamic programming we fill a table with $\seats - \desiredseats$ columns.
			If we find a value in the last row of the table which is at most our budget, we can answer \textsc{Yes} since this value certifies that it was possible to decrease the support of the other parties such that they receive at most $\seats - \desiredseats$
                        seats before $P_1$ receives a remainder seat.
			Due to the lower quota of $\desiredseats - 1$ seats $P_1$ receives, there is at least $\seats - \desiredseats - (\desiredseats - 1) = 1$ remainder seat left, which naturally goes to $P_1$.
			If no value in the last table row is less than the budget, we continue to remove the smallest party, as it is impossible with the current $n$ (in combination with the support allocation) to provide $P_1$ with a remainder seat.
		
			Note that by pushing the least supported party below the threshold in each succeeding iteration, we eventually cover every possible number of
                        removed parties (i.e., of parties pushed below the threshold).
		\end{description}
		Thus, if
                apportionment bribery by LRM can be successful, a successful campaign will be found in one iteration of the while-loop.
		If no iteration of the while-loop returns \textsc{Yes} (i.e., when we cannot push further parties below the threshold), we can be sure that apportionment bribery by LRM cannot be successful, and thus we answer \textsc{No}.
	\end{proof}

	Algorithm~\ref{alg:bribery-constructive-lrm} can easily be adapted to decide the destructive variant of the problem as well.
	This is done in the same way as Algorithm~\ref{alg:bribery-constructive} was adapted for the destructive case using divisor sequence methods.
	In the destructive
        variant, we successively \emph{add} parties, i.e., bring them above the threshold by adding supporters (instead of removing parties by pushing them below the threshold as in Algorithm~\ref{alg:bribery-constructive-lrm}).

	\begin{theorem}\label{thm:destr-bribery-top-choice-lrm}
		LRM-\textsc{DAB} is in~$\p$.
	\end{theorem}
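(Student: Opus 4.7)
The plan is to adapt Algorithm~\ref{alg:bribery-constructive-lrm} in the way sketched immediately before the theorem statement, paralleling how the divisor-sequence constructive algorithm was turned around for Theorem~\ref{thm:destr-bribery-top-choice}. By the second part of Lemma~\ref{lem:moving-voters-LRM}, an optimal destructive strategy pulls as many voters out of $P_1$ as possible and redistributes them; so we first set $\supportalloc(P_1) \gets \max\{0, p_1 - \budget\}$. If this already pushes $P_1$ below $\threshold$, then $P_1$ is assigned no seats and we immediately answer \textsc{Yes}.

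Otherwise we enter a while-loop whose iterations correspond to bringing one further previously-below-threshold party above the threshold (the dual of the constructive algorithm's pushing parties below the threshold), thereby inflating $n$ and diluting $P_1$'s fair share. In each iteration we recompute $n$ and examine $P_1$'s lower quota. If $\lfloor \seats \cdot \supportalloc(P_1)/n \rfloor \leq \desiredseats - 1$, we answer \textsc{Yes}, since $P_1$ gains at most one remainder seat on top of its lower quota; if the lower quota exceeds $\desiredseats$ the current iteration cannot succeed and we continue. The delicate case is when the lower quota equals exactly $\desiredseats$, in which we must force $P_1$ to forgo the remainder seat it would otherwise receive. We handle it with the same dynamic-programming skeleton as in the constructive algorithm, but with $\phi$ and $\gamma$ reinterpreted: a pair $(x, \mathrm{cost})$ in $\gamma[P]$ now encodes the minimum number of votes that must be \emph{added} to $P$ (keeping $P$ above $\threshold$) so that $P$ obtains at least $x$ seats before $P_1$ would be awarded a remainder seat, where $\phi(y)$ is the lower quota of a party with $y$ votes, plus one whenever its remainder strictly exceeds that of $P_1$. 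The jumping points of $\phi$ are again located by binary search in $\mathcal{O}(\seats \log \budget)$ time, so $\gamma$ and the $m \times (\seats - \desiredseats)$ table can be filled in polynomial time by the same recurrence as in Algorithm~\ref{alg:bribery-constructive-lrm}.

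If some entry of the last row of the table is at most the remaining budget, we answer \textsc{Yes}, since it certifies that the other parties collectively occupy at least $\seats - \desiredseats$ of the first-awarded seats, capping $P_1$ at $\desiredseats$. Otherwise we greedily promote the cheapest not-yet-promoted below-threshold party by debiting $\threshold - \supportalloc(P_j)$ from $\budget$ and continue the loop; if the budget is exhausted or no further party is available, we answer \textsc{No}. The at-most-$m$ iterations together with polynomial per-iteration work yield overall polynomial running time. The main obstacle in a fully rigorous write-up is the correctness of the monotone ``promote cheapest first'' schedule: one needs an exchange argument, symmetric to the one implicit in the proof of Theorem~\ref{thrm:bribery-top-choice-lrm}, showing that any successful destructive campaign promoting $k$ parties can be transformed into one promoting the $k$ cheapest-to-promote parties without increasing cost, because substituting a cheaper promotion only enlarges the DP budget available in the delicate lower-quota-equals-$\desiredseats$ case while producing the same increase in $n$.
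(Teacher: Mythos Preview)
Your proposal is correct and takes essentially the same approach as the paper, which itself supplies only the one-paragraph sketch preceding the theorem (adapt Algorithm~\ref{alg:bribery-constructive-lrm} by removing votes from $P_1$ and successively bringing parties above the threshold rather than pushing them below). One small correction to the exchange argument you flag as the main obstacle: substituting a cheaper promotion does \emph{not} yield the same increase in $n$; a direct count gives $n_{\text{new}} = n_{\text{orig}} + \sum_{j\ \text{promoted}} p_j$ (assuming no votes are wasted on parties that stay below~$\threshold$), and since a cheaper-to-promote party has larger $p_j$ (its cost being $\threshold - p_j$), the substitution strictly \emph{increases} $n$ as well---so cheapest-first dominates simultaneously on remaining DP budget and on $n$, and the exchange argument is in fact easier than you suggest.
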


	 Note that both algorithms can also easily be adapted to actually compute a campaign (if one exists). 	 
	 
	 Similar algorithms also work for the cases of ensuring that~$P_1$ has
	 strictly more seats than any other party (but without requiring a
	 particular number of them), i.e., for \pluralityAGnp{}. However, in
	 this scenario the algorithms require more care. The issue is that
	 while our algorithms for \AGnp{} find ways to give $\desiredseats$ seats to
	 party $P_1$ (provided that this is possible), as a side effect they
	 may also increase the numbers of seats allocated to other parties. It
	 is also possible that to solve \pluralityAGnp{} it suffices to move
	 votes in such a way that a party with the most seats loses some of
	 them in favor of parties other than $P_1$. Finally, we need to pay
	 more attention to tie-breaking.

	\begin{theorem}\label{thm:winner-single-bribery}
	  For $\calR$ being either a divisor sequence method
	  or LRM, \pluralityAG{$\calR$} is in~$\p$.
	\end{theorem}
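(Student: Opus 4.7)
The plan is to reduce $\pluralityAG{\calR}$ to polynomially many instances of a constrained version of $\AG{\calR}$ that is solvable by the algorithms behind Theorems~\ref{thm:bribery-top-choice} and~\ref{thrm:bribery-top-choice-lrm}. Concretely, for every target seat count $w \in \{1, \ldots, \seats\}$ for $P_1$, I would check whether there is a bribery of cost at most $\budget$ that simultaneously makes $P_1$ receive at least $w$ seats and every other party receive at most $w-1$ seats. If this holds for some $w$, we accept; otherwise we reject. Since $w$ ranges over at most $\seats \leq |\votes|$ values, polynomial-time solvability of each sub-problem will yield polynomial-time solvability of $\pluralityAG{\calR}$.

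For each fixed $w$, I would adapt Algorithm~\ref{alg:bribery-constructive} (for divisor methods) or Algorithm~\ref{alg:bribery-constructive-lrm} (for LRM). The main modification is that in the auxiliary dictionary $\gamma[P]$ computed for each party $P \neq P_1$, only entries $(x, \mathrm{cost})$ with $x \leq w-1$ are kept, so that the per-party cap is enforced in the DP. The main table is then filled as before to certify that the combined cost of capping every other party at $w-1$ seats is within the budget while $P_1$ receives its $w$-th seat. To address the phenomenon noted just before the theorem, namely that it may be beneficial to move votes from the currently leading party to some party other than $P_1$, I would enrich $\gamma[P]$ with additional entries representing vote \emph{additions} to $P$ (with cost equal to the number of votes added) as long as the resulting seat count of $P$ stays at most $w-1$. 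The DP then jointly selects, for each non-$P_1$ party, whether to subtract from or add to its vote total; a standard flow-matching argument shows that the minimum total cost over all such choices equals the minimum number of votes that must be relocated.

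Tie-breaking is the final delicate point: since we require $P_1$ to have \emph{strictly} more seats than every competitor, I would adjust the strict inequality in the definition of $\phi$ depending on whether $P_1$'s index lies below or above the index of the opposing party under consideration, so that ties between fraction-list values are resolved consistently with the lexicographic tie-breaking rule. I expect the main obstacle to be justifying that the enriched $\gamma[P]$ captures all useful bribery actions, and in particular that no explicit bookkeeping of vote \emph{pairs} (source and destination) is needed in the DP; the justification parallels Lemma~\ref{lem:moving-voters} but is more subtle because, unlike in $\AGnp$, votes may now profitably leave a strong non-$P_1$ party for a weaker non-$P_1$ party. For LRM, the whole construction is further layered over the while-loop of Algorithm~\ref{alg:bribery-constructive-lrm}, which successively pushes parties below the electoral threshold; the enumeration over $w$ is interleaved with this loop without affecting the polynomial bound.
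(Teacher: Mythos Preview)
Your high-level plan---iterate over the target seat count $w$ for $P_1$ and, for each $w$, run a DP that caps every other party at $w-1$---is natural, but the concrete modification you propose does not implement the cap correctly. The entries $(x,\cost)$ in $\gamma[P]$ from Algorithm~\ref{alg:bribery-constructive} record the number of seats $P$ receives \emph{before $P_1$ is allocated its $w$-th seat}, i.e., the number of fractions of $P$ that lie strictly above $q/d_w$. This is in general strictly smaller than $P$'s \emph{final} seat count, because further seats are handed out after $P_1$'s $w$-th seat. Restricting $\gamma[P]$ to $x \le w-1$ therefore does not prevent $P$ from ending up with $w$ or more seats. A small D'Hondt example already breaks your test: with $\seats=5$ and final vote counts $(P_1,P_2,P_3)=(10,9,9)$, for $w=2$ you get $\phi(P_2)=\phi(P_3)=1\le w-1$ and total $2\le \seats-w=3$, so your DP accepts, yet the actual allocation is $(2,2,1)$ and $P_1$ is not the unique winner.

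The paper closes exactly this gap by guessing, in addition to $\ell$ (your $w$), the \emph{actual} cutoff value $x$ in the final division sequence that separates seat-earning fractions from the rest; concretely it guesses a pivot party $P_s$, its final seat number $\ell_s$, and the number $b'_s$ of votes taken from it, and also the last tie-favoured party $P_t$. With $x$ fixed, the function $\seats_i(\cdot)$ gives each party's \emph{true} final seat count as a function of its own vote total alone, which is what makes the per-party decomposition in the DP sound. Your proposal is missing this extra layer of guessing. As a secondary remark, the paper never needs the ``enriched $\gamma[P]$'' with vote additions to non-$P_1$ parties: it argues (implicitly via Lemma~\ref{lem:moving-voters} for divisor methods, and by a direct argument for LRM) that moving all $\budget$ votes to $P_1$ is without loss of generality, so the flow-matching step you anticipate is unnecessary once the pivot $x$ is guessed.
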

		\begin{proof}
		  Let $\supportalloc$ be our input support allocation,
                  let $\seats$ be the number of seats to be allocated,
                  and let $\budget$ by the bribing budget.
                  We first describe the algorithm for the case of divisor
			methods with a divisor sequence $d= (d_1, d_2, ..., d_{\seats})$ and then argue what modifications are needed for LRM.

			Our algorithm first guesses the number of seats $\desiredseats$ that $P_1$ is
			to end up with
                        for a yes-instance of the problem.
                        All the other parties should end up with fewer seats in that case.
                        Next, it guesses some party~$P_s$ with $\supportalloc(P_s) > 0$, the number of
			seats~$\desiredseats_s$ that~$P_s$ will end up with ($\desiredseats_s < \desiredseats$), and
			the number of votes $b'_s$ that we move from $P_s$ to $P_1$
			(throughout the proof, we assume that $s \neq 1$; the case of
			$s = 1$ can be handled analogously). The idea is that the value
			$x = \nicefrac{\left(\supportalloc(P_s)-b'_s\right)}{d_{\desiredseats_s}}$ will be the lowest in the
			final $k$-division sequence
                        that warrants a seat (up to
			tie-breaking). Finally, we guess party $P_t$ ($t \geq s$) such that
			for parties $P_1, \ldots, P_t$, the value $x$ in the $k$-division
			sequence suffices to get a seat, whereas for parties
			$P_{t+1}, \ldots, P_m$ only values larger than $x$ suffice for a
			seat (in other words, $P_t$ is the last party for which tie-breaking
			is favorable).  Our goal is to decide if it is possible to move $\budget$
			votes from the other parties to $P_1$, so that all the above
			intentions
                        can be realized.
			
			To this end, we need some additional notation. For an integer
			$i \in [m]$ and 
			support $\supportalloc(P)$,
			let $\seats_i(\supportalloc(P))$ be an
			integer~$\lambda$, such that $\nicefrac{\supportalloc(P)}{d_{\lambda+1}} < x$ and it holds that
			\begin{enumerate}
				\item if $i \leq t$ then  $\nicefrac{\supportalloc(P)}{d_{\lambda}} \geq x$, and
				\item if $i > t$ then $\nicefrac{\supportalloc(P)}{d_{\lambda}} > x$.
			\end{enumerate}
			Intuitively, $\seats_i(\supportalloc(P))$ gives the number of seats that party
			$P_i$ would get if it had a support of $\supportalloc(P)$, provided that the lowest value
			in the division sequence that warrants a seat is $x$ (modulo
			tie-breaking).
			
			Let us consider some $i \in [m]$,
                        $j \leq \seats$, and $b \in [\budget]_0$. The
			interpretation of these values is that we consider the first $i$
			parties, i.e., $P_1, \ldots, P_i$, who will be assigned $j$ seats,
			and $b$ is the number of votes that we need to move from parties
			$P_2, \ldots, P_i$ to~$P_1$.
                        We define a boolean function $f(i,j,b)$
			to be \emph{true} exactly if there is a sequence $(b_2, \ldots, b_i)$ of
			nonnegative integers (which, intuitively, give the numbers of votes
			moved from parties $P_2, \ldots, P_i$ to~$P_1$) such that:
			\begin{enumerate}
				\item $\sum_{r=2}^i b_r = b$ (i.e., we move exactly $b$ votes from $P_2, \ldots, P_i$ to~$P_1$).
				\item If $i \geq s$ then $b_s = b'_s$ and
				$\seats_s(\supportalloc(P_s)-b'_s) = \desiredseats_s$ (i.e., we move exactly $b'_s$ votes from
				$P_s$ to $P_1$, and $P_s$ gets exactly $\desiredseats_s$ seats),
				\item $\seats(p_1+\budget) \geq \desiredseats$ and for each
				$r \in [i] \setminus \{1\}$, $\seats_1(\supportalloc(P_i)-b_i) < \desiredseats$ (i.e., $P_1$
				gets at least $\desiredseats$ seats and every other party gets fewer
				seats).
				\item $\seats_1(p_1+\budget) + \sum_{r=2}^i \seats_r(\supportalloc(p_r)-b_r) = j$
				(i.e., altogether, $j$ seats are allocated to parties
				$P_1, \ldots, P_i$).
			\end{enumerate}
			
			Our main algorithm accepts exactly if $f(m,k,\budget)$ is \emph{true}
			which, indeed, means that $P_1$ gets at least $\desiredseats$ seats and all
			the other parties get fewer seats.
                        It remains to show how to compute the
			values $f(i,j,b)$ in polynomial time.
			
			Even though function $f$ may look somewhat complicated, it is quite
			easy to compute it using dynamic programming. First, we set
			$f(i,j,b) = \mathit{false}$ for all $j < 0$ and all $b < 0$.
			Second, we note that for $i=1$, computing $f(i,j,b)$ follows by
			directly implementing the definition.  Third, for values $i \geq 2$,
			we express $f(i,j,b)$ recursively.  If $i \geq 2$ but $i \neq s$,
			then
                        $f(i,j,b)$ equals:
			\[
			\bigvee_{\substack{b_i \in [\budget]_0,\\ t_{i-1} \in T(t)}}
			\begin{array}{l}
			  f(i-1,j-\seats_i(\supportalloc(p_i)-b_i), b-b_i)
				\land (\seats_i(\supportalloc(p_i)-b_i)  < \desiredseats).
			\end{array}
			\]
			Further,
                        $f(s,j,b)$ is equal to:
			\[
			f(i-1,j-\desiredseats_s, b-b'_s) \land
                        (\seats_s(\supportalloc(p_s)-b'_s)
                        = \desiredseats_s).
			\]
			
			Using these recursions and standard dynamic programming techniques,
			we obtain a polynomial-time algorithm for computing function~$f$.
			This completes the proof for the case of divisor sequence methods.\medskip
			
			For the LRM method, our algorithm is the same as for the
			divisor sequence methods, except of the following:
			\begin{enumerate}
				\item Instead of guessing the lowest value $x$ in the division
				sequence that warrants a seat, we guess the lowest value $x$ of a
				remainder that warrants a remainder seat (as in the divisor sequence methods
				case, we do so by guessing a party and the number of votes that we
				take away from it, and then we compute its remainder).
				\item We redefine $\seats_i(\supportalloc(P))$ to either be
				$\quota(\supportalloc(P),n,\seats) + [\remainder(\supportalloc(P),n,\seats) \geq y]$ (for parties $P_i$ for
				which the guessed tie-breaking is favorable) or to be
				$\quota(\supportalloc(P),n,\seats) + [\remainder(\supportalloc(P),n,\seats) > y]$ (for parties for which
				tie-breaking is not favorable).
			\end{enumerate}
			This completes the proof.
		\end{proof}

	To conclude, we note that our algorithms from
	Theorems~\ref{thm:bribery-top-choice}
	and~\ref{thrm:bribery-top-choice-lrm} can be adapted to the case where for
	each party $P_i$, $i \in \{2, \ldots, m\}$, there is a
	polynomially-bounded, nondecreasing function $\cost_i(x)$ (provided
	as part of the input), specifying the cost of moving $x$ votes from
	$P_i$ to $P_1$, and where we ask if there is a bribery of total cost
	at most $\budget$. This way we can, e.g., model the increasing difficulty of
	convincing larger groups of voters to vote for $P_1$.  To take such
	functions into account, in our algorithms we would modify the $f$
	function to not return a \emph{true/false} value, but the lowest cost
	of achieving a particular effect (where cost $\infty$ would correspond
	to impossibility).
	
	The destructive cases of winner bribery are somewhat simpler, as we must only find \emph{one} party $P_i \in \parties_{-P_1}$ that we can
        make receive
        more seats than our distinguished (i.e., in this case, despised) party~$P_1$.
	However, while a simple greedy algorithm suffices for divisor sequence methods, LRM needs more care. We thus provide two separate proofs.
	
	\begin{proposition}\label{prop:destr-winner-brib}
	  For each divisor sequence method~$\appmethod$,
          $\appmethod$-\textsc{DAWB} is in~$\p$.
	\end{proposition}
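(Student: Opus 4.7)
The plan is, for each candidate party $P_i \in \parties_{-P^*}$ that we hope will beat $P^*$, to decide in polynomial time whether there exists a bribery of cost at most $\budget$ after which $\seatalloc(P_i) > \seatalloc(P^*)$; the algorithm accepts as soon as such a $P_i$ is found and otherwise rejects. To carry out the per-$P_i$ check, I further enumerate, for each pair $(\desiredseats_i, \desiredseats_*) \in [\seats] \times [\seats-1]_0$ with $\desiredseats_i > \desiredseats_*$, the question of whether a bribery of cost at most $\budget$ can simultaneously achieve $\seatalloc(P_i) \geq \desiredseats_i$ and $\seatalloc(P^*) \leq \desiredseats_*$. Any successful destructive winner bribery is then captured by the right choice of witness $P_i$ with $\desiredseats_i = \seatalloc(P_i)$ and $\desiredseats_* = \seatalloc(P^*)$, so it suffices to solve this per-triple feasibility problem in polynomial time.

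For each fixed $(P_i, \desiredseats_i, \desiredseats_*)$, the feasibility check proceeds in the style of the proof of Theorem~\ref{thm:winner-single-bribery}. I guess a pivot party $P_s$, the number of seats $\desiredseats_s$ that $P_s$ finally attains, and the number of votes $b'_s$ bribed into or out of $P_s$; then $x = (\supportalloc(P_s) \pm b'_s)/d_{\desiredseats_s}$ is the lowest fraction warranting a seat after the bribery (up to lex tie-breaking). With $x$ fixed, the number of seats any party $P$ finally obtains becomes a function $\seats_P(\cdot)$ of its final support, where whether the relevant inequality is strict or non-strict depends on how tie-breaking resolves between $P$ and $P_i$. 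I then impose the constraints $\seats_{P_i}(\cdot) \geq \desiredseats_i$ and $\seats_{P^*}(\cdot) \leq \desiredseats_*$, together with the standard requirement that all $\seats$ seats are allocated, and minimise the total bribery cost over the remaining parties via a dynamic program of the same structure as the function $f$ in the proof of Theorem~\ref{thm:winner-single-bribery}.

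Complexity is polynomial: there are $O(m \cdot \seats^2)$ triples $(P_i, \desiredseats_i, \desiredseats_*)$, polynomially many guesses of $(P_s, \desiredseats_s, b'_s)$, and each DP runs in polynomial time. The main obstacle, as in Theorem~\ref{thm:winner-single-bribery}, lies in lex tie-breaking: since $P_i$ may not have the lowest index, parties with index smaller than $P_i$ win ties against $P_i$'s boundary fraction (and contribute a seat via the non-strict $\geq$), while parties with larger index lose them (via the strict $>$). Particular care must be taken when $P^*$ falls into the tie-favourable subset, since the cap $\seats_{P^*}(\cdot) \leq \desiredseats_*$ has to be consistent with this asymmetric accounting; but once the distinction is built into the definition of $\seats_P(\cdot)$, the DP goes through essentially unchanged.
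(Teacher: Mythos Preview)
Your approach would work, but it is considerably heavier than necessary. The paper's proof dispenses with all of the guessing and dynamic programming: for divisor sequence methods it suffices to move all $\budget$ votes from $P^*$ to the single party $P_2$ with the highest support in $\parties_{-P^*}$ and then check whether $\seatalloc(P_2)>\seatalloc(P^*)$ in the resulting election. Correctness follows directly from Lemma~\ref{lem:moving-voters}: part~(ii) says that taking all $\budget$ votes out of $P^*$ is optimal for minimising $\seatalloc(P^*)$, and part~(i) (applied with $P_2$ in the role of the preferred party) says that concentrating those votes on a single recipient is optimal for maximising that recipient's seat count; among all possible recipients, the one already closest to $P^*$ in support is the easiest to push past~$P^*$. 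So the whole algorithm is one vote transfer plus one evaluation of the apportionment method---no enumeration, no DP.

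Your route---enumerating a witness $P_i$, seat targets $(\desiredseats_i,\desiredseats_*)$, and a pivot, then running a Theorem~\ref{thm:winner-single-bribery}-style DP---mirrors the constructive \pluralityAGnp{} machinery and would generalise more gracefully (for instance to nonuniform bribery costs, where the greedy shortcut no longer applies). But for the plain unit-cost destructive problem it is overkill. One point you leave implicit and should spell out if you pursue this route: for the DP to keep the same one-dimensional budget coordinate as in Theorem~\ref{thm:winner-single-bribery}, you must first invoke Lemma~\ref{lem:moving-voters}(ii) to argue that an optimal bribery moves votes only \emph{out of} $P^*$, so that the per-party decision is simply how many votes that party receives. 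Without fixing this direction of flow, votes can move both ways and the DP state would need an extra dimension to enforce conservation.
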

	\begin{proof}
		First, we check if there is a party $P_i$ with $\seatalloc(P_i) > \seatalloc(P_1)$. If this is the case, we can answer YES immediately.
		Otherwise, let us assume, without loss of generality, that for each $P_i$ and~$P_j$, $i<j$, we have $\supportalloc(P_i)\geq\supportalloc(P_j)$ and that $P_1$ is one of the parties with the highest number of seats.  
		We now simply move all $\budget$ votes from $P_1$ to $P_2$, i.e., to a party with
                at most as much support as $P_1$ has.\footnote{If $\budget > \supportalloc(P_1)$, we move
                $\budget - \supportalloc(P_1)$ votes from parties $\parties_{-\{P_1,  P_2\}}$ to~$P_2$.}
		If this implies $\seatalloc(P_2) > \seatalloc(P_1)$, we answer YES; otherwise, we answer NO.
	
		It is easy to see that the algorithm runs in polynomial time. 
		Due to second part of Lemma~\ref{lem:moving-voters} we know that in the destructive case, the whole budget $\budget$ should be exhausted to move votes from $P_1$ to parties $P_i \in \parties_{-P_1}$.
		It remains to show that it is never better to give the $\budget$ votes from $P_1$ to another party $P_i \in \parties_{-\{P_1,  P_2\}}$ with less support than $P_2$ or to distribute the $\budget$ votes among several parties $P_i \in \parties_{-P_1}$.
		This follows from the first part of Lemma~\ref{lem:moving-voters} stating that given a budget $\budget$, the maximum number of additional seats for a party $P_i$ can always be achieved by moving $\budget$ votes from parties in $\parties_{-P_i}$ (here, as established above, from party $P_1$) to $P_i$. 
		Since in the destructive case of the winner problem we only need one party to beat $P_1$ (i.e., to have a higher number of seats), a party with the least distance
                $\supportalloc(P_1)-\supportalloc(P_i)$ to $P_1$ (here party $P_2$) will always be a party that needs the minimum number of votes to beat~$P_1$.
	\end{proof}
	
	\begin{proposition}\label{prop:destr-winner-brib-lrm}
		LRM-\textsc{DAWB} is in~$\p$.
	\end{proposition}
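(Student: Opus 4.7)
My plan is to iterate over all candidate parties $P_i \in \parties_{-P_1}$ that could end up with strictly more seats than $P_1$, and for each such $P_i$ compute in polynomial time the minimum bribery cost $c_i$ achieving $\seatalloc(P_i) > \seatalloc(P_1)$; the algorithm then accepts iff $\min_i c_i \leq \budget$. The main obstacle compared to the divisor-method case is that LRM outcomes depend on the total support $n$ of the parties above the threshold, so which parties remain above the threshold must be managed explicitly.

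For a fixed target $P_i$ I would mirror Algorithm~\ref{alg:bribery-constructive-lrm}: iterate over scenarios indexed by the number of parties pushed below the threshold, always selecting the smallest remaining parties (other than $P_1$ and $P_i$) first, which minimizes the setup cost. In each scenario, after recording the pushing cost, I compute the cheapest additional bribery needed to make $\seatalloc(P_i) > \seatalloc(P_1)$ within the induced instance of fixed $n$. I also treat separately the degenerate option of pushing $P_1$ itself below the threshold at cost $p_1 - \threshold + 1$, which drives $\seatalloc(P_1)$ to zero and is therefore successful as soon as at least one other party retains a positive seat count.

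The inner subroutine exploits the observation that, with the above-threshold set fixed, moving a single vote directly from $P_1$ to $P_i$ is the most efficient elementary operation: by Lemma~\ref{lem:moving-voters-LRM}(i), shrinking $\supportalloc(P_1)$ cannot gain $P_1$ a seat and, symmetrically, inflating $\supportalloc(P_i)$ cannot cost $P_i$ a seat, while $n$ and hence every other party's fair share remain unchanged. Consequently $\seatalloc(P_i) - \seatalloc(P_1)$ is non-decreasing in the number of such moves, and the minimum number of $P_1 \to P_i$ moves that first yields $\seatalloc(P_i) > \seatalloc(P_1)$ can be located by binary search over $\{0, \dots, p_1 - \threshold\}$.

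The total running time is polynomial: $O(m)$ targets, $O(m)$ scenarios per target, and one logarithmic binary search per scenario. The hardest step in the analysis is justifying that, within a fixed scenario, restricting to bribery actions consisting solely of $P_1 \to P_i$ moves is without loss of generality. I would establish this by a rerouting argument in the spirit of the case analysis in the proof of Lemma~\ref{lem:moving-voters-LRM}: any move sourced outside $P_1$ or targeted outside $P_i$ can be replaced by (or dropped in favor of) direct $P_1 \to P_i$ transfers without increasing the bribery size or worsening the seat comparison between $P_i$ and $P_1$, because such a rerouting only further decreases $\supportalloc(P_1)$ or further increases $\supportalloc(P_i)$ while keeping $n$ and all other supports fixed.
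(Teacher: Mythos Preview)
Your plan is sound and parallels the paper's proof closely. Both arguments fix the set of above-threshold parties by iterating over how many (smallest-first) parties are pushed below~$\threshold$, and then argue that within each such scenario one may restrict to direct transfers from the despised party to a single target. The paper additionally proves up front that the second-strongest party~$P_2$ is always an optimal target, which lets it avoid your outer loop over all~$P_i$; your brute-force iteration over targets is of course also polynomial and sidesteps that analytical step.

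There is, however, a genuine gap in your rerouting justification. You assert that replacing a move sourced outside~$P_1$ or targeted outside~$P_i$ by a direct $P_1\!\to\!P_i$ transfer ``keeps $n$ and all other supports fixed.'' This is false: if the original bribery contains a $P_j\!\to\!P_i$ move and you reroute it to $P_1\!\to\!P_i$, then $P_j$'s support \emph{increases} relative to the original bribed profile. Under LRM this matters, because $P_j$'s remainder changes and could in principle displace~$P_i$ from a remainder seat. Lemma~\ref{lem:moving-voters-LRM} only tells you that $P_1$ cannot gain and $P_i$ cannot lose when \emph{their own} supports move monotonically with everything else frozen; it does not directly cover the three-way swap you are performing. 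The paper handles exactly this point by comparing fair-share differences $(\supportalloc(P_2)+K')/n'$ across all the relevant configurations (votes taken from below-threshold parties, from above-threshold parties kept above, from parties driven to zero support, and mixtures thereof) and showing each is dominated by the pure $P_1\!\to\!P_2$ transfer. You will need a comparable case analysis; the single-sentence rerouting claim as stated does not close the argument.
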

\begin{proof}
	We prove this proposition by first analyzing where the votes should be moved to and then checking where we should take the votes from.
	Without loss of generality, assume that for each $P_i$ and $P_j$, $i<j$, we have $\supportalloc(P_i)\geq\supportalloc(P_j)$ and that $P_1$ is one of the parties with the highest number of seats and our distinguished party.
	
	First, we show that transferring all possible votes to $P_2$ gives us the greatest chance to make $P_1$ not a winner, i.e., not have the greatest number of seats.
	Let us give $\budget'$ votes to $P_2$ and $\budget''$ votes to
	some $P_t$ for $t>2$ with $p_t\ge\tau$, where $\budget'+\budget''\le \budget$, without changing the denominator $n=\sum_{i=1}^{|\mathcal{P}|}\supportalloc(P_i)$.
	Let us assume that after this bribery
	\[
	\frac{\supportalloc(P_2)+\budget'}{n} < \frac{\supportalloc(P_t)+\budget''}{n}.
	\]
	But then,
	\[
	\frac{\supportalloc(P_t)+\budget''}{n} \le \frac{\supportalloc(P_2)+\budget''}{n} \le \frac{\supportalloc(P_2)+\budget'+\budget''}{n},
	\]
	so if we transferred all the votes directly to~$P_2$, $P_2$ will not receive
	fewer seats than~$P_t$.
	If we bribe voters also voting for parties below the threshold~$\threshold$ to vote for parties above~$\threshold$ instead, the denominator increases by some value $L$ and the situation will be analogous to the previous one. 
	That is, we can transfer the votes directly to~$P_2$: For any $P_t$ with greater new support than $P_2$ after getting additional $\budget''$ votes, it holds that
	\[
	\frac{\supportalloc(P_t)+\budget''}{n+L} \le \frac{\supportalloc(P_2)+\budget''}{n+L} \le \frac{\supportalloc(P_2)+\budget'+\budget''}{n+L}.
	\]
	If we bribe some of the voters to vote for some parties below the threshold, such that after the bribery their support is positive, then for $\budget'''<\budget'$ and $L'>L$, it holds that
	\[
	\frac{\supportalloc(P_2)+\budget'-\supportalloc(P_1)+\budget''}{n+L} - \frac{\supportalloc(P_2)+\budget'''-\supportalloc(P_1)+\budget''}{n+L'} > 0.
	\]
	Thus also in this case, it is not better than transferring the votes directly to $P_2$.
	The last case is when we bribe voters voting for parties above the threshold in a way that there exists at least one party whose new support is zero---let us call them~$P_t$. 
	Hence, the denominator decreases by some value $D$.
	If we transfer at least one vote not to~$P_2$, we have that if
	\[
	\frac{\supportalloc(P_s)+\budget'}{n-D}\ge \frac{\supportalloc(P_2)+\budget''}{n-D}
	\]
	for any $P_s\notin \{P_1,P_2\}$, then it holds that
	\[
	\frac{\supportalloc(P_s)+\budget'}{n-D} \le \frac{\supportalloc(P_2)+\budget'}{n-D}\le \frac{\supportalloc(P_2)+\budget'+\budget''}{n-D}.
	\]
	So again,
	we can instead transfer the votes directly to~$P_2$.
	Now let us consider a mix of the bribes above: For some $K',K''$ with $0 \le K',K'' < K$, some $L, D\ge 0$, and some $P_t\notin \{P_1,P_2\}$  with $p_t \ge \threshold$,
	\[
	\frac{\supportalloc(P_t)+\budget''}{n+L-D} \le \frac{\supportalloc(P_2)+\budget''}{n+L-D} \le \frac{\supportalloc(P_2)+\budget'+\budget''}{n+L-D}, 
	\]
	i.e., we can again instead transfer the votes directly to~$P_2$.
	Finally, note that for $K' > K''$, where $K'-K''$ votes were moved from the parties with positive supports to some parties below the threshold, such that they stay below the threshold, and for $n'$ being the denominator after a bribery action when $P_2$ received the additional \budget' votes, it holds that 
	\[
	\frac{\supportalloc(P_2)+\budget'}{n'} \ge \frac{\supportalloc(P_2)+\budget''}{n'-(K'-K'')}
	\]
	if and only if
	\begin{align*}
		n'\supportalloc(P_2) + n'K' - K'(K'-K'') - 	n'\supportalloc(P_2)- n'K'' & \ge 0,
	\end{align*}
which in turn is equivalent to
	\begin{align*}
		(n'- K')(K'-K'')  & \ge 0.
	\end{align*}
	That is, the initial inequality holds if $n'\ge K'$, which is always true (also note that for any support for $P_1$, its $\fairshare(P_1)$ is smaller when the denominator is larger).
	
	Let us now analyze whether we should
	bribe all $\budget$ voters supporting $P_1$ if $\supportalloc(P_1)\ge \budget$.
	If we bribe all of them, the denominator will not change
	(unless it makes $P_1$ have its support equal to zero, but then it is easy to check if $P_2$ receives more seats than the distinguished party).
	Let us assume that in the cases we consider from now on, the situations where fewer than $\budget$~votes are moved from $P_1$ are successful briberies.
	It is easy to see that when $P_2$ would get more votes than~$P_1$, the distance between the new number of votes for $P_2$ and the support of $P_1$ (if we take from $P_1$ some $\budget'''<\budget$ votes) is smaller than the distance between the former and $\supportalloc(P_1)-\budget$ while not changing~$n$.
	Therefore, in the latter situation, $P_2$ has better chances to get more seats than~$P_1$.
	If we bribe also some $\budget'<\budget$ voters voting for parties below the threshold and/or the parties above the threshold but without decreasing their support to zero instead of those voting for~$P_1$, the denominator will increase, and for $\budget''<\budget$ and $\budget'+\budget''\le \budget$, it holds that
	\[
	\frac{\supportalloc(P_2)+\budget-\supportalloc(P_1)+\budget}{n} - \frac{\supportalloc(P_2)+\budget-\supportalloc(P_1)+\budget''}{n+\budget'} > 0.
	\]
	That is, this strategy is not better than taking more votes from~$P_1$.
	We could also transfer votes to $P_2$ from parties with support at least the threshold in such a way that their new support is zero, i.e., the denominator decreases. Since $P_2$ cannot have a smaller support than $P_1$ to be able to receive more seats than $P_1$, for $\budget''< \budget$ and for some $P_t\notin\{P_1,P_2\}$ (or a group of such parties), it holds that  
	\[
	\frac{\supportalloc(P_2)+\budget}{n-\supportalloc(P_t)+K-K''} - \frac{\supportalloc(P_1)-\budget''}{n-\supportalloc(P_t)+K-K''} \geq 0.
	\]
	That is, if $P_2$ should be the winner, then $\supportalloc(P_2)+\budget\ge \supportalloc(P_1)-\budget''$. Moreover, this distance is also greatest if $K''$ is as large as possible and/or the support of $P_t$ is as large as possible. For each coalition $P_t$ whose support is to be reduced to $0$, it is sufficient to take $p_t-\threshold+1$ votes from it: For any $L<L'$, where both are enough to decrease $P_t$'s support to zero, it holds that
	\[
	\frac{\supportalloc(P_2)+\budget-\supportalloc(P_1)+K-L}{n-\supportalloc(P_t)+L} > \frac{\supportalloc(P_2)+\budget-\supportalloc(P_1)+K-L'}{n-\supportalloc(P_t)+L'}.
	\]
	Therefore, it is enough to check the bribery from $m$ parties with the smallest support reaching the threshold, where we iterate $m$ from $1$ until we reach our budget.
	
	In the case when $\supportalloc(P_1)<\budget$, we can decrease $P_1$'s support to zero, i.e., it will not receive any seats, while there exists a party with at least one seat.
	Therefore, it can be checked in polynomial time whether we can bribe up to $\budget$ voters to give $P_2$ more seats than~$P_1$, i.e., LRM-\textsc{DAWB} is in~$\p$.
\end{proof}

	\subsection{Multi-District Case}\label{subsec:multi-district}
	
	It turns out that we can use the single-district algorithms to solve
	the \multiBribery{$\calR$} problem. Briefly put, we can compute the
	cost of getting each possible number of seats for the
        distinguished party $P^*$ in each
	district separately, and then solve a Knapsack-like problem to find
	out if by moving at most $B$ votes we can obtain~$\desiredseats$ seats for~$P^*$
	(which we can do in polynomial time due to our assumption that $B$ is 
	given in unary).
	
	\begin{proposition}\label{prop:divisor_lrm_multi_bribery}
		For $\calR$ being FPTP, a divisor sequence method, or LRM, \multiBribery{$\calR$} is in~$\p$.
	\end{proposition}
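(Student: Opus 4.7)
The plan is to decompose the multi-district problem by first computing, for every district and every feasible seat count for $P^*$ in that district, the minimum cost of achieving it, and then to combine these values by a knapsack-style dynamic program over the districts.

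Concretely, for each district $i \in [q]$ and each target $s_i \in [\seats_i]_0$, let $c_i(s_i)$ denote the minimum number of votes in $\votes_i$ that must be modified so that $P^*$ obtains at least $s_i$ seats in district $i$ under $\calR$ with threshold $\threshold_i$; set $c_i(s_i) = \infty$ when $P^* \notin \parties_i$ and $s_i \geq 1$, and $c_i(0) = 0$ always. To compute $c_i(s_i)$ I would invoke the polynomial-time single-district decision algorithm from Proposition~\ref{prop:fptp} (for FPTP), Theorem~\ref{thm:bribery-top-choice} (for divisor sequence methods), or Theorem~\ref{thrm:bribery-top-choice-lrm} (for LRM), and either sweep the local budget from $0$ to $|\votes_i|$ or binary-search for the smallest value that yields a yes-instance. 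Since the number of votes is given in unary, iterating over all pairs $(i, s_i)$ keeps this stage polynomial.

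Next I would fill a two-dimensional table $T[i][s]$ whose entry stores the minimum total number of vote moves in $\votes_1, \dots, \votes_i$ needed for $P^*$ to accrue at least $s$ seats across these districts. With base cases $T[0][0] = 0$ and $T[0][s] = \infty$ for $s > 0$, the transition is
\[
T[i][s] \;=\; \min_{s_i \in [\seats_i]_0} \bigl(T[i-1][\max\{0,\, s - s_i\}] + c_i(s_i)\bigr),
\]
and the algorithm accepts iff $T[q][\desiredseats] \leq \budget$. Correctness of this decomposition rests on the fact that districts are tallied independently: the seats $P^*$ receives in district $i$ depend only on $\votes_i$, while the budget and the total seat count are additive across districts. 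A standard exchange argument then shows that any optimal multi-district campaign decomposes into per-district briberies that realize some profile $(s_1^*, \ldots, s_q^*)$ at cost $\sum_i c_i(s_i^*)$, and conversely any profile recorded by the table can be concatenated into a valid campaign.

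The main point to verify carefully is polynomial running time. The table has at most $q \cdot (\seats+1)$ cells, and each cell is evaluated as a minimum over at most $\seats_i + 1$ alternatives. Since $\seats = \sum_i \seats_i \leq \sum_i |\votes_i|$ and all vote counts are encoded in unary, both the table size and the per-cell work are polynomial in the input, and the values $c_i(s_i)$ inherit polynomial computation time from the single-district algorithms. I do not anticipate any real obstacle beyond bookkeeping: once the single-district results of Section~\ref{subsec:top-choice-single-district} are in hand, the multi-district constructive case reduces to a routine Knapsack-style dynamic program whose tractability is unlocked precisely by the unary encoding of the number of votes.
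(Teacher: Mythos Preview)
Your proposal is correct and follows essentially the same two-stage strategy as the paper: first compute per-district minimum costs $c_i(s_i)$ using the single-district algorithms, then combine them via a Knapsack-style dynamic program over the districts. The only cosmetic difference is that you store minimum costs in a two-dimensional table $T[i][s]$, whereas the paper uses a three-dimensional boolean function $F(\text{district},\text{gain},\text{cost})$; both formulations are standard and equivalent, and your version is arguably slightly cleaner.
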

	\begin{proof}
		\newcommand{\function}{\ensuremath{F}}
		\newcommand{\currcost}{\ensuremath{c}}
		\newcommand{\currdist}{\ensuremath{j}}
		\newcommand{\currgain}{\ensuremath{g}}
		Consider an instance of \multiBribery{$\calR$} with $\delta$ districts, where
		we can move at most $\budget$ voters to obtain $\desiredseats$~seats for the distinguished party $P^*$.
		Further, for each $j \in [\delta]$ let $\seats_j$ be the number of seats that are
		assigned within the $j$-th district. For ease of presentation and without the
		loss of generality, we assume an arbitrary but fixed numbering of the
		districts from~$1$ to~$\delta$.
		
		In general, our algorithm proceeds in the following two steps:
		\begin{enumerate}
			\item For each $j \in [\delta]$ and each $s \in [\seats_j]_0$, we compute
			the smallest number of votes that we need to move in the $j$-th
			district to ensure that $P^*$ gets (at least) $s$ seats in this
			district. We denote this value by $\cost(j,s)$.\footnote{Following the definition of FPTP in the multidistrict setting, in each district $\delta_j$, all seats $\seats_j$ are assigned to the strongest party within~$\delta_j$. Thus the only possible values for $s$ under FPTP are zero and~$\seats_j$. Note that for all studied apportionment methods, we only consider the larger value of $s$ if there is the same value of $\cost(j,s)$ for more than one~$s$.}
			
			\item We check if there is a bribery that gives $s_1, s_2, \ldots, s_\delta$
			additional seats for~$P^*$ in the respective $\delta$ districts such that
			\begin{itemize}
				\item  [(a)] $\sum_{j=1}^{\delta} s_j \geq \desiredseats$ ($P^*$ receives at least $\desiredseats$ seats in all districts in total), and
				\item  [(b)] $\sum_{j=1}^\delta	\cost(j,s_j) \leq \budget$ (at most $\budget$ votes are moved).
			\end{itemize}
		 	We accept if such a bribery exists and we reject otherwise.
		\end{enumerate}
		
		The first step is clearly polynomial-time solvable by
		our~\Cref{thm:bribery-top-choice} or~\Cref{thrm:bribery-top-choice-lrm}. Thus we proceed with showing a
		polynomial-time dynamic program for the second step. The program is in spirit
		similar to the one for the \textsc{Knapsack} problem, but where the items
		belong to disjoint groups and we have to take exactly one item from each
		group.
		
		We begin with defining a boolean function~$\function(\currdist, \currgain,
		\currcost)$ that outputs \emph{true} if within the first $\currdist \in
		[\delta]_0$ districts there is a bribery of cost at most~$\currcost{} \in
		[\budget]_0$ that lets~$P^*$ get at least~$\currgain{} \in [\desiredseats]_0$ additional
		seats; the output is \emph{false} in the opposite case. Then, a desired
		bribery exists if~$\function(\delta, \desiredseats, \budget)$ is \emph{true}. 
		
		For technical reasons (and with a slight abuse of notation), we assume that
		$\function(0, \currgain, \currcost)$ is \emph{false} if~$\currgain > 0$ ($P^*$
		cannot have any additional seats in a bribery including zero districts) or
		if~$\currcost < 0$. Conversely, the value of~$\function$ in question is
		\emph{true} for $\currgain \leq 0$ (naturally, assuming that $\currcost \geq
		0$).

		Having the above interpretation and assumptions, we give the following
		recursive formula to compute the values of~$\function$:
		$$
		\function(\currdist, \currgain, \currcost) = \bigvee_{\currgain'
			\in [\seats_j]_0} \function(\currdist-1, \currgain-\currgain', \currcost -
		\cost(\currdist, \currgain')).
		$$
		Essentially, the above formula says that to check whether we can
		get $\currgain$~additional seats for~$P^*$ in the first~$\currdist$~districts
		for at most~$\budget$~units of budget we proceed as follows. Assuming that we know
		values of $\function$ for the first~$\currdist-1$~districts, we try each possible
		number~$\currgain'$ of additional seats that we can get in the $\currdist$-th
		district, which costs~$\cost(\currdist, \currgain)$. For each trial, we verify
		whether we can get $\currgain-\currgain'$~additional seats for the price of at
		most~$\budget-\cost(\currdist, \currgain)$ in the first~$\currdist-1$~districts. If
		the verification is positive in at least one of such trials, then we assign
		\emph{true} to the being-computed value of~$F$; otherwise, we let it
		be~\emph{false}.
		
		The correctness of our approach clearly follows from the fact that it
		exhaustively considers all possible cases potentially leading to a desired
		bribery. We compute all $O(\budget\delta\desiredseats)$~possible values of~$F$ through the
		standard dynamic programming approach, spending $O(\budget)$~time to compute 
		each of them. Thus, by our convention that the number of voters is encoded in
		unary, it follows that our dynamic program runs in polynomial time. Hence,
		so applies to our whole two-step algorithm.
		\let\function\undefined
		\let\currcost\undefined
		\let\currdist\undefined
		\let\currgain\undefined
	\end{proof}

	The same approach can be used to solve the destructive cases. Here, we define $\cost(j,s)$ to be the smallest number of votes that we need to move in the $j$-th
	district to ensure that $P^*$ gets \emph{at most} $s$ seats in this
	district and we check for a sequence $s_1, \ldots, s_\delta$ that satisfies $\sum_{j=1}^{\delta} s_j \leq \desiredseats$ and $\sum_{j=1}^\delta \cost(j,s_j) \leq \budget$.

	\begin{proposition}\label{prop:destr_divisor_lrm_multi_bribery}
	  For $\calR$ being FPTP, a divisor sequence method, or LRM, $\calR$-\textsc{DMAB} is in~$\p$.
	\end{proposition}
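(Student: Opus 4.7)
The plan is to mirror the two-step strategy used in the proof of Proposition~\ref{prop:divisor_lrm_multi_bribery}, swapping each ``at least'' condition for the corresponding ``at most'' condition. First, for every district $j \in [\delta]$ and every target $s \in [\seats_j]_0$, I would compute $\cost(j,s)$, the minimum number of votes that must be moved inside district $j$ to guarantee that the despised party $P^*$ ends up with at most $s$ seats there (if no such bribery exists, set $\cost(j,s) = \infty$). Each single-district value $\cost(j,s)$ can be obtained in polynomial time by combining the per-district decision procedures from Proposition~\ref{prop:fptpdestructive}, Theorem~\ref{thm:destr-bribery-top-choice}, and Theorem~\ref{thm:destr-bribery-top-choice-lrm} with a binary search on the budget (or, equivalently, by running those procedures for every admissible budget value, which is polynomial because $\budget$ is encoded in unary).

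In the second step, I would decide whether there exist district-local seat targets $s_1, \dots, s_\delta$ such that $\sum_{j=1}^{\delta} s_j \leq \desiredseats$ and $\sum_{j=1}^{\delta}\cost(j, s_j) \leq \budget$. For this, I use the same knapsack-like dynamic program as in the constructive case, but adapted to the destructive semantics. Define a boolean function $F(j, g, c)$ that is \emph{true} iff within the first $j$ districts there is a bribery of cost at most $c$ ensuring that $P^*$ collects no more than $g$ seats in these districts. Adopting the natural boundary conventions $F(0, g, c) = \mathit{true}$ whenever $g \geq 0$ and $c \geq 0$, and $F(\cdot, \cdot, c) = \mathit{false}$ when $c < 0$, the recursion is
\[
F(j, g, c) = \bigvee_{g' \in [\seats_j]_0} F\bigl(j-1,\ g - g',\ c - \cost(j, g')\bigr),
\]
and the original instance is accepted iff $F(\delta, \desiredseats, \budget)$ is \emph{true}. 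The soundness comes from exhaustively enumerating the possible local seat counts for $P^*$ in district $j$; the completeness comes from the observation that any destructive bribery induces exactly such a sequence of per-district outcomes.

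There are $O(\delta \cdot \desiredseats \cdot \budget)$ table entries, each computable in $O(\seats_j) = O(\seats)$ time, and the unary encoding of $\budget$ together with $\desiredseats \leq \seats \leq \sum_j |\votes_j|$ keeps the whole procedure polynomial in the input size. I do not foresee a real obstacle here: the only subtlety is bookkeeping in the base cases (in particular, realizing that the destructive counterpart reverses the direction of the budget/seat inequalities and that $F(0, g, c)$ must be \emph{true} for any nonnegative $g$, in contrast to the constructive version, since zero additional seats for $P^*$ trivially satisfies any ``at most'' bound). With this adjustment, the correctness and polynomial running time follow exactly as in Proposition~\ref{prop:divisor_lrm_multi_bribery}.
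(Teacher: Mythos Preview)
Your proposal is correct and follows essentially the same approach as the paper: the paper's own proof is a two-sentence sketch that redefines $\cost(j,s)$ as the minimum number of vote moves in district $j$ to ensure $P^*$ gets at most $s$ seats there, and then checks for a sequence $s_1,\ldots,s_\delta$ with $\sum_j s_j \leq \desiredseats$ and $\sum_j \cost(j,s_j)\leq\budget$, which is exactly what you do (with more detail on the DP and the base cases).
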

	
	We now move on to the winner problems for multiple districts.
  Unfortunately, \multiBriberyAG{$\calR$} is $\np$-hard for all considered
  apportionment methods, namely for divisor sequence methods, LRM, and even FPTP.
  Intuitively, this is so because the problem gives us flexibility to require
  that all parties from a given set lose one seat each, and form districts so
  that in each of them only subsets of these parties can lose seats.

	\begin{theorem}\label{thm:pAG-np-h} 
	  Let $\calR$ be either a divisor sequence method with sequence $d_1, d_2, \ldots$ or LRM. \multiPluralityAG{$\calR$} is~$\np$-complete, even if the number of seats per district as well as the number of votes transferred per district is at most three and the total number of votes per district is a constant depending on $d_1$ and~$d_2$.
	\end{theorem}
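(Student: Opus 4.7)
The plan is to establish membership in NP by the standard guess-and-verify procedure: nondeterministically choose which at most $\budget$ votes to modify, re-run the apportionment in each district in polynomial time, and check whether $P^*$ strictly beats every other party in total seats. The bulk of the work is in the hardness direction, which I would prove by a reduction from \cubicVertexCover. The cubic structure is natural here because each vertex appears in exactly three edges, which dovetails with the $3$-seat and $3$-vote-per-district restrictions in the theorem statement.

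Given an instance $(G=(V,E),k)$ of \cubicVertexCover, I would introduce the distinguished party $P^*$, one vertex party $P_v$ for each $v\in V$, and a small constant number of dummy parties. For each edge $e=\{u,v\}$ I would create one district $D_e$ containing (essentially) $P^*$, $P_u$, $P_v$, and a dummy, with $3$ seats and a constant number of votes tuned to $d_1$ and $d_2$. The intended semantics of the gadget is: \emph{without} any bribery in $D_e$ both $P_u$ and $P_v$ receive a seat while $P^*$ gets none; \emph{with} a bribery of at most three votes inside $D_e$, one can pick exactly one of $\{u,v\}$ whose party loses its seat to $P^*$. This is where the bound on the first two divisors enters: for a divisor method the constant support is chosen so that the top-$3$ fractions $p/d_1$ and $p/d_2$ are separated by exactly the margin that a three-vote shift can cross, and for LRM the constant is chosen analogously so that a three-vote shift moves one party in or out of the lower-quota / largest-remainder threshold. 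A handful of small auxiliary ``seed'' districts give $P^*$ a controlled baseline of seats; the global budget $\budget$ is set so that a successful campaign must, for every edge, pick one endpoint to sacrifice, and can sacrifice endpoints from at most $k$ distinct vertices.

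The correspondence would then read: a successful winner campaign exists iff the set $S\subseteq V$ of vertices whose parties lose at least one seat is a vertex cover of $G$ of size at most $k$. In one direction, from a vertex cover $S$ of size $\le k$ one picks, for each edge $e$, the endpoint in $S\cap e$, performs the local $3$-vote bribe in $D_e$ targeting that endpoint, and verifies that $P^*$ gains $|E|$ seats while each $P_v$ with $v\notin S$ keeps $3$ seats and each $P_v$ with $v\in S$ drops strictly below $P^*$'s count. In the other direction, if no vertex cover of size $\le k$ exists, any budget-respecting bribery leaves some edge $e=\{u,v\}$ with neither endpoint sacrificed in $D_e$, so both $P_u$ and $P_v$ retain their seat there, and a counting argument (using cubicity and the baseline) shows that at least one of them ties or beats $P^*$ in total. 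The same skeleton is used for both divisor sequence methods and LRM, only the gadget constants and the winner-threshold arithmetic change.

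The main obstacle will be the local gadget design. I have to pin down the constant support values so that, under the targeted apportionment method, the only bribery patterns that shift any seat at all inside $D_e$ are exactly the two intended ones (sacrifice $u$ or sacrifice $v$), and so that cheaper partial bribes, or attempts to boost a dummy or another $P_w$, cannot provide $P^*$ with additional seats or accidentally lift some $P_v$ above $P^*$. Getting this right simultaneously for all divisor sequences (parameterized by arbitrary $d_1<d_2$) and for LRM — while keeping the per-district vote count constant and both the seat and transfer bounds at three — is the delicate part; everything else is bookkeeping around the vertex-cover correspondence.
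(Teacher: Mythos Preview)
Your choice of \cubicVertexCover{} matches the paper, but you have the roles of vertices and edges reversed, and that inversion breaks the reduction. The bribery budget bounds the number of \emph{districts} in which you act, so for a vertex-cover encoding the districts must correspond to vertices and the parties that must be suppressed must correspond to edges---not the other way around. In your construction nothing ties the budget to~$|S|$: if $\budget\ge 3|E|$ you can bribe every edge district, sacrifice an arbitrary endpoint in each, and finish with $P^*$ holding baseline${}+|E|$ seats versus at most $3$ for each~$P_v$, so $P^*$ wins regardless of whether $G$ has a small vertex cover. If instead $\budget<3|E|$, the budget limits how many edge districts are touched, which is again a constraint on \emph{edges}, not on the number of distinct sacrificed vertices. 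Your ``other direction'' claim---that the non-existence of a size-$k$ vertex cover forces some edge district to be left with neither endpoint sacrificed---does not follow from any budget constraint you have introduced, and no arrangement of seed districts repairs this: the quantity ``number of distinct vertices whose party loses at least one seat'' is simply not what the budget is counting.

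The paper's reduction swaps the roles: one \emph{edge party} $P(e)$ per edge and one $3$-seat \emph{vertex district} $d(v)$ per vertex. Auxiliary dummy and blocker districts give each edge party exactly $k$ seats and a blocker party $k-1$ seats, while $P^*$ starts with none. In $d(v)$, party $P^*$ has $7$ votes and the three incident edge parties (together with ten dummy parties) have $10$ votes each, so under D'Hondt the three edge parties take the seats; moving one vote from each of them to $P^*$ (cost~$3$) gives $P^*$ one seat and strips all three incident edge parties of their seat in that district. With budget $\budget=3k$ one can afford exactly $k$ vertex districts; $P^*$ then reaches $k$ seats and beats everyone iff every edge party drops to at most $k-1$, i.e., iff each edge has an endpoint among the $k$ bribed vertices---precisely a size-$k$ vertex cover. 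The constants $7$ and $10$ (and the count of dummy parties, which blocks $P^*$ from getting two seats in one district) are where the dependence on $d_1,d_2$ enters for general divisor sequences, with an analogous tuning for LRM.
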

	\begin{proof}
		\newcommand{\cGraph}{\ensuremath{G}}
		\newcommand{\cVertices}{\ensuremath{V}}
		\newcommand{\cVertex}{\ensuremath{v}}
		\newcommand{\cEdges}{\ensuremath{E}}
		\newcommand{\cEdge}{\ensuremath{e}}
		\newcommand{\cVCSize}{\ensuremath{k}}
		\newcommand{\partyOf}[1]{\ensuremath{P(#1)}}
		\newcommand{\districtOf}[1]{\ensuremath{d(#1)}}
		\newcommand{\dummiesOf}[1]{\ensuremath{D(#1)}}
		\newcommand{\preferredParty}{\ensuremath{P_1}}
		\newcommand{\partyNumber}[1]{\ensuremath{P_{#1}}}
		\newcommand{\solutionBribery}{\ensuremath{S}}
		\newcommand{\solutionScore}{\ensuremath{s}}
		\newcommand{\solutionElection}{\ensuremath{E(\solutionBribery)}} 
		Membership in $\np$ is obvious as divisor sequence methods and LRM are polynomial-time computable.
		
		We show \nphardness{} by a reduction from the~\cubicVertexCover{}
		problem.  
		Below we provide a reduction for the case of the D'Hondt
		apportionment method. %

		\noindent \emph{Construction.}  Let $(\cGraph,k)$ be our
		input instance of~\cubicVertexCover{}, where
                $\cGraph = (\cVertices,\cEdges)$ is a graph and $k$ is an
                integer (without loss of generality, we assume that $k \geq 3$).
                Further, let
		$\cVertices=\{v_1,\dots,v_r\}$ and $\cEdges=\{e_1,\dots,e_t\}$.  We
		form an instance of \multiPluralityAG{D'Hondt} as follows.  First,
		we set our bribing budget to be $\budget = 3k$. Next, we create
		$1+t+10r+1$ parties according to their tie-breaking order, i.e., the firstly defined party is also chosen first by the tie-breaking rule, and so on:
		\begin{enumerate}
			\item Let $P^*= \preferredParty{}$ be our preferred party.
			\item For each edge~$\cEdge_i$, %
			we form an \emph{edge party}~$\partyOf{\cEdge_i}=\partyNumber{1+i}$
			\item For each vertex~$\cVertex_j$ %
			we form $10$~\emph{dummy parties}, so for each $\desiredseats \in [10]$ we
			let $\partyOf{\cVertex_j,\desiredseats}=\partyNumber{1+t+10(j-1)+\desiredseats}$ be
			the $\desiredseats$-th dummy party for vertex $v_j$.
			\item We also form one \emph{blocker}
			party~$\partyOf{b}=\partyNumber{1+t+10r+1}$.
		\end{enumerate}
		Then we form $r+t(k-2)+(k-1)$ districts, where each district has a threshold of zero:
		\begin{enumerate}
			\item For every vertex~$\cVertex_j$ we form a~\emph{vertex district}
			\districtOf{\cVertex_j} with three seats to allocate, where $P_1$
			gets $7$ votes, each dummy party associated with $\cVertex_j$ gets
			$10$ votes, and for each edge $\cEdge_i$ %
			incident to
			$\cVertex_j$, party \partyOf{\cEdge_i} %
			gets $10$ votes.
			
			\item We form $t(k-2)$ dummy districts, each with a single seat to
			allocate.  For each edge party there are exactly $k-2$ dummy
			districts where this party gets $2\budget+1$ votes and all the other
			parties get zero votes (this way this edge party gets the seat and
			a bribery of cost at most $\budget$ cannot change that).
			
			\item We form $k-1$ blocker districts, each with a single seat to
			allocate. In each of these districts the blocker party gets $2\budget+1$
			votes and all the other parties get zero votes.
		\end{enumerate}
		This completes the description of our construction.\smallskip
		
		\noindent\emph{Initial Seat Allocation.}
		Let us now describe the initial seat allocation, prior to any
		bribery.  The blocker party gets exactly $k-1$ seats from the
		blocker district (it has zero votes in every other district so it
		cannot get seats anywhere else). Similarly, each edge party gets
		$k-2$ seats from the dummy districts.  For the vertex districts, let
		us recall that the tie-breaking prefers party \preferredParty{} over
		the edge parties, which are then preferred over the dummy parties
		and the blocker party. Now consider some vertex district
		\districtOf{\cVertex_j}, and let $\cEdge_a$, $\cEdge_b$, and
		$\cEdge_c$ be the three edges incident to $\cVertex_j$.  There are
		three seats to allocate and one can verify that, due to
		tie-breaking, each of the parties $\partyOf{\cEdge_a}$,
		$\partyOf{\cEdge_b}$, and $\partyOf{\cEdge_c}$ gets one of them.  In
		total, party $\preferredParty$ and the dummy parties get no seats,
		the blocker party gets $k-1$ seats, and the edge parties get $k$
		seats each. In particular, \preferredParty{} does not have the
		majority of seats in the unbribed election.\smallskip

		\noindent \emph{Main Idea.}  The intuition behind our construction
		is that with budget $\budget=3k$, $\preferredParty$ can obtain $k$~seats
		by getting one seat in $k$ vertex districts.  We ensure that this is
		enough to have more seats than any other party exactly if the $k$
		districts correspond to a vertex cover in $\cGraph$.\smallskip

		\noindent\emph{Correctness.}
		To show the correctness of our reduction, we will show that
		graph~$G$ has a vertex cover of size~$\cVCSize$ if and only if there
		is a valid bribery that moves at most~$\budget$ votes and ensures that
		party~$\preferredParty$ gets more seats than any other party.
		
		For the ``only if'' direction, assume that $\cGraph$~has a vertex
		cover of size~$\cVCSize$. In particular, let $S$ be a set of $k$
		vertices that forms a vertex cover for $\cGraph$.  We form a bribery
		as follows.  For each vertex district that corresponds to a vertex
		$v \in S$, we transfer to~\preferredParty{} one vote from each of
		the three edge parties corresponding to edges incident to~$v$.  In
		every such district, \preferredParty{} now has 10 votes, every edge
		party has at most 9 votes, and the dummy parties still have 10 votes
		each.  Thus, due to tie-breaking, in each such district one seat goes
		to~\preferredParty{} and the other two seats go to the dummy parties
		(each of them to a different dummy party).  It directly follows that
		this gives \preferredParty{} $k$~ seats.  Moreover, since every edge
		is covered by at least one vertex from $S$, every edge party looses
		at least one seat and ends up with $k-1$ seats. The blocker party
		also gets $k-1$ seats, so \preferredParty{} has more seats than any
		other party.

	For the ``if'' direction, assume that there is a bribery that moves
	at most~$\budget$ votes and ensures that party~$p$ gets more seats than
	any other party.  To beat the blocker party, \preferredParty{} must
	obtain at least $k$~seats, implying that, on average, one must move
	at most $\nicefrac{\budget}{k}=3$~votes for each of these seats.  We will
	now argue that for each additional seat that $\preferredParty$ gets,
	one must move at least (and, hence, exactly) three votes.  
	
	By construction, \preferredParty{} can obtain additional seats only
	in vertex districts.  Assume towards a contradiction that
	\preferredParty{} obtained $x\ge1$~seats in some vertex district by
	moving $y\le3x-1$ votes.  With just three available seats, it holds
	that $x\le 3$ and $y\le 8$.  Moreover, as we move $y \leq 8$ votes,
	it must be the case that after the bribery there are more than three
	parties with $10$ votes in the district.\footnote{Without loss of generality, we can assume that all the votes are moved to support~$\preferredParty$, so no other party ends up with more than 10 votes.}
        As a
	consequence, we note that we must move at least $y\ge 3$ votes, so
	that $\preferredParty$ at least ties with the parties that receive
	10 votes. This implies that $x > 1$, i.e., \preferredParty{} obtains
	at least two seats in the district. However, under the D'Hondt
	method this would require $\preferredParty$ to have at least $20$
	votes.  This is a contradiction, because we would have to move $y
	\geq 13$ votes and we assumed that $y \leq 8$.

	To summarize, we know that to obtain $k$~seats, we have to move
	exactly three votes per district in exactly $\cVCSize$~vertex
	districts.  Assume towards a contradiction that the corresponding
	vertex districts do not form a vertex cover, that is, there is an
	edge~$\cEdge_i$ that is not incident to any of these vertices.  This
	means that party \partyOf{\cEdge_i} still obtains $k$ seats ($k-2$
	seats from the dummy districts and $2$ seats from the vertex
	districts) and, thus, $\preferredParty$ does not get more seats than
	any other party.  \smallskip

	Let us now briefly mention how to adapt the proof for the case of
	other apportionment methods. First, we note that the dummy districts
	and the blocker district do not need to be changed---they still
	guarantee that the appropriate parties get the required number of
	seats. Vertex districts may need to be modified, but for each of our
	apportionment methods it suffices to find a constant $x$ such that
	$\preferredParty$ gets $x-3$ votes and the other parties (with
	nonzero numbers of votes) get $x$ votes each. In particular, for
	\sainte{} our current constant, i.e., $x=10$, is sufficient. One can
	also verify that $x=10$ is sufficient for LRM 
	(specifically, there are $137$ voters in each vertex district and, so,
	prior to bribery only remainder seats are assigned in these
	districts; to get two seats, a party would need to obtain a quota
	seat and that requires at least $46$ votes, which is sufficient for
	our proof to still hold).
	\let\cGraph\undefined
	\let\cVertices\undefined
	\let\cVertex\undefined
	\let\cEdges\undefined
	\let\cEdge\undefined
	\let\cVCSize\undefined
	\let\partyOf\undefined
	\let\districtOf\undefined
	\let\dummiesOf\undefined
	\let\preferredParty\undefined
	\let\solutionBribery\undefined
	\let\solutionScore\undefined
	\let\solutionElection\undefined
	\end{proof}

	$\np$-hardness also holds for the destructive cases.
        This can be shown by extending the previous reduction.

\begin{theorem}\label{thm:dest-multi-winner}
	Let $\calR$ be either a divisor sequence method with sequence $d_1, d_2, \ldots$ or LRM. $\calR$-\textsc{DMAWB} is~$\np$-complete, even if the number of seats per district as well as the number of votes transferred per district is at most three and the total number of votes per district is a constant depending on $d_1$ and~$d_2$.
\end{theorem}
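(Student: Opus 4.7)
The plan is to extend the reduction of Theorem~\ref{thm:pAG-np-h} from \cubicVertexCover{}. Membership in \np{} will follow immediately, since any candidate bribery can be guessed and verified in polynomial time using the polynomial-time computable apportionment methods. For \np-hardness, given an instance $(G,k)$ with cubic $G=(V,E)$, I would reuse the vertex districts and edge-protected dummy districts of the previous reduction verbatim, but rename the formerly distinguished party $P_1$ to a helper party $H$. The despised party $P^*$ of the new instance would be a fresh party, protected by $L$ new locked districts (each with a single seat where $P^*$ has $2\budget+1$ votes and every other party has $0$), for a sufficiently large constant $L$ depending only on $d_1$ and $d_2$. Symmetrically, $H$ would receive $L-k+1$ further locked districts. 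The original blocker party can be dropped; the role of providing a fixed seat count that the helper must surpass is now played by $P^*$ itself.

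For correctness, the forward direction would use the same three-vote bribery per selected vertex as in Theorem~\ref{thm:pAG-np-h}: given a vertex cover $S$ of size $k$, this lifts $H$ to $L-k+1+k = L+1$ seats, strictly beating $P^*$'s $L$ locked seats and so successfully preventing $P^*$ from winning. For the converse, I would assume that a bribery of at most $\budget = 3k$ votes leaves some party $Q$ with strictly more seats than $P^*$. Since $P^*$'s $L$ locked seats cannot be reduced and $L$ is chosen to exceed the combinatorial maximum number of seats any single edge or dummy party can accumulate across its few vertex-district appearances under any budget-$\budget$ bribery, $Q$ must be $H$. Then the argument of Theorem~\ref{thm:pAG-np-h} applies verbatim: for $H$ to gain at least $k$ additional seats, the bribery must spend exactly three votes in each of $k$ distinct vertex districts whose underlying vertices form a vertex cover of $G$.

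The main obstacle will be establishing the uniform upper bound on the additional seats each edge or dummy party can attain in a vertex district within budget $\budget$, and then fixing $L$ accordingly. For divisor sequence methods, the bound will rely on the ratios $\supportalloc(P)/d_1$ and $\supportalloc(P)/d_2$, exploiting the fact that each edge party appears in only two vertex districts and each dummy party in only one. For LRM, the analogous bound follows from the quota and remainder thresholds of the specific vertex-district sizes already computed in the original proof (e.g., the quota seat requires at least $46$ votes in a $137$-voter vertex district). Once both bounds are nailed down, $L$ can be fixed as a constant depending only on $d_1$ and $d_2$, yielding the desired reduction while preserving the per-vertex-district side conditions (at most three seats, at most three votes transferred, and a constant vote count) claimed in the theorem statement.
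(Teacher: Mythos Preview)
There is a genuine gap in your converse direction. In Theorem~\ref{thm:pAG-np-h} the vertex-cover constraint is enforced because the preferred party must end up with strictly more seats than \emph{every} other party, in particular every edge party; since each edge party starts with $k$ seats and the preferred party can reach at most~$k$, the bribery must cost every edge party at least one seat, and that is precisely what forces the $k$ chosen vertex districts to form a vertex cover. In your DMAWB instance the goal is only that \emph{some} party beats~$P^*$, and you have arranged things so that this party must be~$H$. But nothing then forces the edge parties to lose seats: by the very per-district argument you invoke, $H$ can obtain one seat in \emph{any} vertex district for exactly three votes, so with budget $3k$ one may simply pick any $k$ of the $r\ge k$ vertex districts and lift $H$ from $L-k+1$ to $L+1$ seats. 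Your constructed instance is therefore a YES-instance regardless of whether $G$ has a size-$k$ vertex cover, and the reduction collapses. (The paper's own proof glosses over exactly this step with ``the problem is thus equivalent to the problem presented in the proof of Theorem~\ref{thm:pAG-np-h}''; the same objection seems to apply there as written.)

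A secondary point: your claim that $L$ can be fixed as a constant depending only on $d_1,d_2$ is incompatible with retaining the dummy districts. Each edge party already receives $k-2$ seats from its dummy districts, so for $L$ to exceed every edge party's total seat count, $L$ must grow with~$k$. Your phrase ``across its few vertex-district appearances'' suggests you counted only vertex-district seats and overlooked the dummy-district contribution.
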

\begin{proof}
	Membership in $\np$ is obvious as divisor sequence methods and LRM are polynomial-time computable.
	
	We show $\np$-hardness by a reduction from the~\cubicVertexCover{}
	problem. As before, we provide a reduction for the D'Hondt apportionment method. A short discussion about the details regarding how to adapt it for other methods can be found in the proof of Theorem~\ref{thm:pAG-np-h}.
	
	Let $(\ensuremath{G},k)$ be our
	input instance of~\cubicVertexCover{}, where $\ensuremath{G} =
	(\ensuremath{V},\ensuremath{E})$ with $\ensuremath{V}=\{v_1,\dots,v_r\}$ and $\ensuremath{E}=\{e_1,\dots,e_t\}$  is a graph and $k$ is an integer (without loss of generality, we assume that $k \geq 3$). 
	We
	form an instance of D'Hondt-\textsc{DMAWB} 
	as we did it in the proof of Theorem~\ref{thm:pAG-np-h} with the following adaptations. To the set of parties, we add an auxiliary party $P_1$, with $P^*\neq P_1$ being our distinguished party. In the tie-breaking order, we set $P_1$ between $P^*$ and $P_2$. 
	Then we form $r+t(k-2)+(k-1)$ districts as in the reduction of the previous proof, with the only difference that in the vertex districts the auxiliary party $P_1$ is the one that receives $7$ votes. The distinguished party receives no vote in these districts. Next, we add two additional districts, with each district also having a threshold of zero. Note that since each of these districts contains only one party, all voters can vote only for this party. Accordingly, it does not matter how many voters there are in the districts, as the seats can only go to the only existing party either way. The two additional district are defined as follows: 
	\begin{enumerate}

		\item We form a district only with $P^{*}$ where there are $3k$ seats to allocate. So $P^{*}$ gets all the seats and the seats cannot be reallocated to any other party.
		
		\item We form a district only with $P_1$  where there are $2k+1$ seats to allocate. So $P_1$ gets all the seats and the seats cannot be reallocated to any other party.
	\end{enumerate}
	This completes the %
	adaptations for
	our construction.\smallskip

	Let us now describe the initial seat allocation. In the first $r+t(k-2)+(k-1)$ districts, the seats are allocated as in the proof of Theorem~\ref{thm:pAG-np-h} prior to any bribery:
	The blocker party gets exactly $k-1$ seats from the
	blocker districts. Similarly, each edge party, the edge of form $\{v_i,v_j\}$, gets
	$k-2$ seats from the dummy districts and $2$ from the districts $d(v_i)$ and $d(v_j)$, i.e., $k$ seats in total. The dummy parties get no seats. $P_1$ only gets seats in the last district, i.e., $2k+1$ seats, and $P^{*}$ gets $3k$ seats in the district where $P^{*}$ is also the only party 
	to be chosen. In particular, $P^{*}$ has the largest number of seats in the unbribed election.\smallskip
	
	We now want to bribe the election to make $P^{*}$ have less seats than some other party. First note that in all districts but in the vertex districts, there is no possibility to move seats to other parties. Therefore, each edge party cannot get more than $k+4$ seats (the edge party is in exactly two vertex districts and therefore can get maximal $6$ seats in them), each dummy party cannot get more than $3$ seats and the blocker party cannot get more than $k-1$ seats. That means that only $P_1$ may get more seats than $P^{*}$. Because our distinguished party $P^*$ cannot have less seats than $3k$, $P_1$ has to get at least $k$ more seats and it can get them only from the vertex districts. The problem is thus equivalent to the problem presented in the proof of Theorem~\ref{thm:pAG-np-h} and therefore it follows that $\calR$-\textsc{DMAWB} is \nphard{}.
\end{proof}

	It is also interesting to consider the parametrized complexity of
	\multiBriberyAG{$\calR$}. We show that the problem is~\wonehard{} for
	the parameterization by the number of districts.

	\newcommand{\preferredParty}{\ensuremath{P^*}}
	\newcommand{\partyOf}[1]{\ensuremath{P(#1)}}
	\newcommand{\dummyParties}{\ensuremath{Y}}

	\begin{theorem}\label{thm:wone-hard-districts}
	  Let $\calR$ be a divisor sequence method or LRM.
          \multiPluralityAG{$\calR$} is~$W[1]$-hard with respect
	  to the number of districts.
	\end{theorem}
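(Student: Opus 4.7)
The plan is to reduce from \unarybinpackingstar{}, which by the preliminaries is \wonehard{} with respect to the number of bins~$k$. Given an instance with bin capacity $b$, even positive integers $U=\{u_1,\ldots,u_n\}$ summing to $kb$, and $k$ bins, the target is a \multiPluralityAG{$\calR$} instance whose total number of districts is bounded by a function of~$k$ alone (ideally $k+O(1)$), so that the parameter is preserved.

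At the top level, I would create $k$ \emph{bin-districts} $D_1,\ldots,D_k$, one per bin. Each $D_j$ would be set up so that the distinguished party $P^*$ is initially behind but can be made the unique winner of $D_j$ only by receiving exactly $b$ additional votes from a designated pool of item voters, and no fewer: the supports of the competing parties in $D_j$ would be calibrated against the divisor sequence $d_1,d_2,\ldots$ (or, for LRM, against the quota/remainder arithmetic) so that any strictly smaller bribe leaves some other party tied with or ahead of $P^*$. The global bribery budget is set to the tight value $\budget=kb$, so that winning in all $k$ districts simultaneously requires moving exactly $b$ votes in each district.

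The main obstacle is the \emph{consistency} constraint of bin packing: each element must be assigned to exactly one bin, whereas in multi-district apportionment bribery each district has its own independent pool of voters, so nothing \emph{a priori} prevents simulating ``element $u_i$ goes to both bin $j$ and bin $j'$ at once.'' To address this, I would introduce $n$ \emph{item parties} $Q_1,\ldots,Q_n$ appearing across all bin-districts; within each $D_j$, $Q_i$ would hold exactly $u_i$ voters, arranged with enough auxiliary parties around it so that any partial bribery of $Q_i$ is strictly worse than either taking all $u_i$ voters or none (this indivisibility would be enforced in the same spirit as the tie-breaking gadgets in the proof of Theorem~\ref{thm:pAG-np-h}). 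To then forbid assigning the same element to two different bins, I would attach a small number of auxiliary \emph{accounting districts} (contributing to the $O(1)$ overhead), in which each $Q_i$ initially holds a lead that is preserved only as long as $Q_i$'s block is bribed in at most one bin-district; bribing $Q_i$'s block in two bin-districts would require an additional, unaffordable expenditure in the accounting districts.

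With these gadgets in place, any successful bribery partitions $U$ into subsets $S_1,\ldots,S_k$ with $\sum_{u\in S_j}u=b$ (a valid packing), and conversely any packing yields a successful bribery of cost $\budget$. Since the construction uses $k+O(1)$ districts, this gives \wone-hardness with respect to the number of districts. The adaptation for LRM is analogous, replacing the divisor-sequence threshold arguments by the corresponding quota/remainder ones, exactly as in the proofs of Theorems~\ref{thm:pAG-np-h} and~\ref{thm:dest-multi-winner}. The hardest part to get right will be the simultaneous tuning of the bin-district and accounting-district gadgets so that their arithmetic is mutually consistent for the specific divisor sequence $d_1,d_2,\ldots$ (or LRM's quotas) under consideration, particularly ensuring that the indivisibility and single-use mechanisms do not accidentally leak extra slack that would let a bribery succeed without corresponding to a genuine packing.
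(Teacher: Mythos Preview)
Your choice of source problem (\unarybinpackingstar{}) and the high-level architecture of $k$ bin-districts with tight budget $\budget=kb$ match the paper. But what you call ``the main obstacle''---enforcing that each element is used in at most one bin---is not an obstacle at all in the paper's construction, and your proposed fix via $O(1)$ accounting districts is the weak point of your plan: you never explain how a constant number of districts could simultaneously track $n$ independent ``element $u_i$ was used at most once'' constraints without the parameter or the arithmetic blowing up.

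The paper sidesteps the issue entirely by making the $k$ districts \emph{identical copies} and letting each element party $P(u_i)$ appear in \emph{all} of them with $d_2\budget+u_i$ votes (plus enough dummy parties at $d_2\budget$ votes, and $P^*$ at $d_2\budget-b$). Thus $P(u_i)$ initially holds exactly $k$ seats. For $P^*$ to win it must reach $k$ seats; since it can gain at most one seat per district and that costs at least $b$ votes, the budget is spent exactly $b$ per district and $P^*$ ends with exactly $k$ seats. Now every element party must drop to at most $k-1$ seats, i.e., lose a seat in some district. In a district where $P^*$ has been brought to $d_2\budget$ votes, $P(u_i)$ loses its seat only if $u_i$ of its votes are removed, so the set $S_j$ of elements losing a seat in district $j$ satisfies $\sum_{u\in S_j}u\le b$. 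Summing over all $k$ districts gives $\sum_j\sum_{u\in S_j}u\le kb=\sum_i u_i$, and since every $u_i$ occurs in at least one $S_j$, equality is forced and each $u_i$ occurs in exactly one~$S_j$. The single-use constraint thus falls out of a counting argument on the tight budget; no accounting gadgets are needed, and the construction uses exactly $k$ districts.
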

\begin{proof}
  We give a parameterized, polynomial-time reduction from~\unarybinpackingstar{}. To
  this end, let us fix an instance of~\unarybinpackingstar{}
  with~\namedorderedsetof{\universe}{\element}{\universeSize}, a bin
  size~\binSize{} and the number~\binsCount{} of bins.
	For brevity, let~$\delta = \max\{\universeSize, \binSize\}$.

  For an arbitrary (but fixed) divisor sequence method~$\calR$ with sequence~$d_1, d_2,
  \ldots$, we reduce the above instance to a new instance
  of~\multiPluralityAG{$\calR$} with budget~$\budget = \binsCount
  \binSize$ as follows. We create a preferred party~\preferredParty{},
	$\delta$~\emph{dummy parties}, and, for every element~$\element$ in the
	universe, we create an \emph{element party}~\partyOf{\element}. The central
	part of our construction is a \emph{bin gadget}. The bin gadget is a district
	with \universeSize{}~seats to allocate. In the district, the preferred
	candidate gets $d_2\budget - \binSize$~votes, the dummy parties get
	$d_2\budget$~votes each, and, for each element~\element{}, its
	party~$\partyOf{\element}$ gets exactly $d_2\budget + \element$~votes. We
	create the whole election by copying the bin gadget \binsCount{}~times and we
	label the parties in a way that the tie-breaking favors the preferred party
	over the dummy parties who are preferred over the element parties. Clearly, the
	described instance can be constructed in polynomial-time.

	We collect four crucial properties of the bin gadget in the following claim.
	For readability, we prove~Claim~\ref{claim:bin-gadget-properties} in the end of
	the proof.
	\begin{claim}\label{claim:bin-gadget-properties}
		Consider an instance of the \unarybinpackingstar{} problem with a
		collection \namedorderedsetof{\universe}{\element}{\universeSize}, bin
		size~\binSize{}, and the number~\binsCount{}. Then it holds that in the
		district of the bin gadget:
		\begin{enumerate}
			\item Party~\preferredParty{} cannot get two
			seats after moving $2\binSize$ or less votes;\label{claim:point:no-two-seats}
			\item At least $\binSize$~votes needs to be moved to~\preferredParty{} to
			make it gain a single seat;\label{claim:point:price-of-single-seat}
			\item For each element~$\element{}_i \in \universe$, at least
			$\nicefrac{\element_i}{2}$~vote moves are required to
			make~\partyOf{\element_i} lose a seat;\label{claim:point:bound-element}
			\item Given that~$\preferredParty{}$ gets a seat, there is an
			element~$\element{} \in \universe$ for which at least $\element$~vote
			moves are required to make~\partyOf{\element} lose a
			seat.\label{claim:point:conditioned-bound-element}
		\end{enumerate}
	\end{claim}
	
	The created election results in each element's party having exactly
	\binsCount{}~seats and the preferred party having no seats at all. Because
	there are~\binsCount{} bin gadgets and \preferredParty{} gains at most one seat
	from a single gadget, \preferredParty{}~can have at most \binsCount{}~seats.
	Thus every element's party has to lose at least one seat in order to
	let~\preferredParty{}~have the most representatives.

	If there is a bin packing, then for each bin, we choose one (distinct) district
	and transfer \binSize{}~votes to~\preferredParty{} in a way that all parties
	corresponding to the elements in this bin have~$d_2 \budget$~votes. This is
	possible because each bin is filled with elements of size exactly~\binSize{}
	and, due to the construction, each vote moved corresponds to one unit of size.
	We end up with~\preferredParty{} and the corresponding elements' parties having
	$d_2 \budget$~votes. So the seats are allocated to the other elements'
	parties,~\preferredParty{}, and some dummy candidates. Repeating this procedure
	for every bin results in \preferredParty{}~having
	\binsCount{}~seats, some (all different) dummy parties having one seat each,
	and every element's party having $\binsCount{}-1$~seats. This clearly gives a
	``yes''-instance.
	
	Now we show that in fact in every ``yes''-instance of the
	created~\multiPluralityAG{$\calR$}, 
	party~\preferredParty{} has to get
	exactly~\binsCount{} seats. If~\preferredParty{} gets $\binsCount-1$~seats,
	then every element party has to lose at least two~seats. By
	Point~\ref{claim:point:bound-element} of
	Claim~\ref{claim:bin-gadget-properties} and the fact that our budget is~$\budget =
	\binsCount\binSize$, we know that this case requires that the corresponding
	party of \emph{every} element~$\element \in \universe$ has to lose a seat at
	most by~$\nicefrac{\element}{2}$~vote moves. Otherwise, the number of vote
	changes would exceed~$\budget$. However, \preferredParty{} has to win at least
	one seat and by
	Point~\ref{claim:point:conditioned-bound-element} of
	Claim~\ref{claim:bin-gadget-properties} there must be at least one element
	party~$\partyOf{\element}$, $\element \in \universe$, which loses seat
	by~$\element$~vote moves. Note that this argument is in fact valid any other number~$x
	\in [\binsCount-2]$ that~\preferredParty{} could end up with. Hence, it follows
	that~\preferredParty{} has to get \binsCount{}~seats.
	Eventually, by~Points~\ref{claim:point:no-two-seats}
	and~\ref{claim:point:price-of-single-seat} of
	Claim~\ref{claim:bin-gadget-properties} and the value of~$\budget =
	\binsCount\binSize$ it follows that exactly \binSize{}~units of budget are
	spent in each of $\binsCount$~district and~\preferredParty{}, with the
	number~$d_2\budget$~of voters, gets one seat in all of them.
	Thus every element's party has at most $\binsCount-1$~seats. By construction,
	if~\preferredParty{} has~$d_2\budget$~votes in a district after
	getting~$\binSize$~votes as a result of bribery, each element
	party~$\partyOf{\element}$, $\element \in \universe$, has to lose
	exactly~$\element$~votes to lose a seat in the district. So, interpreting each
	district as a bin and taking the elements whose corresponding element parties
	have~$\binSize$~votes as those that are put in the corresponding bin, we get a
	solution to the original instance of~\unarybinpacking{}.
	
	It is clear that the reduction runs in polynomial time and that the parameter
	``number of districts'' is a function of solely parameter~$k$ of
	\unarybinpacking{}. To conclude the proof of the reduction it now remains to
	prove the properties of the bin gadget as stated
	by~Claim~\ref{claim:bin-gadget-properties}.
	\begin{proof}[Proof of Claim~\ref{claim:bin-gadget-properties}]
		
		We will prove each point of the claim separately. In all of the arguments we
		aim at maximizing the number of seats that party~$\preferredParty$. For the
		rules we study, it is optimal to always move vote \emph{to}
		party~$\preferredParty$, so we follow this assumption in the proofs of all
		claims. Before we start, recall that in a bin gadget district yields
		$\universeSize$~seats to allocate and none of these is allocated to
		party~$\preferredParty$ prior to a bribery.  Furthermore, note that, except of the
		preferred party, we have at least~$2\universeSize$~parties with at
		least~$d_2\budget$~votes each.
		
		\noindent\emph{Proof of Point~\ref{claim:point:no-two-seats}.}
		To prove the claim, we show that even if $\preferredParty$ gets~$2\binSize$
		more votes, then~$\preferredParty$ does not obtain two seats, independently
		of which parties have their votes taken away. Towards contradiction, assume
		that~$\preferredParty$ has $d_2\budget - \binSize + 2\binSize$~votes and gets
		two~seats. Observe that $d_2\budget - 2\binSize = \binSize(d_2\binsCount -
		2)$ is the smallest number of votes that any party~$q$ can have after
		moving~$2\binSize$~votes (in particular, this would be the case for any of
		the dummy parties). Thus, as a necessary condition for the preferred party to
		get the second seat, it must hold that
		$$
		\frac{\binSize(d_2\binsCount + 1)}{d_2} \geq \frac{\binSize(d_2\binsCount
			-2)}{d_1}.
		$$
		In words, $\preferredParty$ has to have enough votes to get its second seat
		before $q$ gets its first seat (the inequality is weak due to the
		tie-breaking preferring~$\preferredParty$ over~$q$) assuming that
		all~$2\binSize$~votes are taken from~$q$ and given to~$p$. Otherwise, there
		is enough parties with the number of votes that is strictly bigger than those
		of~$q$ to distribute $\universeSize-1$~seats to them, one seat per party,
		leaving~$p$ with at most one seat.
		We transform the above expression as follows:
		\begin{align*}
			\frac{d_2\binsCount + 1}{d_2} &\geq \frac{d_2\binsCount
				-2}{d_1}\\
			d_1(d_2\binsCount + 1) &\geq d_2(d_2\binsCount-2)\\
			d_1d_2\binsCount + d_1 &\geq (d_2)^2\binsCount-2d_2\\
			\binsCount{}d_2(d_1-d_2) &\geq -2d_2-d_1.
		\end{align*}
		From the fact that the right hand side is negative and both~$\binsCount$
		and~$d_2$ must be positive, it follows that~$d_1 - d_2 \geq 0$, which is a
		contradiction to the fact that the divisor sequence $d_1, d_2, \ldots$ is
		increasing. Eventually, we conclude that more than than~$2\binSize$~budget
		units are required to make party~$\preferredParty$ gain at least two seats in
		a gadget district.
		
		\noindent\emph{Proof of Points~\ref{claim:point:price-of-single-seat}
			and~\ref{claim:point:conditioned-bound-element}.}
		Assume that we perform $\binSize-x$~vote moves resulting in $\preferredParty$
		having $p^* = d_2\budget - x$~votes.
		
		We first show that for $x>0$, party~$\preferredParty$ cannot get any seat.
		Assuming~$x>0$, a necessary condition (yet not sufficient) for
		$\preferredParty$ to obtain a seat is that there is at least $\universeSize +
		\binSize - (\universeSize - 1) = \binSize + 1$~parties with the number of
		votes smaller or equal to~$p^* < d_2\budget$. Indeed,
		since~$\preferredParty$ is first in the tie-breaking order, at most
		$\universeSize -1$~parties can have a higher number of votes
		than~$\preferredParty$. Observe that all parties except of~$\preferredParty$ have
		initially at least $d_2\budget > p^*$~votes. It follows then, that we need
		to move at least $\binSize + 1$~votes to decrease their vote counts. Hence,
		we reach contradiction as for~$x>0$ we have only $\binSize-x <
		\binSize$~moves available.
		
		We now move on to the case of $x=0$, for which we show that $\preferredParty$
		can possibly obtain one seat. This time, the final number of votes of
		$\preferredParty$ is $p^* = d_2\budget$. Importantly, $\preferredParty$ is
		first in the tie-breaking order and the dummy parties have initially exactly
		$d_2\budget$~votes. Hence, we do not need to steal voters from the dummy
		parties as none of them gets allocated a seat before so happens
		for~$\preferredParty$. Because there is exactly $\universeSize$
		element~parties, who have more voters than~$d_2\budget$, to
		make~$\preferredParty$ get one seat it is sufficient to steal votes from one
		element party. Thanks to the assumption that for each $i \in
		[\universeSize]$ it holds that $\binSize \geq \element_i$, we can arbitrarily
		choose such a party~$\partyOf{\element_i}$ and steal~$\element_i$ of its votes. We
		steal the remaining $\binSize - \element_i$~votes from arbitrarily chosen
		parties. This way, we guarantee that~$\preferredParty$ gets one seat as a
		result of bribery at cost~$\binSize$. Additionally, $\partyOf{\element_i}$ is
		exactly the party that loses its seat as a result of not fewer
		than~$\element_i$~vote moves, which proves
		Point~\ref{claim:point:conditioned-bound-element} of the claim.

		\noindent\emph{Proof of Point~\ref{claim:point:bound-element}.} We consider
		the initial state of the district. Without loss of generality, let us fix
		some party~$\partyOf{\element}$ for some~$\element \in \universe$. Recall
		that~$\partyOf{\element}$ initially gets one seat. Due the tie-breaking and
		the number~$\universeSize$ of seats, $\partyOf{\element}$ could lose a seat
		only if either a dummy party or the preferred party gets at least
		the same number of voters after a bribery action. Since $\partyOf{\element}$ initially
		has~$d_2\budget + \element$~votes and each dummy party
		has~$d_2\element$~votes, $\partyOf{\element}$ will tie with some dummy party
		as a result of exactly~$\nicefrac{\element}{2}$~votes moves. Indeed, both
		parties then would have~$d_2\budget + \nicefrac{\element}{2}$~votes (note
		that $\element$ is even by our assumptions on \unarybinpackingstar{} instances).
		Clearly, to make $\partyOf{\element}$ tie with the preferred
		party~$\preferredParty$ would require more vote moves, as $\preferredParty$
		initially has $\binSize$ fewer votes than every dummy party. Importantly, if
		there is another element~$\element' \in \universe$ such that~$\element' <
		\nicefrac{\element}{2}$, then $\partyOf{\element}$ would anyway get a set.
		Namely, in such a case, $\partyOf{\element}$ would be considered by the
		divisor rule earlier than $\partyOf{\element'}$ as the former has a greater
		number of votes. Effectively, $\partyOf{\element}$ would retain its seat.
		This behavior only strenghtens the observation that the
		number~$\nicefrac{\element}{2}$ of vote moves is only a best-case lower
		bound for~$\partyOf{\element}$ losing their seat. We have presented the
		argument only for the case that considered the initial state of the district
		(prior to any bribery action). However, because letting any element party
		lose their seat requires exactly one losing party to gain a seat, we can
		apply the same argument for any achievable situation. Since we have proven
		Point~\ref{claim:point:conditioned-bound-element} above, this concludes the
		proof of the claim.
	\end{proof}	
	
	Notably, the above reduction requires only tiny adaptations to also yield the
	claimed result for the case of the Largest Remainder method. Let us denote
	by~$N$ the number of votes and by~$S$ the number of seats to allocate in the
	bin gadget. Recall that LRM works in two stages, independently in each
	district. In the first stage, it assigns each party~$P$ in the gadget exactly
	$\quota(\sigma(P),N,S) = \lfloor \sigma(P) \cdot \nicefrac{S}{N}\rfloor$ lower~quota~seats,
	where~$p$ denotes the number of votes for~$P$. In the second stage, if there
	are still $r>0$~unallocated seats, $r$ parties~$P$ with the highest number
	$\remainder(\sigma(P),N,S) = \sigma(P)\cdot\nicefrac{S}{N} - \quota(\sigma(P),N,S)$ get
	$\remainder(\sigma(P),N,S)$ remainder seats. 
	
	First, we will show that no seats are allocated in the first stage, i.e., $\lfloor \sigma(P) \cdot \nicefrac{S}{N}\rfloor = 0$. Since $d_2>1$, let us adapt the above reduction by taking~$d_2 = 1$ and let us redefine $\delta$
	to be~$\max\{\universeSize^2, \binSize\}$. Now, calculating a lower bound on the
	value~$N$ for the updated bin gadget,
	\begin{align*}
		N &= \budget - \binSize + \budget\delta +
		\sum_{\element_i \in \universe} (\budget+\element_i) \\
		&= \budget - \binSize + \budget\delta + \universeSize\budget + \budget \\
		&= \budget(2+\delta+\universeSize) -
		\binSize  \\
		&\geq \budget(\universeSize + \universeSize^2),
	\end{align*}
	
	Let us take $S = \universeSize$. Then, we get that
	$$
	\nicefrac{S}{N} \leq \nicefrac{\universeSize}{(\budget(\universeSize +
		\universeSize^2))}.
	$$
	The construction of the reduction ensures that the maximum number~$\hat{\sigma}(P)$ of
	votes that a party in the bin gadget can get after a bribery is that of some
	element party increased by budget~$\budget$ (as a result of a budget-exceeding
	bribery). Hence, taking~$\binSize$ as a trivial upper-bound of a single element
	size, we have that~$\hat{\sigma}(P) \leq \budget + \binSize + \budget \leq 3\budget$.
	It follows that each party in the gadget cannot get any lower quota seats
	because
	\begin{align*}
		\quota(\hat{\sigma}(P),N,S) &= \lfloor \sigma(P) \cdot \nicefrac{S}{N}\rfloor \leq
		\left\lfloor\frac{3\budget\universeSize}{\budget(\universeSize+\universeSize^2)}
		\right\rfloor \\ 
		&= \left\lfloor\frac{3}{\universeSize + 1}\right\rfloor
		= 0
	\end{align*}
	for every positive value of~$\budget$ and $\universeSize \geq 3$.
	Consequently, no party can obtain more than two seats, which, in particular,
	proves Point~\ref{claim:point:no-two-seats} of
	Claim~\ref{claim:bin-gadget-properties}. Other consequence of the fact that no
	lower quota seats are assigned is that in each district LRM always assign seats
	to $\universeSize$~parties with the largest support (as $\nicefrac{S}{N}$ is a
	constant independent from parties support), breaking ties accordingly to the
	tie-breaking order. Even though we did not state it explicitly, in our proof of
	Claim~\ref{claim:bin-gadget-properties} (except that of
	Point~\ref{claim:point:no-two-seats}), we always considered such cases, in
	which no party obtains two seats. In these situations, LRM and divisor sequence methods
	behave exactly the same, so Claim~\ref{claim:bin-gadget-properties} carries
	over intact. For the same reason, the whole argument for the correctness of the
	reduction is valid for the LRM case. 
	\let\universe\undefined
	\let\universeSize\undefined
	\let\element\undefined
	\let\binSize\undefined
	\let\binsCount\undefined
	\let\preferredParty\undefined
	\let\partyOf\undefined
	\let\dummyParties\undefined
\end{proof}

	Again, the hardness result also remains for the destructive case, which we show by adapting the previous proof.

	\begin{theorem}\label{thm:wone-hard-multi-destructive}
	  Let $\appmethod$ be a divisor sequence method or LRM.
          $\appmethod$-\textsc{DMAWB} is~$W[1]$-hard with respect to the number of districts.
	\end{theorem}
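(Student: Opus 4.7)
The plan is to adapt the reduction from \unarybinpackingstar{} given in the proof of Theorem~\ref{thm:wone-hard-districts} in the same spirit in which Theorem~\ref{thm:dest-multi-winner} adapts the reduction of Theorem~\ref{thm:pAG-np-h} to the destructive winner setting. The idea is to introduce an auxiliary party that inherits the role of the ``preferred'' party from the constructive proof, while the true distinguished party $P^*$ obtains a large, untouchable lead that can be overtaken only by this auxiliary party winning seats through the bin gadgets.

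Concretely, starting from an instance of \unarybinpackingstar{} with $k$~bins of size~$b$ and the original budget $\budget = kb$, I would reuse the $k$ bin gadgets, the dummy parties, and the element parties of that reduction verbatim, except that an auxiliary party~$P_1$ takes over the role previously played by the preferred party: inside every bin gadget, $P_1$ receives $d_2\budget - b$ votes, while the distinguished party $P^*$ receives no votes in any bin gadget. Tie-breaking places $P^*$ first, then $P_1$, then the dummies, and finally the element parties. On top of this, I add two single-party districts (with threshold zero): one containing only $P^*$ with $k+1$~seats to allocate, and one containing only $P_1$ with $2$~seats. Both hand all of their seats to the sole present party irrespective of any bribery.

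The resulting initial allocation gives $P^*$ exactly $k+1$~seats, $P_1$ exactly $2$~seats, each element party exactly $k$~seats (one per bin gadget), and each dummy none; so $P^*$ is the unique leader. Precluding the victory of $P^*$ therefore requires some party to reach at least $k+2$~seats. Claim~\ref{claim:bin-gadget-properties} applies verbatim with $P_1$ playing the role of the preferred party: within one gadget, element parties can only lose seats, each dummy gains at most one seat (which fails to reach $k+2$ even when summed across all $k$~gadgets), and $P_1$ gains at most one seat and requires at least $b$~vote moves to do so. Hence only $P_1$ can reach $k+2$~seats, and only by winning exactly one seat in each of the $k$~bin gadgets; together with the budget $\budget = kb$, this forces spending exactly $b$~vote moves per gadget, which corresponds to a valid bin packing precisely as in the proof of Theorem~\ref{thm:wone-hard-districts}. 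Conversely, any bin packing straightforwardly yields such a bribery.

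The total number of districts in the new instance is $k+2$, still a function of the parameter~$k$, so the reduction remains parameterized. The main subtlety I anticipate is confirming that no dummy party---and no combination of bribes spread across several dummies---can accumulate enough seats to reach $k+2$ on its own; this is immediate from the one-seat-per-gadget bound implied by Claim~\ref{claim:bin-gadget-properties} together with the tie-breaking order, since each dummy can gain at most $k$ seats in total. The adaptation to LRM needs no further modifications beyond those already discussed at the end of the proof of Theorem~\ref{thm:wone-hard-districts}.
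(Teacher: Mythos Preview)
Your approach is essentially the same as the paper's: you extend the reduction of Theorem~\ref{thm:wone-hard-districts} by adding two single-party auxiliary districts and recasting the ``preferred'' party of the constructive proof as an auxiliary party that must overtake the despised one. The paper does exactly this, keeping the original $P^*$ in its constructive role and introducing a fresh despised party~$P_{\mathrm{d}}$ rather than relabelling, but structurally the two reductions coincide.

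The one notable difference is in the seat counts chosen for the auxiliary districts. The paper gives the despised party $2\universeSize\binsCount$ seats and the party playing the constructive role $(2\universeSize-1)\binsCount+1$ seats. Since the bin gadgets together contain only $\universeSize\binsCount$ seats, \emph{no} element or dummy party can ever reach $2\universeSize\binsCount$, so the argument that only the auxiliary party can overtake the despised one is immediate by magnitude. Your smaller counts ($k+1$ for $P^*$ and $2$ for $P_1$) are below the total number of bin-gadget seats, so you must additionally argue that element parties cannot \emph{gain} seats and that a single dummy cannot accumulate $k+2$ seats across the gadgets. These facts are true, but they are not literally contained in Claim~\ref{claim:bin-gadget-properties} (Point~1 bounds only the preferred party, and only for at most $2b$ moves), so the appeal to the claim is a slight overreach; you would need a short separate argument. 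The paper's larger constants sidestep this entirely, which is what the extra factor of~$\universeSize$ buys.
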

\begin{proof}
  \newcommand{\despised}{\ensuremath{P_{\textrm{d}}}}
  We extend the reduction from the proof of
  Theorem~\ref{thm:wone-hard-districts}, and obtain a parameterized,
  polynomial-time reduction from
  \unarybinpackingstar{} to $\appmethod$-\textsc{DMAWB}.
  Using the same notation, fix an instance of~\unarybinpackingstar{}
  with~\namedorderedsetof{\universe}{\element}{\universeSize}, a bin
  size~\binSize{}, and the number~\binsCount{} of bins. 
	
  Let us fix a divisor sequence method~$\calR$ with sequence~$d_1, d_2, \ldots$. We form
  a new instance of~\multiPluralityAG{$\calR$}, with budget~$\budget =
  \binsCount \binSize$ as follows. First, we copy the construction from the proof
  Theorem~\ref{thm:wone-hard-districts}. In addition to the preferred
  party~$P^*$, $\delta$~\emph{dummy parties}, and \universeSize{} \emph{element
  parties} from the copied construction, we introduce the \emph{despised}
  party~$\despised$. 
  Further, we add two \emph{auxiliary districts} to the election. One, where there
  is only the despised party~$\despised$ getting $2\universeSize{}\binsCount$ seats, and another
  one only with the preferred party~$P^*$ getting $(2\universeSize{}-1)\binsCount +1$
  seats. 
  We set the tie-breaking order such that it favors, in order, the preferred
  party, the dummy parties, element parties, and the despised party. The order
  within the groups of the dummy and the element parties is arbitrary but fixed.
  Clearly, the described instance can be constructed in polynomial-time.
	
  It is easy to verify (see also the proof of
  Theorem~\ref{thm:wone-hard-districts}) that the created election results in
  each element's party having exactly \binsCount{}~seats, the dummy parties
  having no seats at all, the preferred party having
  $(2\universeSize{}-1)\binsCount+1$ (no seats from the bin gadgets), and the
  despised party having $2\universeSize{}\binsCount$~seats.
  Our goal is to make some party to have more seats than $\despised$, which
  originally gets the most seats.

  First, note that the seats allocated in the auxiliary districts cannot be
  reallocated to any other party. Observe that, compared to the construction from
  the proof of Theorem~\ref{thm:wone-hard-districts}, these were the only newly
  introduced districts. So, from now on we focus on the remaining districts,
  i.e.,\ the bin gadgets. Due to the properties of the bin gadgets stated
  in~Claim~\ref{claim:bin-gadget-properties}, no
  parties except $P^*$ and $\despised$ can have
  more than $\universeSize\binsCount$ seats, therefore, only $P^*$ may get more seats than
  $\despised$.
  Since there are~\binsCount{} bin gadgets and $P^*$ gains at most one seat from
  a single gadget, $P^*$~can have at most $(2\universeSize{}-1)\binsCount +1 +
  \binsCount{}=2\universeSize\binsCount +1$~seats. Importantly,
  party~$\despised$ has exactly~$2\universeSize\binsCount$~seats (and cannot
  lose any seat in its respective auxiliary district). Thus $P^*$ has to get
  exactly~$k$ additional seats, one per each bin gadget, in order to have more seats
  than $P^*$ has. This condition on the victory of~$P^*$ is equivalent to that from the
  proof of Theorem~\ref{thm:wone-hard-districts}, so by its proof it follows that
  $\appmethod$-\textsc{DMAWB}
  is~\wonehard{} with respect to the number of districts.
  \let\despised\undefined
\end{proof}

In the constructive case, \np-completeness and $W[1]$-hardness for the
parameterization by the number of districts also holds under FPTP, which we
again show by reducing from \unarybinpackingstar{}.

\begin{theorem}\label{thm:fptp-np-hard-wone-hard-winner}
	\multiPluralityAG{FPTP} is $\np$-complete, and 
	$W[1]$-hard with respect to the number of
        districts.
\end{theorem}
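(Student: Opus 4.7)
The plan is to establish NP membership by the standard certificate argument and then to prove both NP-hardness and $W[1]$-hardness together via a single parameterized, polynomial-time reduction from~\unarybinpackingstar{} in which the number of bins becomes (up to an additive constant) the number of districts. For NP membership, given a purported bribery that changes at most $\budget$~votes, I would run FPTP in each district in polynomial time, sum the seats, and check that~$P^*$ strictly dominates every other party; this is clearly polynomial in the input size.

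For hardness, fix an input $(U,b,k)$ of~\unarybinpackingstar{} with $U=\{u_1,\dots,u_n\}$ and $\sum_i u_i=kb$. My plan is to construct an election with $k+O(1)$~districts and budget~$\budget=kb$ in which~$P^*$ can be made to strictly beat every other party's total seat count if and only if $U$ admits a packing into~$k$ bins of size~$b$. The core pieces are $k$~single-seat \emph{bin districts}~$B_1,\dots,B_k$ in which a dummy party currently wins by a margin calibrated so that, under a tie-breaking that favors~$P^*$, overtaking the dummy costs exactly~$b$~vote moves; inside each bin district, I would also include one \emph{element party}~$E_i$ per element, each with exactly~$u_i$~voters, so that the cheapest way to pay the cost~$b$ is to completely drain the voters of some subset of element parties whose sizes sum to exactly~$b$. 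Finally, I would add a constant number of auxiliary districts, including a \emph{blocker} district granting a competing party a baseline of seats that forces~$P^*$ to win \emph{all}~$k$ bin districts, and a ``home'' district that gives~$P^*$ its own fixed baseline.

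The central technical difficulty is forbidding the same element party from being used in two different bin districts, so that the subsets of elements actually transferred form a partition of~$[n]$. My plan to handle this is to attach to each element party~$E_i$ a private mini-gadget (for instance, a personal one-seat district where~$E_i$ holds a seat by a razor-thin margin, or a global cap on $E_i$'s voter count) such that reusing~$E_i$'s votes across two bin districts either overshoots the budget or lets some party tie or overtake~$P^*$ in total seats. This threshold-balancing step is the analogue of the bin-gadget analysis in the proof of Theorem~\ref{thm:wone-hard-districts} and is the part that will require the most care; because FPTP assigns all seats in a district to one party, the intra-district structure is much simpler than in the divisor-method case, so most of the accounting will live across districts rather than inside a single one.

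With such a construction in hand, the forward direction is immediate: from a packing $S_1,\dots,S_k$, transfer the voters of the elements in~$S_j$ to~$P^*$ inside district~$B_j$, bringing $P^*$'s vote count to exactly~$b$ there and making~$P^*$ win every bin district at total cost~$kb=\budget$. For the converse, any successful bribery must make~$P^*$ win every bin district in order to overcome the blocker, must spend exactly~$b$ in each bin district to stay within budget, and by the per-district analysis the spent votes must come from a set of elements summing to~$b$; the private element gadgets then guarantee that no element can contribute to two bin districts, so the sets~$S_j$ form a valid packing. Since the number of districts is $k+O(1)$, this is a parameterized reduction that lifts the $W[1]$-hardness of~\unarybinpackingstar{} with respect to the number of bins to $W[1]$-hardness of~\multiPluralityAG{FPTP} with respect to the number of districts, completing the argument.
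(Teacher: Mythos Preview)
Your construction diverges from the paper's in a way that leaves a real gap. You map bins to districts and elements to parties, and you correctly identify the hard part: preventing the same element from being ``used'' in two different bin districts. But neither of your proposed fixes can work under FPTP. A private one-seat district for $E_i$ is untouched by vote moves inside the bin districts, because those involve different voters; and there is no meaningful ``global cap on $E_i$'s voter count'' when each district has its own electorate. More fundamentally, FPTP couples districts only through parties' total seat counts, and a party's seat total records how many districts it \emph{wins}, not how many voters it lost somewhere. Since your element parties win no bin district before or after bribery, draining $E_i$'s voters in one, two, or all $k$ bin districts leaves $E_i$'s seat count unchanged, and no side gadget that lives in its own district can detect the difference. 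Without such a coupling the reverse direction of your equivalence fails: a successful bribery could take $u_1$ votes from $E_1$ (and $b-u_1$ more from other elements) in \emph{every} bin district, certifying nothing about a packing.

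The paper avoids this obstacle by reversing the correspondence: elements become districts and bins become parties. It introduces a despised party $P_{\mathrm d}$ and one bin party $P_j$ per bin; each element $u$ gets its own district with exactly $u$ seats, in which $P_{\mathrm d}$ initially wins with margin $2u$ over every bin party while $P^*$ has no voters at all. Dummy districts pin $P^*$ at a fixed total $X$, give $P_{\mathrm d}$ a base of $X-1$, and give each bin party a base of $X-b-1$. Any successful campaign must strip $P_{\mathrm d}$ of all $kb$ element-district seats within budget $K=kb$, which forces exactly $u$ vote moves in district $D(u)$, all going to a single bin party that then takes the $u$ seats there; the requirement that no bin party exceed $X-1$ seats caps each bin party's total gain at $b$, so the induced assignment of elements to bin parties is precisely a packing. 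This is clean and needs no cross-district coupling gadget. Note, however, that it yields $n+k+2$ districts rather than the $k+O(1)$ you were aiming for; your instinct about what the parameter transfer should look like for the $W[1]$-claim is sound, but the paper proceeds with this larger district count.
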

	\begin{proof}
		\newcommand{\party}{\ensuremath{P}}
		\newcommand{\despisedParty}{\ensuremath{\party_\mathrm{d}}}
		\newcommand{\districtOf}[1]{\ensuremath{D(#1)}}
		\newcommand{\constant}{\ensuremath{X}}
		Membership in $\np$ is obvious as FPTP is polynomial-time computable. 	
    To prove the claimed hardness results, we give a parameterized,
    polynomial-time reduction from \unarybinpackingstar{}. We thus fix an
    instance of~\unarybinpackingstar{} consisting of a
    collection \namedorderedsetof{\universe}{\element}{\universeSize}, a bin
    size~\binSize{}, and~\binsCount{}~bins. 

    In a new instance
		of~\multiPluralityAG{FPTP} that we create, we use budget~$\budget =
		\binsCount
		\binSize$. We create a preferred party~\preferredParty{}, a despised
		party~\despisedParty{}, and a \emph{bin party}~$\party_i$ for each bin~$i \in
		[\binsCount]$. We now proceed with the reduction description
		using a ``large enough'' constant~$\constant{} = 4\budget$ to ease the
		presentation. For each element~$\element \in \universe$, we build an
		\emph{element district}~$\districtOf{u}$ with exactly~$\element$~seats, in
		which the preferred party gets no votes, the despised party gets $(\constant +
		\element)$~votes, and each bin party gets $(\constant-\element)$~votes.
		Further, we create a collection of \emph{dummy districts}, whose role is to
		secure a certain number of seats to particular parties. The first two districts
		have, respectively, $(\constant-1)$ and $(\constant)$~seats to allocate and have
		exactly $\constant$~voters. In the first district all voters
		support~$\despisedParty$, and in the second one the respective voters
		support~$\preferredParty$. Analogously, we prepare further $\binsCount$~dummy
		districts, one for every~$i \in [\binsCount]$, each of them with
		$(\constant-\binSize - 1)$~seats and with exactly $\constant$~voters, all voting
		for the respective bin party~$\party_i$. Our goal in the new instance is to let
		party $\preferredParty$ get the highest number of seats after performing the
		winning campaign of cost at most~$\budget$. We assume that in case of internal
		ties, $\preferredParty$ is favored over all other parties, and the bin parties
		are preferred over the despised party. Due to the unary encoding of the
		elements in the original instance, the new instance is naturally constructed in
		polynomial-time.
		
		Prior to proving the correctness of the reduction, let us stress important
		features of the constructions. First of all, note a very large margin of
		victory~$\constant > 2\budget$ of parties winning in their respective dummy
		districts. As a result, independently of a performed campaign costing at most
		budget~$\budget$, the preferred and despised parties get, respectively, at least
		$\constant$ and $(\constant-1)$~seats each, and the element parties get at least
		$(\constant-\binSize - 1)$~seats (from the dummy districts) each. Second,
		$\constant$ is also the upper-bound on the total number of seats
		that $\preferredParty$ gets after every possible campaign with cost at
		most~$\budget$,
		i.e., it is impossible for the party to get any additional seat. 
		It holds because in each element district, after implementing
		any campaign, party~$\preferredParty$ gets at most $\budget$~votes, while the
		other parties have at least~$(\constant - \binSize - \budget) \geq
		2\budget$~votes. Eventually, from now on we assume that~$\preferredParty$
		always gets exactly $\constant$~seats, and so, it is never optimal to transfer
		votes to or from~$\preferredParty$ in the course of a successful campaign. 
		
		Let $S_1, S_2, \ldots, S_\binsCount$ be some packing~$\mathcal{S}$ of elements
		from~$\universe$ into $\binsCount$~bins witnessing a solution to the
		original instance. We transform this packing to the solution of the
		corresponding \multiPluralityAG{FPTP}~instance as follows. For each
		element~$\element \in \universe$ and its respective bin~$S_j$
		containing $\element$ in packing~$\mathcal{S}$, we transfer $\element$~votes
		from~$\despisedParty$ to~party~$\party_j$ in district~$\districtOf{\element}$
		corresponding to~$\element$. Since the sum of elements of~$\universe$
		is~$\budget$, such a bribery does not exceed budget~$\budget$. Furthermore,
		after the campaign, in each element district~$\districtOf{\element}$, $\element
		\in \universe$, it holds that party~$\party_j$ corresponding to bin~$S_j$
		containing $\element$ as well as the despised party have exactly $\constant$~vote. Thus
		the $\element$~seats of district~$\districtOf{\element}$ are allocated
		to~$\party_j$. Eventually, since each bin~$S_i$, $i \in [\binsCount]$, contains
		elements whose total sum is~$\binSize$, we obtain that $\despisedParty$
		loses $\binsCount\binSize = \budget$~seats ending up with $(\constant - 1)$~seats in
		total. Conversely, each bin party gets $\binSize$~additional seats, which gives
		it altogether~$\constant - 1$~seats. Since the preferred party has the largest
		number of seats, namely $\constant$, we obtained a valid solution to the
		corresponding constructed instance of~\multiPluralityAG{FPTP}.
		
		We now prove the opposite directions, that is, that if there is a successful
		campaing that makes the preferred party have more seats that any other party,
		then this campaign can be transformed to a sought packing of the elements
		of~$\universe$. First, recall that prior to a bribery the preferred
		party gets $\constant$~seats, the despised party
		gets~$(\constant+\binsCount\binSize-1)$~seats, and each of the bin parties
		gets~$(\constant-\binSize-1)$~seats. Additionally, as we have shown earlier,
		the preferred party has to get exactly $\constant$~seats after every campaign
		that does not exceed the budget. Hence, a successful campaing decreases the
		number of seats allocated to the despised party to at most~$(\constant-1)$~seats.
		Since we already have shown that the despised party has to get at least this
		number of seats, it follows that the despised party eventually gets
		exactly~$(\constant-1)$~seats. That is, it loses $\binsCount\binSize$~seats
		as a result of a successful campaign. Let us now focus on the element
		districts, which, prior to any campaign, in total give
		$\binsCount\binSize$~seats to the despised party. Consider some
		district~$\districtOf{\element}$, $\element \in \universe$. 
		To take $\element$~seats from~$\despisedParty$ in this discrict, one needs to move at
		least~$\element$ votes from~$\despisedParty$; furthermore, at least~$\element$
		of these votes need to be moved to one, arbitrarily chosen bin
		party~$\party_i$, $i \in [\binsCount]$. In such a case, $\despisedParty$ is
		left with at most $\constant$~votes and the chosen bin party obtains at
		least~$(\constant - \element + \element)=\constant$~votes. As a result, due to
		the tie-breaking, $\party_i$ gets all~$\element$~seats of the district and
		party~$\despisedParty$~loses all its~$\element$~seats. Extending this argument
		to all element districts, we obtain that each successful campaing uses exactly
		$\binsCount\binSize = \budget$~vote moves to take, in total,
		$\binsCount\binSize$~seats from~$\despisedParty$ in the element districts.
		Furthermore, by the fact the we use $\binsCount\binSize$~units of budget to
		take away exactly~$\binsCount\binSize$~seats from~$\despisedParty$, we know
		that for each element~$\element \in \universe$, a successful campaing spends
		exactly $\element$~units of budget in the respective element
		district~$\districtOf{\element}$. Consequently, after the campaign, the seats
		from district~$\districtOf{\element}$ are allocated to some party~$P_i$, $i
		\in [\binsCount]$. By the fact that prior to any campaingn each bin party has
		exactly~$(\constant - \binSize - 1)$~seats and due to the condition that the
		preferred party has exactly $\constant$~seats, we know that a successful
		campaingn results in at most $\binSize$~additional seats for each bin party. In
		fact, it must be exactly $\binSize$~additional seats as there is
		exactly $\binsCount$~bin parties and a successful campaing takes
		exactly $\binSize\binsCount$~seats from~$\despisedParty$ in total (and no other
		seats). Concluding the proof, we obtain a correct packing by assigning each
		element~$\element \in \universe$ to the bin that represents the bin party which
		gets seats in the element's district~$\districtOf{\element}$ in a successful
		campaign action.
		\let\universe\undefined
		\let\universeSize\undefined
		\let\element\undefined
		\let\binSize\undefined
		\let\binsCount\undefined
		\let\preferredParty\undefined
		\let\partyOf\undefined
		\let\dummyParties\undefined
	\end{proof}
For the above problems, i.e., for $\appmethod$-\textsc{MAWB} when $\appmethod$ is a divisor sequence method, LRM, or FPTP, and for $\appmethod$-\textsc{DMAWB} when $\appmethod$ is a divisor sequence method or LRM, 
we have not been able to find a parameterized hardness reduction nor to find an $\fpt$ algorithm for the parameter ``number of parties.'' A natural approach would be, e.g., to form an integer linear program and solve it using an $\fpt$ algorithm; unfortunately, for Lenstra's algorithm~\cite{len:j:integer-fixed} we end up with too many
variables, and for many other approaches, including $n$-fold IP, we end up with two large coefficients in the constraints; see the overview of Gavenciak et al.~\cite{gav-kno-kou:c:ilp-guidebook}.

In contrast to all other multi-district winner problems, FPTP-\textsc{DMAWB} is the only one that can be solved in polynomial time.

\begin{theorem}\label{thm:fptp-dmawb-in-p}
	FPTP-\textsc{DMAWB} is polynomial-time solvable.
\end{theorem}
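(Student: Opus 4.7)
The plan is to reduce the problem to an enumeration over the identity of the party that ultimately beats~$P^*$, combined with a per-district dynamic program. In a yes-instance, some party $Q \in \parties \setminus \{P^*\}$ must end up with strictly more seats than $P^*$, and since under FPTP every district awards all its seats to a single winner, once $Q$ is fixed each district contributes one of three increments to $\seatalloc(Q) - \seatalloc(P^*)$: namely $+\seats_j$ if $Q$ wins district $j$, $-\seats_j$ if $P^*$ wins it, or $0$ if a third party wins it. Thus it suffices to iterate over the $|\parties|-1$ possible choices of $Q$.

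As preprocessing, for every district $j$ and every party $X \in \parties_j$, I would compute $c_j(X)$, the minimum number of within-district vote changes needed to make $X$ the FPTP winner of district $j$ under the given tie-breaking. This is a simple greedy computation: sort the competitors by current vote count (with tie-breaking taken into account) and repeatedly transfer one vote from the strongest competitor to $X$ until $X$ becomes the plurality winner. The greedy choice is optimal because transferring a vote from a current competitor to $X$ closes the vote gap by two, which dominates every other one-vote move.

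For a fixed guess $Q$, let $\alpha_j = c_j(Q)$, $\beta_j = c_j(P^*)$, and $\gamma_j = \min\{c_j(R) : R \in \parties_j \setminus \{P^*, Q\}\}$, where $\gamma_j = +\infty$ if no such $R$ exists. Let $T[j,d]$ denote the minimum total bribery cost over districts $1,\ldots,j$ that achieves the value $d$ for $\seatalloc(Q) - \seatalloc(P^*)$ restricted to those districts. Starting from $T[0,0]=0$ and $+\infty$ elsewhere, I would fill the table via
\[
T[j,d] = \min\bigl\{\, T[j{-}1,\, d - \seats_j] + \alpha_j,\ T[j{-}1,\, d + \seats_j] + \beta_j,\ T[j{-}1,\, d] + \gamma_j \,\bigr\}.
\]
The algorithm accepts iff, for some $Q$, there exists $d > 0$ with $T[q,d] \leq \budget$. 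Because $\seats \leq |\votes|$ is polynomially bounded (voters are encoded in unary), the table has polynomially many entries, each filled in constant time, and the outer loop over $Q$ and the preprocessing of $c_j(\cdot)$ add only polynomial overhead.

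The only point that requires justification is the completeness of the ``guess $Q$ and commit to one winner per district'' strategy. Since FPTP assigns each district to a unique winner, every campaign is fully described by its per-district winners together with a concrete way to realize each of them; replacing these realizations by the minimum-cost options encoded in the $c_j(\cdot)$ values can never increase the total cost. Hence the DP explores an optimal campaign for every candidate~$Q$, and correctness follows. The main (minor) obstacle is merely the careful handling of tie-breaking when computing $c_j(X)$ and when identifying which net seat differences $d > 0$ actually certify victory; both are routine once the algorithmic skeleton above is in place.
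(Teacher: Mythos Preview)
Your proposal is correct and follows the same high-level strategy as the paper---guess the party $Q$ that will end up beating $P^*$, then optimize district by district---but your dynamic program is strictly more careful on one point. The paper's Knapsack reduction offers only two options per district: leave it untouched, or convert it into a $Q$-win. Your DP adds a third option, namely handing a district currently won by $P^*$ to some \emph{third} party, which contributes $0$ to $\seatalloc(Q)-\seatalloc(P^*)$ but may be far cheaper than making $Q$ the winner there. This option is not redundant: take three districts with seat counts $3,3,5$ where $P^*$ wins the first two narrowly against distinct third parties $R$ and $S$ (each flippable to $R$ resp.\ $S$ at cost~$1$, but to $Q$ only at cost~$\approx 50$), while $Q$ already holds the third. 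With budget~$2$ your DP (guessing $Q$) converts the first two districts to $R$- and $S$-wins, leaving $Q$ with $5$ seats against $P^*$'s $0$; by contrast, for \emph{no} guess can the paper's two-option Knapsack reach the required value within weight~$2$. So your three-way branching is what actually secures completeness of the search. The remaining difference---your explicit DP over the signed seat difference $d$ versus the paper's value-threshold Knapsack---is cosmetic.
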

\begin{proof}
	\newcommand{\party}{\ensuremath{P}}
	\newcommand{\difference}{\ensuremath{\Delta}}
	\newcommand{\bribery}{\ensuremath{b}}
	Let $\preferredParty$ be the distinguished party winning a given election with $q$ districts. We seek for a successful destructive campaign
	against~$\preferredParty$ using
	at most $\budget$~units of budget. We first guess a party~$\party \in
	\parties_{-\preferredParty}$ and compute a difference~$d$ of
	the number of seats allocated to party~$\preferredParty$ and~$\party$. 
	We add one to this difference and denote the resulting
	number by~$\difference$, i.e., $\difference = d + 1$. Clearly $\difference$ is positive integer, as
	otherwise~$\party$ already gets more seats than~$\preferredParty$. Next we apply
	the algorithm from~\Cref{prop:fptp} and, for each district~$i \in [q]$, we
	compute the smallest number~$\bribery_i$ of vote changes to make~$\party$
	win the seats in district~$i$; we set~$\bribery_i$ to infinity if $\party$
	already wins in~$i$. Using the computed values, we construct a
	\textsc{Knapsack} instance with the items corresponding to the districts.
	Specifically, for some item~$i \in [q]$, we let its weight be~$\bribery_i$
	and its value be either~$\seats_i$ or~$2\seats_i$, depending on whether
	$\preferredParty$ was initially losing or winning in district~$i$, respectively, which corresponds to the difference between the difference of seats allocated to $\preferredParty$ and $\party$ after the seats' transfer in district $i$ and before that. We
	then solve the \textsc{Knapsack} instance for value~$\difference$ subject to
	weight limit~$\budget$ and record the output. After running the above
	described procedure for each possible guess of party~$\party$, we output
	``yes'' if at least one of the guesses led to a positive \textsc{Knapsack}
	instance.
	
	Since there is a polynomial number of guesses and the weights of the
	constructed \textsc{Knapsack} instances are polynomially-bounded with
	respect to the input of our original instance, our algorithm runs in
	polynomial time. The correctess of the algorithm is a straightforward
	conclusion from the way how we compute the weights and the sizes of the
	items for the constructed \textsc{Knapsack} instances.
	\let\bribery\undefined
	\let\difference\undefined
	\let\party\undefined
	\let\preferredParty\undefined
\end{proof}

	\subsection{Experiments}\label{sec:experiments}

        As we have shown in the previous section, computing successful and even optimal campaigns for
        apportionment bribery in single or multiple districts is computationally tractable for both the constructive and the destructive case.
This indicates that such an attack would be relatively simple for a campaign manager to execute (at least from a computational standpoint). 
However, it remains to be investigated how \emph{effective} strategic campaigns are in real-world scenarios.
In this section, we thus experimentally study the effectiveness of such campaigns with regard to the following three questions:

\begin{enumerate}
	\item How effective are our optimal algorithms compared to simple heuristic bribing strategies, measured by the additional budget required %
	to increase or decrease a distinguished party’s number of seats by one?
	\item How does the electoral threshold impact the effectiveness of strategic campaigns, measured by the number of seats that can be gained for or taken away from a distinguished party in an optimal %
	campaign?
	\item How does the number of districts affect the difficulty of changing the election outcome, measured by the minimal budget required?
\end{enumerate}

We use three datasets from single-district elections and three datasets from multi-district elections.\footnote{The code and data used for our experiments can be found at
\url{https://github.com/bredereck/Strategic-Campaigns-in-Apportionment-Elections}.}
As the single-district elections, we use the Austrian parliamentary election from 2019, the Israel parliamentary election from 2022, and the Dutch parliamentary election from 2023.
For the multi-district elections, we use the Polish parliamentary election from 2023, the Argentinian parliamentary election from 2021, and the Portuguese parliamentary election from 2024.
These countries were chosen as they all use the D'Hondt method.
An overview of the number of districts, the number of seats in the parliament, and the threshold for each of those countries is given in Table~\ref{tab:datasets_overview}.\footnote{Note that the elections for the Argentinian \emph{Chamber of Deputies} are held every two years for half of the seats, i.e., there are 257 seats in total.}

\begin{table}[t]
	\caption{Overview of the number of districts, seats, thresholds, and total vote counts for all datasets chosen for our experiments.}
  \label{tab:datasets_overview}
  \centering
	{\scriptsize
	\begin{tabular}{lcccccc}
		& \textbf{Austria} & \textbf{Israel} & \textbf{The Netherlands} & \textbf{Poland} & \textbf{Portugal} & \textbf{Argentinia} \\
		& \textbf{2019} & \textbf{2022} & \textbf{2023} & \textbf{2023} & \textbf{2024} & \textbf{2021} \\
		\midrule
		num.\ of districts & 1 & 1 & 1 & 41 & 22 & 24 \\
		num.\ of seats & 183  & 120  & 150 & 460 & 230 & 127 \\
		threshold & $4\%$  & $3.25\%$  & $\nicefrac{1}{150}$  & $5\%$ & $0\%$ & $3\%$\\
		num.\ of (valid) votes & 4,777,246 & 4,764,742 & 10,432,526 & 21,596,674 & 6,194,290 & 23,602,493
	\end{tabular}
}
\end{table}

\subsubsection{Effectiveness of Optimal Campaigns}\label{subsec:experiments-heuristics}

In our initial experiment, we evaluate the effectiveness of an optimal bribing strategy compared to simpler heuristics in both the constructive and the destructive scenario.
Specifically, we assess the additional budget required by the heuristics to achieve the same goal of increasing (in the constructive case) or decreasing (in the destructive case) a given party's number of seats by one, compared to our optimal algorithm.  
Our exact algorithms sometimes result in rather complicated strategic campaigns (mostly in terms of voter movement). 
A simple heuristic produces minimally complex campaigns and is thus easy to realize.

We start with the constructive case, where we consider the following bribing strategies for party $P_i$:
\begin{enumerate}%
	\item \emph{Optimal bribery}: We move votes from other parties to $P_i$ optimally (as in the algorithms from Sections~\ref{subsec:top-choice-single-district} and~\ref{subsec:multi-district}, but using $P_i$ in the place of~$P_1$).
	\item \emph{Balanced bribery}: We move votes from the other parties to $P_i$ proportionally to their vote counts (this is a derandomized analogue of bribing votes uniformly at random).
	\item \emph{Weakest rival}: We move votes from the weakest party to $P_i$ (and if this rival party does not have enough votes, we move votes from the second-weakest rival, and so on).\footnote{By the weakest party we mean the one that originally had the fewest votes.}	
	\item \emph{Strongest rival}: We move votes from the strongest party to $P_i$ (and if this rival party does not have enough votes, we move votes from the second-strongest rival, and so on).\footnote{By the strongest party we mean the one that originally had the most votes.}	
\end{enumerate}
Note that an optimal bribery has very high information
requirements (exact vote counts), whereas the other forms require much less information. Thus one can also view the latter three strategies
as bribing strategies based on realistic information assumptions.

In each dataset (i.e., in each country), we assume the same threshold for all parties and all districts
(see Table~\ref{tab:datasets_overview} for an overview of the actual thresholds of the countries).
To gain seats in a district, a party must reach the threshold in that district.  
For each country, we compute the number of votes that need to be changed for the distinguished party to win an additional seat by applying the above mentioned bribing strategies $1$ to $4$. Note that in some cases, the number of additional seats is greater than one because a party has reached the threshold.
We calculate the efficiency of the different strategies as follows:
From each of the algorithms, we calculate the budget needed to achieve that goal, i.e., the number of votes that need to be changed for the chosen bribery strategy for the distinguished party to gain an additional seat, and we divide it by the value of the optimal algorithm, i.e., the minimum number of votes that need to be changed for the distinguished party to gain an additional seat. 

The results for the constructive case are shown in Table~\ref{tab:eff-one-seat} for the single-district elections, and in Table~\ref{tab:eff-one-seat-multi} for the multi-district elections.
Intuitively, we want to know how many more votes we need to change with a simple heuristic bribery strategy compared to the optimal one.
For example, a concrete number like ``$2$'' in Tables~\ref{tab:eff-one-seat} and~\ref{tab:eff-one-seat-multi} indicates that twice as many votes would have to be changed in the corresponding heuristic algorithm than in the optimal one in order for the distinguished party to gain an additional seat. 
Note that in some cases doubling the number of votes can even mean a difference of thousands of votes that would have to be bribed more in the simple strategy, even though it is only a matter of changing one seat. 
Lastly, in the ``average'' column of the tables, we treat the average results across all parties as the data for the distinguished party;\footnote{More specifically, for each country, we compute the average results over all parties by calculating the arithmetic mean of all budgets obtained with a specific algorithm and dividing this number by the arithmetic mean of the results of the optimal algorithm.} and we choose as our distinguished party the strongest party in the next column and the weakest party in the rightmost column. 
\begin{table*}[h]
	\caption{Effectiveness of different bribing strategies for one additional seat.}
	\scriptsize
	\begin{subtable}{0.5\textwidth}		
		\caption{Single-district countries.}
		\label{tab:eff-one-seat}
		\centering
		\begin{tabular}{cllll}
			& &         & strongest & weakest \\
			& & average & party & party\\
			\midrule
			\parbox[t]{2mm}{\multirow{4}{*}{\rotatebox[origin=c]{90}{Austria}}} &
			\ \ optimal bribery & 1 & 1 &  1  \\
			& \ \ balanced bribery & 1.01406  & 2.07465 & 1  \\
			& \ \ weakest rival &  1.0256 & 3.3224 &1 \\
			& \ \ strongest rival & 1.01547 & 1 & 1 \\
			\midrule
			\parbox[t]{2mm}{\multirow{4}{*}{\rotatebox[origin=c]{90}{Israel}}} &
			\ \ optimal bribery & 1 & 1 & 1 \\
			& \ \ balanced bribery & 1.01520 & 3.84078 & 1 \\
			& \ \ weakest rival & 1.01984 & 5.02418 & 1 \\
			& \ \ strongest rival &  1.01659 & 1 & 1 \\
			\midrule
			\parbox[t]{2mm}{\multirow{4}{*}{\rotatebox[origin=c]{90}{Netherlands}}} &
			\ \ optimal bribery & 1 & 1 &  1 \\
			& \ \ balanced bribery &  1.17935 &  4.99234 & 1 \\
			& \ \ weakest rival  & 1.22679  & 6.552335 & 1 \\
			& \ \ strongest rival & 1.22383 & 6.46223 & 1 
		\end{tabular}
	\end{subtable}
	\begin{subtable}{0.5\textwidth}		
		\caption{Multi-district countries.}
		\label{tab:eff-one-seat-multi}
		\centering
		\begin{tabular}{cllll}
			& &         & strongest  & weakest \\
			& & average & party  & party\\
			\midrule
			\parbox[t]{2mm}{\multirow{4}{*}{\rotatebox[origin=c]{90}{Poland}}} &
			\ \ optimal bribery & 1 & 1 & 1  \\
			& \ \ balanced bribery & 1.08521  & 1.65625 & 1.07160  \\
			& \ \ weakest rival & 1.17646 & 3.18750 & 1.16582 \\
			& \ \ strongest rival & 1.10409  & 1 & 1.07399 \\
			\midrule
			\parbox[t]{2mm}{\multirow{4}{*}{\rotatebox[origin=c]{90}{Portugal}}} &
			\ \ optimal bribery & 1 & 1 & 1 \\
			& \ \ balanced bribery & 1.14033 & 1.447489 &  1.12533 \\
			& \ \ weakest rival & 1.34459 & 2.32420 &1.31880 \\
			& \ \ strongest rival & 1.24686 & 1 & 1.24862\\
			\midrule
			\parbox[t]{2mm}{\multirow{4}{*}{\rotatebox[origin=c]{90}{Argentina}}} &
			\ \ optimal bribery & 1 & 1  & 1 \\
			& \ \ balanced bribery & 1.26643 & 1.07874 & 1.46548 \\
			& \ \ weakest rival & 1.46549 & 1.50394 &   2 \\
			& \ \ strongest rival & 1.21806 & 1 & 2 
		\end{tabular}
	\end{subtable}			
\end{table*} 

One can see that in most cases, optimal bribery is (significantly) more effective than simple bribing strategies, i.e., it requires a smaller budget for one additional seat. 
The largest difference in effectiveness can usually be seen when the strongest party is chosen as the distinguished one. 
For example, for the \emph{Party for Freedom} in the Dutch parliamentary election in 2023, the other simple bribery strategies require between $4.99$ and $6.55$ times as many votes as the optimal bribery by our algorithm.
However, there are also cases such as those where the weakest party is chosen as the distinguished party in the single-district countries, in which all algorithms have to move the same (minimum) number of votes for the distinguished party to win a seat. 
The reason for this is that the number of votes the parties received is below the threshold, and it is enough to make it reach the threshold to win a seat (and sometimes even more directly), without it making much difference from which parties the votes are taken. 
When choosing
among simple bribing strategies, taking votes proportionally from the other parties (i.e., using balanced bribery) appears to be the most promising one.

For the destructive case, we consider similar bribing strategies as in the constructive case: 		
\begin{enumerate}%
	\item \emph{Optimal bribery}: We move votes from $P_i$ to other parties optimally (as in the algorithms from Sections~\ref{subsec:top-choice-single-district} and~\ref{subsec:multi-district}, but using $P_i$ in the place of~$P_1$).
	\item \emph{Balanced bribery}: We move votes from $P_i$ to the other parties proportionally to their vote counts (this is a derandomized analogue of bribing votes uniformly at random).
	\item \emph{Weakest rival}: We move votes from $P_i$ to the weakest party.
	\item \emph{Strongest rival}: We move votes from $P_i$ to the strongest party.
\end{enumerate}

We have conducted the experiments analogously to the constructive case, except that our goal now is to calculate the number of votes that must be changed in order for the distinguished party to \emph{lose} a seat.
Note that in the destructive case, we do not take into account the parties that do not receive any seats, i.e., we do not select those parties as a distinguished party. 
The results are summarized in Table~\ref{tab:eff-minus-one-seat} for the elections with a single district and in Table~\ref{tab:eff-minus-one-seat-multi} for the elections with multiple districts.

The results for the destructive case are similar to those for the constructive case. 
Again, optimal bribery is (significantly) more effective than simple bribery strategies in almost all cases, and the largest difference in effectiveness occurs with the strongest party as the distinguished party, e.g., in the case of \emph{Likud}---the strongest party in the Israeli election in 2022---our algorithm delivers much cheaper bribery in comparison to the other strategies: Here, the simple bribing strategies need up to $5.59$ times as many votes as our optimal algorithm.
Again, there are cases where the budget is independent of the algorithm used (but fewer than in the constructive case), e.g., if a party only has one seat and only needs to lose enough votes to be pushed below the threshold.
This can be observed, for example, in the Dutch election, where the weakest party with at least one seat---the \emph{JA21}---is chosen as the distinguished party: Here, all algorithms have to move the same (minimum) number of votes for the distinguished party to lose a seat.
Lastly, when choosing among simple bribing strategies, using balanced bribery or taking votes from the strongest rival appears to be the most promising depending on the situation. 
However, it should be noted that the values for all simple strategies are close to each other in several cases.

\begin{table*}[h]
	\caption{Effectiveness of different bribing strategies for taking one seat from a party.}
	\scriptsize
	\begin{subtable}{0.5\textwidth}		
		\caption{Single-district-countries.}
		\label{tab:eff-minus-one-seat}
		\centering
		\begin{tabular}{c@{\hspace*{2mm}}l@{\hspace*{2mm}}l@{\hspace*{2mm}}l@{\hspace*{2mm}}l}
			& &         & strongest & weakest party  \\
			& & average & party & with seat \\
			\midrule
			\parbox[t]{2mm}{\multirow{4}{*}{\rotatebox[origin=c]{90}{Austria}}} &
			\ \ optimal bribery & 1 & 1 & 1   \\
			& \ \ balanced bribery &  1.26931 & 1.69816 & 1.08033 \\
			& \ \ weakest rival &  1.66797  & 2.69417 &  1.17211  \\
			& \ \ strongest rival &  1.22228 & 1.43293 &  1.08417  \\
			\midrule
			\parbox[t]{2mm}{\multirow{4}{*}{\rotatebox[origin=c]{90}{Israel}}} &
			\ \ optimal bribery & 1 & 1 & 1 \\
			& \ \ balanced bribery & 1.93924 & 4.30920 & 1.56716  \\
			& \ \ weakest rival & 2.16676 & 5.59157 & 1.625 \\
			& \ \ strongest rival & 2.02504 & 5.44377 & 1.625 \\
			\midrule
			\parbox[t]{2mm}{\multirow{4}{*}{\rotatebox[origin=c]{90}{Netherlands}}} &
			\ \ optimal bribery & 1 & 1  & 1 \\
			& \ \ balanced bribery & 1.1388 & 1.63824 & 1  \\
			& \ \ weakest rival & 1.18093 &  2.13672 & 1 \\
			& \ \ strongest rival & 1.17158  & 2.13672  & 1 
		\end{tabular}
	\end{subtable}
	\begin{subtable}{0.5\textwidth}		
		\caption{Multi-district countries.}
		\label{tab:eff-minus-one-seat-multi}
		\centering
		\begin{tabular}{c@{\hspace*{2mm}}l@{\hspace*{2mm}}l@{\hspace*{2mm}}l@{\hspace*{2mm}}l}
			& &         & strongest &  weakest party \\
			& & average & party & with seat \\
			\midrule
			\parbox[t]{2mm}{\multirow{4}{*}{\rotatebox[origin=c]{90}{Poland}}} &
			\ \ optimal bribery & 1 & 1 & 1   \\
			& \ \ balanced bribery & 1.17703 & 2.02857 &  1.12121   \\
			& \ \ weakest rival & 1.40546 & 2.99286 & 1.21212  \\
			& \ \ strongest rival & 1.17703 & 2.99286 & 1   \\
			\midrule
			\parbox[t]{2mm}{\multirow{4}{*}{\rotatebox[origin=c]{90}{Portugal}}} &
			\ \ optimal bribery & 1 & 1 & 1 \\
			& \ \ balanced bribery & 1.21057 & 1.25109 & 1.0703  \\
			& \ \ weakest rival & 1.40685 & 1.5 & 1.09050  \\
			& \ \ strongest rival & 1.38736 & 1 & 1.09050  \\
			\midrule
			\parbox[t]{2mm}{\multirow{4}{*}{\rotatebox[origin=c]{90}{Argentina}}} &
			\ \ optimal bribery & 1 &  1 & 1  \\
			& \ \ balanced bribery & 1.1331 & 1.46196  & 1.08754 \\
			& \ \ weakest rival & 1.27494 & 2  & 1.18748  \\
			& \ \ strongest rival & 1.19092 & 2  & 1.12473  
		\end{tabular}
	\end{subtable}
\end{table*}

\subsubsection{Influence of the Threshold}

In the second experiment, we want to find out what influence an electoral threshold has on the effectiveness of strategic campaigns. Here we define effectiveness as the number of seats that we can win for a distinguished party $P^*$ in an optimal constructive campaign, or as the number of seats that we can take away from a distinguished party $P^*$ in an optimal destructive campaign.

We conducted our experiments for the three single-district datasets as follows.
In all three elections, we focus on a budget $\budget$ equal to $0.25\%$ of the total vote count.
This is a relatively small fraction of the voters, and we find it plausible that a campaign manager could be able to influence that many voters.
To show the effect of the threshold on the effectiveness of a campaign, we gradually raise the threshold in our experiment. 
As the distinguished party $P^*$ we always choose the
party with the highest voter support in the election, since it reaches all tested thresholds and is thus always present in the parliament. 
Lastly, we use D'Hondt in our experiments as a representative of the divisor sequence methods, since it is one of the most widely used in apportionment elections.

\begin{figure*}[t]
	\begin{tikzpicture}
		\pgfplotsset{every tick label/.append style={font=\small}}
		\begin{axis}[width=0.395\textwidth, height = 4.5cm, legend image code/.code={\draw[#1, fill=#1] (0cm,0cm) circle (0.07cm);}, xmin = 0, xmax = 12]
			
			\addplot+[col_bribery, name path = bribery, no marks] plot [] table {DATA/0.0025/best/AUT2019/DHondt-Bribery-Max-Diff.dat};
		\end{axis}
	\end{tikzpicture}
	\begin{tikzpicture}
		\pgfplotsset{every tick label/.append style={font=\small}}
		\begin{axis}[width=0.395\textwidth, height = 4.5cm, legend image code/.code={\draw[#1] (0cm,0cm) circle (0.07cm);}, xmin = 0, xmax = 12]
			
			\addplot+[col_bribery, name path = bribery, no marks] plot [] table {DATA/0.0025/best/IKE2022/DHondt-Bribery-Max-Diff.dat};
		\end{axis}
	\end{tikzpicture}
	\begin{tikzpicture}
		\pgfplotsset{every tick label/.append style={font=\small}}
		\begin{axis}[width=0.395\textwidth, height = 4.5cm, legend image code/.code={\draw[#1] (0cm,0cm) circle (0.07cm);}, xmin = 0, xmax = 12]
			
			\addplot+[col_bribery, name path = bribery, no marks] plot [] table {DATA/0.0025/best/TWK2023/DHondt-Bribery-Max-Diff.dat};
		\end{axis}
	\end{tikzpicture}
	
	\begin{tikzpicture}
		\pgfplotsset{every tick label/.append style={font=\small}}
		\begin{axis}[width=0.395\textwidth, height = 4.5cm, legend image code/.code={\draw[#1, fill=#1] (0cm,0cm) circle (0.07cm);}, xmin = 0, xmax = 12, xlabel = {Austria}]
			
			\addplot+[col_bribery, name path = bribery, no marks] plot [] table {DATA/0.0025/best/AUT2019/DHondt-Bribery-Min-Diff.dat};
		\end{axis}
	\end{tikzpicture}
	\begin{tikzpicture}
		\pgfplotsset{every tick label/.append style={font=\small}}
		\begin{axis}[width=0.395\textwidth, height = 4.5cm, legend image code/.code={\draw[#1] (0cm,0cm) circle (0.07cm);}, xmin = 0, xmax = 12, xlabel = {Israel}]
			
			\addplot+[col_bribery, name path = bribery, no marks] plot [] table {DATA/0.0025/best/IKE2022/DHondt-Bribery-Min-Diff.dat};
		\end{axis}
	\end{tikzpicture}
	\begin{tikzpicture}
		\pgfplotsset{every tick label/.append style={font=\small}}
		\begin{axis}[width=0.395\textwidth, height = 4.5cm, legend image code/.code={\draw[#1] (0cm,0cm) circle (0.07cm);}, xmin = 0, xmax = 12, xlabel = {The Netherlands}]
			
			\addplot+[col_bribery, name path = bribery, no marks] plot [] table {DATA/0.0025/best/TWK2023/DHondt-Bribery-Min-Diff.dat};
		\end{axis}
	\end{tikzpicture}
	
	\caption{
		The $x$-axis shows a variety of thresholds in percent of the number of voters~$n$.
		The $y$-axis shows the maximally achievable number of additional seats (respectively, prevented seats) for the strongest party by D'Hondt-\textsc{AB} in the top row and by D'Hondt-\textsc{DAB} in the bottom row, each with a given budget of $\budget = 0.0025\cdot n$.
	}\label{fig:experiment-dhondt}
\end{figure*}

Figure~\ref{fig:experiment-dhondt} illustrates the effectiveness of both the constructive (top row) and destructive (bottom row) campaigns run on the three real-world elections.
One would expect some kind of proportionality, e.g., that $0.25\%$ of the voters control approximately $0.25\%$ of the seats.
Here, for all three countries (Austria, Israel, and the Netherlands), $0.25\%$ of the seats in the parliament correspond to between zero and one seat and this is indeed what we observe for many values of the threshold.
However, there are some spikes where with only $0.25\%$ of votes one can make the distinguished party $P^*$, i.e., the strongest party, gain sometimes $3.3\%$ (the Netherlands), $4.4\%$ (Austria), or even $12.5\%$ (Israel) of all seats on top in the constructive case.
This is considerably more than one can expect from our small budget.
Note that the spikes mainly occur at thresholds where a party is directly above the threshold.
For instance, in Austria we see a peak at thresholds $1.87\%$ and $8.10\%$, which are exactly the values where the parties \emph{JETZT~-- Pilz List} and \emph{NEOS~-- The New Austria} are slightly above the threshold.
This indicates that at these thresholds the campaign is focused on pushing a party below the threshold and freeing up its seats. 

For the destructive case, we can see similar peaks as in the constructive campaigns.
However, this time the peaks are mainly at thresholds where a party is just below the threshold.
That is, the campaign is focused on raising a party above the threshold to steal seats from $P^*$. 
For instance, in Austria there are peaks at thresholds of a little over $0\%$, $1.87\%$, and $8.10\%$, which coincides with there being seven parties with less than $1\%$ of all (valid) votes (\emph{KPÖ}, \emph{WANDL}, \emph{BIER}, \emph{GILT}, \emph{BZÖ}, \emph{SLP}, and \emph{CPÖ}), as well as \emph{JETZT~-- Pilz List} and \emph{NEOS~-- The New Austria} at thresholds $1.87\%$ and $8.10\%$.\footnote{Note that there is no peak at around the $0\%$ threshold in the constructive case, as those parties with less than $1\%$ of all votes cannot free up many seats for the distinguished party, yet they are only assigned one seat in total (specifically, \emph{KPÖ} has one seat) while all other six parties with less support than $1\%$ have no seats.}

Note that we also ran the experiments for other values of the bribery budget $\budget$ and observed the following: 
For smaller budgets, we see narrower (and sometimes lower) peaks right at the thresholds where a party is just above it (respectively, just below it, in the destructive case), while for larger budgets, the peaks become wider (and sometimes also higher).
Figure~\ref{smaller-budget} exemplarily shows the results for $0.15\%$ and Figure~\ref{larger-budget} for $0.35\%$.
Lastly, Figures~\ref{fig:second-best} and~\ref{fig:third-best} also illustrate the effects when we choose the second-strongest or third-strongest party as our distinguished party~$P^*$.
Again, the results are similar. 
\begin{figure*}[t]
	\begin{subfigure}{0.5\textwidth}
		\scriptsize
		\begin{tikzpicture}
			\begin{axis}[width=0.47\textwidth, height = 3.1cm, legend image code/.code={\draw[#1, fill=#1] (0cm,0cm) circle (0.07cm);}, xmin = 0, xmax = 12]
				
				\addplot+[col_bribery, name path = bribery, no marks] plot [] table {DATA/0.0015/best/AUT2019/DHondt-Bribery-Max-Diff.dat};
			\end{axis}
		\end{tikzpicture}
		\begin{tikzpicture}
			\begin{axis}[width=0.47\textwidth, height = 3.1cm, legend image code/.code={\draw[#1] (0cm,0cm) circle (0.07cm);}, xmin = 0, xmax = 12]
				
				\addplot+[col_bribery, name path = bribery, no marks] plot [] table {DATA/0.0015/best/IKE2022/DHondt-Bribery-Max-Diff.dat};
			\end{axis}
		\end{tikzpicture}
		\begin{tikzpicture}
			\begin{axis}[width=0.47\textwidth, height = 3.1cm, legend image code/.code={\draw[#1] (0cm,0cm) circle (0.07cm);}, xmin = 0, xmax = 12]
				
				\addplot+[col_bribery, name path = bribery, no marks] plot [] table {DATA/0.0015/best/TWK2023/DHondt-Bribery-Max-Diff.dat};
			\end{axis}
		\end{tikzpicture}
		
		\begin{tikzpicture}
			\begin{axis}[width=0.47\textwidth, height = 3.1cm, legend image code/.code={\draw[#1, fill=#1] (0cm,0cm) circle (0.07cm);}, xmin = 0, xmax = 12, xlabel = {Austria}]
				
				\addplot+[col_bribery, name path = bribery, no marks] plot [] table {DATA/0.0015/best/AUT2019/DHondt-Bribery-Min-Diff.dat};
			\end{axis}
		\end{tikzpicture}
		\begin{tikzpicture}
			\begin{axis}[width=0.47\textwidth, height = 3.1cm, legend image code/.code={\draw[#1] (0cm,0cm) circle (0.07cm);}, xmin = 0, xmax = 12, xlabel = {Israel}]
				
				\addplot+[col_bribery, name path = bribery, no marks] plot [] table {DATA/0.0015/best/IKE2022/DHondt-Bribery-Min-Diff.dat};
			\end{axis}
		\end{tikzpicture}
		\begin{tikzpicture}
			\begin{axis}[width=0.47\textwidth, height = 3.1cm, legend image code/.code={\draw[#1] (0cm,0cm) circle (0.07cm);}, xmin = 0, xmax = 12, xlabel = {The Netherlands}]
				
				\addplot+[col_bribery, name path = bribery, no marks] plot [] table {DATA/0.0015/best/TWK2023/DHondt-Bribery-Min-Diff.dat};
			\end{axis}
		\end{tikzpicture}
		
		\caption{Experiment with a budget of $\budget = 0.0015\cdot n$.}
		\label{smaller-budget}
	\end{subfigure}
	\begin{subfigure}{0.5\textwidth}
		\scriptsize
		\begin{tikzpicture}
			\begin{axis}[width=0.47\textwidth, height = 3.1cm, legend image code/.code={\draw[#1, fill=#1] (0cm,0cm) circle (0.07cm);}, xmin = 0, xmax = 12]
				
				\addplot+[col_bribery, name path = bribery, no marks] plot [] table {DATA/0.0035/best/AUT2019/DHondt-Bribery-Max-Diff.dat};
			\end{axis}
		\end{tikzpicture}
		\begin{tikzpicture}
			\begin{axis}[width=0.47\textwidth, height = 3.1cm, legend image code/.code={\draw[#1] (0cm,0cm) circle (0.07cm);}, xmin = 0, xmax = 12]
				
				\addplot+[col_bribery, name path = bribery, no marks] plot [] table {DATA/0.0035/best/IKE2022/DHondt-Bribery-Max-Diff.dat};
			\end{axis}
		\end{tikzpicture}
		\begin{tikzpicture}
			\begin{axis}[width=0.47\textwidth, height = 3.1cm, legend image code/.code={\draw[#1] (0cm,0cm) circle (0.07cm);}, xmin = 0, xmax = 12]
				
				\addplot+[col_bribery, name path = bribery, no marks] plot [] table {DATA/0.0035/best/TWK2023/DHondt-Bribery-Max-Diff.dat};
			\end{axis}
		\end{tikzpicture}
		
		\begin{tikzpicture}
			\begin{axis}[width=0.47\textwidth, height = 3.1cm, legend image code/.code={\draw[#1, fill=#1] (0cm,0cm) circle (0.07cm);}, xmin = 0, xmax = 12, xlabel = {Austria}]
				
				\addplot+[col_bribery, name path = bribery, no marks] plot [] table {DATA/0.0035/best/AUT2019/DHondt-Bribery-Min-Diff.dat};
			\end{axis}
		\end{tikzpicture}
		\begin{tikzpicture}
			\begin{axis}[width=0.47\textwidth, height = 3.1cm, legend image code/.code={\draw[#1] (0cm,0cm) circle (0.07cm);}, xmin = 0, xmax = 12, xlabel = {Israel}]
				
				\addplot+[col_bribery, name path = bribery, no marks] plot [] table {DATA/0.0035/best/IKE2022/DHondt-Bribery-Min-Diff.dat};
			\end{axis}
		\end{tikzpicture}
		\begin{tikzpicture}
			\begin{axis}[width=0.47\textwidth, height = 3.1cm, legend image code/.code={\draw[#1] (0cm,0cm) circle (0.07cm);}, xmin = 0, xmax = 12, xlabel = {The Netherlands}]
				
				\addplot+[col_bribery, name path = bribery, no marks] plot [] table {DATA/0.0035/best/TWK2023/DHondt-Bribery-Min-Diff.dat};
			\end{axis}
		\end{tikzpicture}
		\caption{Experiment with a budget of $\budget = 0.0035\cdot n$.}
		\label{larger-budget}
	\end{subfigure}
	\caption{The $x$-axis shows a variety of thresholds in percent of the number of voters~$n$.
		The $y$-axis shows the maximally achievable number of additional seats (prevented seats) for the strongest party by D'Hondt-\textsc{AB} in the top row and by D'Hondt-\textsc{DAB} in the bottom row.}
\end{figure*}

\begin{figure*}[t]
	\begin{subfigure}{0.5\textwidth}
		\scriptsize
		\begin{tikzpicture}
			\begin{axis}[width=0.47\textwidth, height = 3.1cm, legend image code/.code={\draw[#1, fill=#1] (0cm,0cm) circle (0.07cm);}, xmin = 0, xmax = 12]
				
				\addplot+[col_bribery, name path = bribery, no marks] plot [] table {DATA/0.0025/secondbest/AUT2019/DHondt-Bribery-Max-Diff.dat};
			\end{axis}
		\end{tikzpicture}
		\begin{tikzpicture}
			\begin{axis}[width=0.47\textwidth, height = 3.1cm, legend image code/.code={\draw[#1] (0cm,0cm) circle (0.07cm);}, xmin = 0, xmax = 12]
				
				\addplot+[col_bribery, name path = bribery, no marks] plot [] table {DATA/0.0025/secondbest/IKE2022/DHondt-Bribery-Max-Diff.dat};
			\end{axis}
		\end{tikzpicture}
		\begin{tikzpicture}
			\begin{axis}[width=0.47\textwidth, height = 3.1cm, legend image code/.code={\draw[#1] (0cm,0cm) circle (0.07cm);}, xmin = 0, xmax = 12]
				
				\addplot+[col_bribery, name path = bribery, no marks] plot [] table {DATA/0.0025/secondbest/TWK2023/DHondt-Bribery-Max-Diff.dat};
			\end{axis}
		\end{tikzpicture}
		
		\begin{tikzpicture}
			\begin{axis}[width=0.47\textwidth, height = 3.1cm, legend image code/.code={\draw[#1, fill=#1] (0cm,0cm) circle (0.07cm);}, xmin = 0, xmax = 12, xlabel = {Austria}]
				
				\addplot+[col_bribery, name path = bribery, no marks] plot [] table {DATA/0.0025/secondbest/AUT2019/DHondt-Bribery-Min-Diff.dat};
			\end{axis}
		\end{tikzpicture}
		\begin{tikzpicture}
			\begin{axis}[width=0.47\textwidth, height = 3.1cm, legend image code/.code={\draw[#1] (0cm,0cm) circle (0.07cm);}, xmin = 0, xmax = 12, xlabel = {Israel}]
				
				\addplot+[col_bribery, name path = bribery, no marks] plot [] table {DATA/0.0025/secondbest/IKE2022/DHondt-Bribery-Min-Diff.dat};
			\end{axis}
		\end{tikzpicture}
		\begin{tikzpicture}
			\begin{axis}[width=0.47\textwidth, height = 3.1cm, legend image code/.code={\draw[#1] (0cm,0cm) circle (0.07cm);}, xmin = 0, xmax = 12, xlabel = {The Netherlands}]
				
				\addplot+[col_bribery, name path = bribery, no marks] plot [] table {DATA/0.0025/secondbest/TWK2023/DHondt-Bribery-Min-Diff.dat};
			\end{axis}
		\end{tikzpicture}
		
		\caption{Experiment with the
                  second-strongest party.}
		\label{fig:second-best}
	\end{subfigure}
	\begin{subfigure}{0.5\textwidth}
		\scriptsize
		\begin{tikzpicture}
			\begin{axis}[width=0.47\textwidth, height = 3.1cm, legend image code/.code={\draw[#1, fill=#1] (0cm,0cm) circle (0.07cm);}, xmin = 0, xmax = 12]
				
				\addplot+[col_bribery, name path = bribery, no marks] plot [] table {DATA/0.0025/thirdbest/AUT2019/DHondt-Bribery-Max-Diff.dat};
			\end{axis}
		\end{tikzpicture}
		\begin{tikzpicture}
			\begin{axis}[width=0.47\textwidth, height = 3.1cm, legend image code/.code={\draw[#1] (0cm,0cm) circle (0.07cm);}, xmin = 0, xmax = 12]
				
				\addplot+[col_bribery, name path = bribery, no marks] plot [] table {DATA/0.0025/thirdbest/IKE2022/DHondt-Bribery-Max-Diff.dat};
			\end{axis}
		\end{tikzpicture}
		\begin{tikzpicture}
			\begin{axis}[width=0.47\textwidth, height = 3.1cm, legend image code/.code={\draw[#1] (0cm,0cm) circle (0.07cm);}, xmin = 0, xmax = 12]
				
				\addplot+[col_bribery, name path = bribery, no marks] plot [] table {DATA/0.0025/thirdbest/TWK2023/DHondt-Bribery-Max-Diff.dat};
			\end{axis}
		\end{tikzpicture}
		
		\begin{tikzpicture}
			\begin{axis}[width=0.47\textwidth, height = 3.1cm, legend image code/.code={\draw[#1, fill=#1] (0cm,0cm) circle (0.07cm);}, xmin = 0, xmax = 12, xlabel = {Austria}]
				
				\addplot+[col_bribery, name path = bribery, no marks] plot [] table {DATA/0.0025/thirdbest/AUT2019/DHondt-Bribery-Min-Diff.dat};
			\end{axis}
		\end{tikzpicture}
		\begin{tikzpicture}
			\begin{axis}[width=0.47\textwidth, height = 3.1cm, legend image code/.code={\draw[#1] (0cm,0cm) circle (0.07cm);}, xmin = 0, xmax = 12, xlabel = {Israel}]
				
				\addplot+[col_bribery, name path = bribery, no marks] plot [] table {DATA/0.0025/thirdbest/IKE2022/DHondt-Bribery-Min-Diff.dat};
			\end{axis}
		\end{tikzpicture}
		\begin{tikzpicture}
			\begin{axis}[width=0.47\textwidth, height = 3.1cm, legend image code/.code={\draw[#1] (0cm,0cm) circle (0.07cm);}, xmin = 0, xmax = 12, xlabel = {The Netherlands}]
				
				\addplot+[col_bribery, name path = bribery, no marks] plot [] table {DATA/0.0025/thirdbest/TWK2023/DHondt-Bribery-Min-Diff.dat};
			\end{axis}
		\end{tikzpicture}
		
		\caption{Experiment with the third-strongest party.}
		\label{fig:third-best}
	\end{subfigure}
	\caption{The $x$-axis shows a variety of thresholds in percent of the number of voters~$n$.
		The $y$-axis shows the maximally achievable number of additional seats (prevented seats) by D'Hondt-\textsc{AB} in the top row and D'Hondt-\textsc{DAB} in the bottom row, each with a given budget of $\budget = 0.0025\cdot n$.}
\end{figure*}

\subsubsection{Influence of the Number of Districts} 

In the third experiment, we study the influence of the number of districts on the needed budget for changing the election result.
Here, we consider the three multi-district elections, namely the Polish, Portuguese, and Argentinian datasets (again, see Table~\ref{tab:datasets_overview} for an overview of number of districts, seats, thresholds, and number of votes). 
We compute an optimal bribery for one party\footnote{For the Polish and Portuguese elections, we choose the party with the largest number of seats and for the Argentinian election, we choose the second strongest party as the strongest party is already rather close to a majority.} per dataset allowing this party to obtain a majority of seats ($50\% + 1$) in the constructive case (note that in none of the elections, there is a party that originally wins a majority of seats) and to lose half of its original seats in the destructive case.
To study the effect of the number of districts, we start with the original partition of the country and then decrease their number by merging districts.
We do this sequentially, always merging two districts picked uniformly at random, until the chosen number of districts remains. 
For the Polish election, we merge $10$, $20$, and $30$ districts, for the Portuguese and Argentinian elections (as they have a similar number of districts which is about a half of the number of districts in the Polish election), we merge $5$, $10$, and $15$ districts.
The result for a given number of districts is computed as follows:
\begin{enumerate}
	\item We take the average of the number of bribed voters over three
	trials, each for a different districting (computed as described above);
	\item we compute the average number of vote transfers per gained (respectively, lost) seat,
	and 
	\item we divide it by the number of vote transfers per gained (respectively, lost) seat in the original elections. 
\end{enumerate}
\noindent
The results are summarized in Table~\ref{tab:md-price}. 
In all cases, a smaller number of districts increases the minimum budget needed to win (respectively, lose) a seat for a party on average, i.e., the more districts there are in an election, the easier it is (in terms of cost) to carry out a successful bribery. 
For example, in the Polish elections (see Table~\ref{tab:md-price-new-pol}), if the number of districts is reduced by about half (to 21 districts), the budget needed to win a seat is more than twice as high ($2,371$ times) as with the original number of districts. In the destructive case, even more than four times as many votes ($4,213$ times) must be changed to make the party lose a seat. 
Note that even if the effect of merging districts is smaller, i.e., if the increase is slower (looking at the percentage of districts remaining), such as in the Portuguese elections (see Table~\ref{tab:md-price-new-por}), the difference between the minimum budgets is still at least a few thousand voters that one would need to bribe.

\begin{table*}[h]
	\caption{Average minimum required budgets for optimal bribery in elections with different numbers of districts to get an additional seat or lose a seat for an elected party; results are given by the ratio $\nicefrac{\text{MpS}}{\text{MpS-o}}$, where MpS is the average over the minimum budgets to get the seat (or lose it) for an elected party on average, and MpS-o is the MpS value for the original election, i.e., without merged districts.}
\begin{subtable}{0.5\textwidth}
\caption{Poland}
 \centering%
 \begin{tabular}{lrrrr}
  \toprule
  \# districts & $41$ & $31$ & $21$ & $11$ \\\midrule
  constructive & $1$ &
  $1.622$ & $2.371$ & $3.607$\\ 
  destructive & $1$ &
  $1.968$ & $4.213$ & $5.865$\\ \bottomrule
 \end{tabular}%
 \label{tab:md-price-new-pol}
\end{subtable} 
\begin{subtable}{0.5\textwidth}
\caption{Portugal}
 \centering%
 \begin{tabular}{lrrrr}
  \toprule
  \# districts & $22$ & $17$ & $12$ & $7$ \\\midrule
  constructive & $1$ &
  $1.168$ & $1.487$ & $1.843$\\ 
  destructive & $1$ &
  $1.293$ & $1.621$ & $2.260$\\ \bottomrule
 \end{tabular}%
 \label{tab:md-price-new-por}
\end{subtable}

\ \\ 

\begin{subtable}{\textwidth}
\caption{Argentina}
 \centering%
 \begin{tabular}{lrrrr}
  \toprule
  \# districts & $24$ & $19$ & $14$ & $9$ \\\midrule
  constructive & $1$ &
  $1.276$ & $1.961$ & $2.789$\\
  destructive & $1$ &
  $1.607$ & $2.805$ & $3.337$\\ \bottomrule
 \end{tabular}%
 \label{tab:md-price-new-arg}
\end{subtable} 
\label{tab:md-price}
\end{table*}

\section{Second-Chance Mode}\label{sec:second-chance}

	From the previous sections we know that it is quite problematic if optimal campaigns are easy to compute,
	because it can make it very simple for a campaign manager to exert an enormous influence on the election outcome.
	Therefore, it would be of great advantage if there were a modification to the usual apportionment setting that makes the computation of optimal campaigns intractable.
	As mentioned in the introduction, another general problem of apportionment elections with a threshold is that voters for parties below the threshold are completely ignored, which results in a less representative parliament.

	We now introduce the \emph{second-chance mode} of voting in apportionment elections which will help to resolve both of these problems.
	Unlike in the top-choice mode, in the second-chance mode voters for parties below the threshold get a second chance to use their vote.
	That is, we first determine the parties
        $\partiesabove$ that have at least $\threshold$ top choices, i.e., the parties that make it above the threshold.
	Each voter now counts as a supporter for their most preferred party in~$\partiesabove$,
        i.e., after removing the parties not in $\partiesabove$ from the ballots, votes for removed parties are transferred to the most preferred party from~$\partiesabove$.
	The second-chance voting process is somewhat reminiscent of
        iterative voting rules such as 
        \emph{plurality with runoff} or \emph{single transferable vote (STV)}.
	However, it also differs from them, as they are voting rules and not used for computing support allocations.
	
	\begin{example}[support allocation in the second-chance mode]\label{ex:secondchance} 
		Recall the election \[E=(\{P_1, \dots, P_5\}, \{v_1, \dots, v_{2052}\})\] from Example~\ref{ex:supportallocation} with the votes
		\begin{align*}
			v_1, \dots, v_{604} :&\quad P_1 \succ P_2 \succ P_3 \succ P_5 \succ P_4\\
			v_{605}, \dots, v_{819} :&\quad P_2 \succ P_1 \succ P_5 \succ P_4 \succ P_3\\
			v_{820}, \dots, v_{1174} :&\quad P_3 \succ P_2 \succ P_1 \succ P_4 \succ P_5\\
			v_{1175}, \dots, v_{1474} :&\quad P_1 \succ P_5 \succ P_3 \succ P_2 \succ P_4\\
			v_{1475}, \dots, v_{1652} :&\quad P_4 \succ P_2 \succ P_3 \succ P_1 \succ P_5\\
			v_{1653}, \dots, v_{1800} :&\quad P_2 \succ P_4 \succ P_3 \succ P_1 \succ P_5\\
			v_{1801}, \dots, v_{2000} :&\quad P_1 \succ P_3 \succ P_4 \succ P_5 \succ P_2\\
			v_{2001}, \dots, v_{2052} :&\quad P_5 \succ P_2 \succ P_3 \succ P_4 \succ P_1
		\end{align*}
		and the apportionment problem $I = (\supportalloc, 100, 6)$. %
		As we know from Example~\ref{ex:supportallocation}, $P_1$, $P_2$, $P_3$, and $P_4$
                each exceed the threshold, while party $P_5$ does not,
                so $\partiesabove = \parties_{100} = \{P_1, P_2, P_3, P_4\}$. 
		In the top-choice mode, the respective support allocation was as follows:
		\begin{align*}
			\supportalloc(P_1) &= 1104,\\
			\supportalloc(P_2) &= 363, \\
			\supportalloc(P_3) &= 355,\\
			\supportalloc(P_4) &= 178,\\
			\supportalloc(P_5) &= 0.
		\end{align*}
		However, in the second-chance mode, instead of ignoring all votes where $P_5$ is the top choice (here, votes $v_{2001}$ through~$v_{2052}$), we
                remove $P_5$ in these votes and then check which of the parties
                above the threshold---$P_1$, $P_2$, $P_3$, or~$P_4$---is most preferred in them
                after $P_5$'s removal.
               In this example, this is $P_2$ in all votes $v_{2001}, \ldots, v_{2052}$, so we add those $52$ votes to the support of $P_2$ and obtain the final support allocation 
		\begin{align*}
			\supportalloc(P_1) &= 1104,\\
			\supportalloc(P_2) &= 415, \\
			\supportalloc(P_3) &= 355,\\
			\supportalloc(P_4) &= 178,\\
			\supportalloc(P_5) &= 0.
		\end{align*}
	\end{example}

	Note that similar voting systems are already being used in Australia, e.g., for the \textit{House of Representative} and the \textit{Senate}.\footnote{\scriptsize{\url{https://www.aec.gov.au/learn/preferential-voting.htm}}} 
	In those elections, voters rank the candidates or parties from most to least preferred and votes for excluded choices are transferred according to the given ranking until the vote counts.

	\subsection{Single-District Case}

	In Section~\ref{subsec:top-choice-single-district}, we showed that in the classical apportionment setting, 
	$\appmethod$-\textsc{AB} and $\appmethod$-\textsc{DAB} can be solved efficiently in the single-district case for FPTP, each divisor sequence method, and LRM.
	In contrast, we will show that these problems are $\np$-hard in the second-chance mode.
        To this end, we provide reductions from the $\np$-complete \textsc{Hitting-Set} problem~\cite{kar:b:reducibilities}, which has been defined in
        Section~\ref{sec:preliminaries:useful-problems}.

	Instead of just focusing on specific apportionment methods, in the following we generalize our results to a whole class of apportionment methods.	

	\begin{definition}[majority consistency]
		We call an apportionment method \emph{majority-consistent} if no party in $\parties$ with less support than $A$ receives more seats than~$A$, where $A \in \parties$ is a party with the highest support.
	\end{definition}
        
	Undoubtedly, this is a criterion every reasonable apportionment method should satisfy.
	Note that FPTP, all divisor sequence methods, and LRM are obviously majority-consistent.
	
	We now show that the second-chance mode of apportionment voting makes computing an optimal strategic campaign computationally intractable, and thus can prevent attempts of running such campaigns.
	
	\begin{theorem}\label{thm:bribery-2nd-chance}
		For each majority-consistent apportionment method~$\appmethod$, $\appmethod$-\textsc{AB} and $\appmethod$-\textsc{DAB} are $\np$-hard in the second-chance mode.
		If $\appmethod$ is polynomial-time computable, they are $\np$-complete.
	\end{theorem}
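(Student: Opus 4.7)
The plan is as follows. Membership in $\np$ (when $\appmethod$ is polynomial-time computable) is immediate: guess for at most $\budget$ voters their new rankings, then evaluate the second-chance support allocation and apply $\appmethod$ to check whether $P^*$ achieves the target. So the focus is on $\np$-hardness, and I would reduce from \textsc{Hitting-Set} with instance $(U,\mathcal{S},\budget)$, $U=\{u_1,\ldots,u_p\}$ and $\mathcal{S}=\{S_1,\ldots,S_q\}$, producing an election whose only ``cheap'' successful bribery actions correspond to hitting sets of size $\budget$.

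For $\appmethod$-\textsc{AB}, the construction I have in mind introduces, for each $u_i$, an \emph{element party} $Q_i$ whose top-choice count sits exactly at $\threshold$, so that redirecting a single top-choice vote away from $Q_i$ knocks $Q_i$ below the threshold (while leaving $Q_i$ above the threshold requires no bribing at all). For each set $S_j$, I would add a block of voters whose top choice is a dummy party permanently below the threshold and whose remaining preferences are arranged so that, in the second-chance mode, the cascade from this block reaches $P^*$ exactly when at least one $Q_i$ with $u_i\in S_j$ has been pushed below the threshold; mechanically, this is done by ranking, above $P^*$, only the element parties $Q_i$ with $u_i\in S_j$, so the cascade stops at the first such $Q_i$ still above threshold and only flows through to $P^*$ once all of them are gone (and then a dual \textsc{Hitting-Set}/\textsc{Set-Cover} identification is used to convert between ``all $Q_i$ in $S_j$ absent'' and the hitting-set condition; equivalently one can give, for each $(j,i)$ with $u_i\in S_j$, a single voter with top choice $Q_i$ and second choice $P^*$, so that pushing $Q_i$ below the threshold routes exactly those voters to $P^*$). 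Filler parties and bulk supporters are added so that the \emph{only} way $P^*$ can become the most-supported party—and thus, by majority-consistency, obtain the prescribed $\desiredseats$ seats—is if every set is hit. Setting the budget $\budget$ equal to (a function of) $K$ and calibrating $\desiredseats$ appropriately completes the reduction.

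The main obstacle is forcing the bribery to follow exactly this ``hit the sets by threshold-pushes'' pattern, rather than finding some cheaper shortcut: we must ensure that (i)~moving a vote has no useful effect unless it either pushes a $Q_i$ below the threshold or triggers no cascading re-allocation at all, and (ii)~each threshold-push genuinely costs one budget unit. This is handled by padding every other party with large blocks of voters whose rankings are inert with respect to the cascade (so they never move in response to a single-vote bribe), and by choosing the total vote mass so that, without a full hitting set, $P^*$'s post-cascade support is strictly beaten by at least one other party. Majority-consistency then translates support gaps into seat gaps in the required direction.

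For $\appmethod$-\textsc{DAB}, I would use the same gadgetry with the roles inverted: start $P^*$ as the unique party with the highest support, and arrange a designated rival party $R$ whose support, once the cascade of second-chance votes from the $S_j$-gadgets is routed to $R$, overtakes $P^*$ precisely when a hitting set has been realized. Again majority-consistency guarantees that $P^*$ loses enough seats exactly in the yes-instances of \textsc{Hitting-Set}, establishing $\np$-hardness of both problems; membership in $\np$ upgrades these to $\np$-completeness whenever $\appmethod$ is polynomial-time computable.
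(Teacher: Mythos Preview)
Your reduction source (\textsc{Hitting-Set}) and the high-level idea (exploit the threshold to make single-vote moves trigger large cascades) match the paper, but the concrete gadget you describe does not encode \textsc{Hitting-Set}. You place each element party $Q_i$ \emph{at} the threshold and push it \emph{below}; for the $S_j$-block you then rank the parties $\{Q_i:u_i\in S_j\}$ above $P^*$. In the second-chance mode this voter supports $P^*$ only when \emph{every} such $Q_i$ is below the threshold, not when at least one is---you note this yourself. The ``dual \textsc{Hitting-Set}/\textsc{Set-Cover} identification'' does not rescue the construction: to route all $S_j$-blocks to $P^*$ you would need to push \emph{all} element parties below the threshold, which costs $p$, not $\budget$. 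Your alternative (one voter per incidence $(j,i)$ with top choice $Q_i$ and second choice $P^*$) is also off: pushing $Q_i$ below yields $P^*$ a gain of $|\{j:u_i\in S_j\}|$, so with budget $\budget$ you maximise a sum of degrees, which can be large without the chosen $u_i$'s forming a hitting set.

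The paper's gadget runs the threshold trick in the opposite direction, and that is exactly what makes it work: the element parties $u_i$ start \emph{below} the threshold, and each bribe moves one vote from the rival $P$ to some $u_i$, lifting $u_i$ \emph{above}. For each set $S_j$ there are voters with ranking $S_j\succ P$; these voters currently cascade to $P$, but as soon as \emph{any one} $u_i\in S_j$ is above the threshold, the cascade stops at that $u_i$ and $P$ loses those votes. Thus ``hitting $S_j$'' is precisely ``bringing one $u_i\in S_j$ above threshold,'' and hitting all $q$ sets with $\budget$ bribes strips $P$ of exactly $2q+\budget$ supporters, closing the gap to $P^*$. The point you are missing is this asymmetry: a ranking of the form $S_j\succ X$ gives the disjunctive (``at least one above'') behaviour only when you are \emph{raising} parties across the threshold, not lowering them.
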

        
	\begin{proof}
		Membership of both problems in $\np$ is obvious whenever $\appmethod$ is polynomial-time computable. 
		We show $\np$-hardness of $\appmethod$-\textsc{AB} by a reduction from \textsc{Hitting-Set}.
		Let $(U, S, \budget) = (\{u_1, \dots, u_p\}, \{S_1, \dots, S_q\}, \budget)$ be an instance of \textsc{Hitting-Set} with $q \geq 4$. 
		In polynomial time, we construct an instance of $\appmethod$-\textsc{AB} with parties $\parties = \{P^*,P\} \cup U$, a threshold $\threshold=2q+1$, $\desiredseats = 1$ desired seat, $\seats=1$ available seat, and the votes
		\begin{eqnarray}
			\votes = (  4q+2 \text{ votes} && P^* \succ \cdots, \nonumber\\ %
			4q+\budget+2\text{ votes}  && P \succ \cdots, \label{eq:cb-four}\\
			\text{for each } j \in [q], 2 \text{ votes} && S_j \succ P \succ \cdots,\label{eq:cb-one}\\
			\text{for each } i \in [p], q-\gamma_i \text{ votes} && u_i \succ P^* \succ \cdots,\label{eq:cb-two}\\
			\text{for each } i \in [p], q-\gamma_i \text{ votes} && u_i \succ P \succ \cdots \label{eq:cb-three}),
		\end{eqnarray}
		where $S_j \succ P$ denotes that each element in $S_j$ is preferred to~$P$, but we do not care about the exact order of the elements in~$S_j$.
		Further, $2\gamma_i$ is the number of votes from group~(\ref{eq:cb-one}) in which $u_i$ is in the first position. 
		Thus it is guaranteed that
                \begin{itemize}
                  \item each $u_i \in U$ receives exactly $2\gamma_i + (q-\gamma_i) + (q-\gamma_i) =2q < \threshold$ top choices 
        (which is below the threshold, preventing them from receiving a seat), while 
	          \item $P^*$ has $4q+2 \ge \threshold$
		top choices and the support of $(p+4)q+2-\sum_{i=1}^p \gamma_i$ voters,  
		and
                  \item $P$ has $4q+\budget+2 \ge \threshold$ top choices 
		with the support of $(p+6)q+\budget+2-\sum_{i=1}^p \gamma_i$ voters.
                \end{itemize}
		Note that the voters in groups~(\ref{eq:cb-one}) and~(\ref{eq:cb-three}) use their second chance to vote for~$P$, and those in group~(\ref{eq:cb-two}) use it to vote for~$P^*$.
		It follows that $P$ currently receives exactly $2q+\budget$ more votes than~$P^*$ and thus wins the seat.
                
		We now show that we can make the distinguished party
                $P^*$ win the seat by bribing at most $\budget$ voters if and only if there is a hitting set of size at most~$\budget$.
		
		\textbf{($\boldsymbol{\Leftarrow}$)}\;\;
		Suppose there exists a hitting set $U' \subseteq U$ of size
                $\budget$ (if $|U'| < \budget$, it can be padded to size exactly~$\budget$ by adding arbitrary elements from~$U$).
		For each $u_i\in U'$, we bribe one voter from group~(\ref{eq:cb-four}) to put $u_i$ at their first position.
		These $u_i$ now each receive the $2q+1 = \threshold$ top choices required by the threshold, i.e., they participate in the further apportionment process.		
		Each $u_i$ can receive a support of at most $4q+1$.
		Since the support of $P^*$ is not affected by any bribes, no $u_i$ can win the seat against $P^*$
                under any majority-consistent apportionment method~$\appmethod$.
		Groups~(\ref{eq:cb-two}) and~(\ref{eq:cb-three}) do not change the support difference between $P^*$ and $P$ and thus can be ignored.
		However, since
                all members of $U'$ now reach the threshold and 
                $U'$ is a hitting set, all $2q$ voters in group~(\ref{eq:cb-one}) now vote for a party in $U'$ instead of~$P$, reducing the difference between $P^*$ and $P$ by~$2q$.
		Further, we have bribed $\budget$ voters from group~(\ref{eq:cb-four}) to not vote for~$P$, which reduces the difference between $P^*$ and $P$ by another $\budget$ votes.
		Therefore, $P^*$ and $P$ now have the same support
        of $4q+2$ voters, 
        and since we assume tie-breaking to prefer~$P^*$, party $P^*$ wins the desired seat.
		
		\textbf{($\boldsymbol{\Rightarrow}$)}\;\;
		Suppose the smallest hitting set has size $\budget' > \budget$.
		That is, with only $\budget$ elements of $U$ we can hit at most $q-1$ sets from~$S$. 
		It follows that by bribing $\budget$ voters from group~(\ref{eq:cb-four}) to vote for some $u_i \in U$ instead of $P$, we can only prevent up to $2(q-1)$ voters from group~(\ref{eq:cb-one}) to use their second chance to vote for~$P$.
		Thus we reduce the difference between $P^*$ and $P$ by at most $2(q-1)+\budget$, which is not enough to make $P^*$ win the seat.
		Now consider that we do not use the complete budget $\budget$ on this strategy, i.e., to bribe voters of group~(\ref{eq:cb-four}), but only $\budget'' < \budget$.
		Note that by bringing only $\budget''$ parties from $U$ above the threshold, we can only hit up to $2(q-1-(\budget-\budget''))$ sets from~$S$.
		So the difference between $P^*$ and $P$ is reduced by at most $2(q-1-(\budget-\budget'')) + \budget''$ using this strategy.
		However, we now have a budget of $\budget-\budget''$ left to bribe voters, e.g., from group~(\ref{eq:cb-one}),
                to vote for $P^*$ as their top choice, without bringing any additional parties $u_i$ above the threshold.
		It is easy to see that the support difference between $P^*$ and $P$ will be only reduced by at most $2(\budget-\budget'')$ with this strategy because, in the best case, $P^*$ gains one supporter and $P$ loses one with a single bribery action.
		Thus, with this mixed strategy, we cannot reduce the difference between $P^*$ and $P$ by more than
                \[
                2(q-1-(\budget-\budget'')) + \budget'' + 2(\budget-\budget'') = 2(q-1)+ \budget''.
                \]
		For each $\budget'' \leq \budget$, we have $2(q-1)+ \budget'' < 2q+\budget$. 
		Therefore, if there is no hitting set of size at most $\budget$, we cannot make the  distinguished party $P^*$ win against~$P$.

		The proof for the destructive variant works by swapping the roles of $P^*$ and~$P$ and setting $\desiredseats = 0$.
	\end{proof}

	In the case that $\seats=1$ (i.e., we only have one seat to allocate), asking whether our distinguished party $P^*$ can receive at least $\desiredseats=1$ seat (respectively, $\desiredseats = 0$ in the destructive case) is the same as asking whether $P^*$ can become the strongest (respectively, not the strongest) party.
        Therefore, the same reduction can be used for the winner problems.

	\begin{theorem}\label{thm:winner-bribery-2nd-chance}
		For each majority-consistent apportionment method~$\appmethod$, $\appmethod$-\textsc{AWB} and $\appmethod$-\textsc{DAWB} are $\np$-hard in the second-chance mode.
		If $\appmethod$ is polynomial-time computable, they are $\np$-complete.
	\end{theorem}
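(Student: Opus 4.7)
The plan is to observe that the reduction in the proof of Theorem~\ref{thm:bribery-2nd-chance} already produces an instance with $\seats = 1$, so it can be reused essentially verbatim for the winner variants. Membership in $\np$ (when $\appmethod$ is polynomial-time computable) is immediate, since a nondeterministically guessed list of at most $\budget$ vote modifications can be verified by simply running $\appmethod$ on the modified election and checking that $P^*$ is strictly ahead of every other party (or, for \pluralityAGnp{}'s destructive variant, that some party is strictly ahead of $P^*$).

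For the hardness direction, I would take the same instance $(\parties, \votes, \threshold, \seats, P^*, \desiredseats, \budget)$ constructed from a \textsc{Hitting-Set} instance in the proof of Theorem~\ref{thm:bribery-2nd-chance}, and simply drop $\desiredseats$ from the input, obtaining an instance of $\appmethod$-\pluralityAGnp{} (respectively, of its destructive counterpart). Since $\seats = 1$ there, any majority-consistent apportionment method allocates the single seat to exactly one party---the one with the largest support (modulo tie-breaking). Therefore ``$P^*$ receives at least one seat'' is equivalent to ``$\seatalloc(P^*) > \seatalloc(P_i)$ for all $P_i \in \parties_{-P^*}$'' whenever ties are broken in favor of $P^*$, which is exactly the assumption made in the original construction. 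Analogously, ``$P^*$ receives at most zero seats'' is equivalent to ``there exists $P_i$ with $\seatalloc(P_i) > \seatalloc(P^*)$.''

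With this equivalence in hand, the forward and backward arguments of Theorem~\ref{thm:bribery-2nd-chance} translate literally. In the ``only if'' direction, a hitting set of size at most $\budget$ yields a bribery after which $P^*$ and $P$ both have support $4q+2$, all elements $u_i \in U$ brought above the threshold have support at most $4q+1$, and the tie-breaking rule awards the lone seat to $P^*$; this is exactly a successful winner bribery. In the ``if'' direction, the counting argument showing that without a hitting set the support of $P$ remains strictly larger than that of $P^*$ still rules out any majority-consistent method assigning the seat to $P^*$, hence no winner bribery exists. The destructive case is handled by the same role-swap as in the previous proof.

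The main subtlety---and essentially the only one---is making sure the ``at most~$q-1$ sets are hit'' counting argument transfers without modification; this is routine, since that argument only compares the final support values of $P^*$ and $P$ and does not rely on $\desiredseats$ at all. No new construction is needed and no additional parameters have to be tuned, so the proof of Theorem~\ref{thm:winner-bribery-2nd-chance} reduces to one short paragraph invoking Theorem~\ref{thm:bribery-2nd-chance} together with the $\seats = 1$ equivalence above.
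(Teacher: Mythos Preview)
Your proposal is correct and matches the paper's own argument essentially verbatim: the paper likewise observes that because the construction in Theorem~\ref{thm:bribery-2nd-chance} has $\seats=1$, asking whether $P^*$ can receive at least $\desiredseats=1$ seat (respectively, at most $\desiredseats=0$ seats) is the same as asking whether $P^*$ becomes the strongest (respectively, not the strongest) party, so the identical reduction establishes Theorem~\ref{thm:winner-bribery-2nd-chance}.
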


	\subsection{Multi-District Case}
	
	Since in the second-chance mode all single-district problems %
	are already $\np$-hard for all majority-consistent apportionment methods (see Theorems~\ref{thm:bribery-2nd-chance} and~\ref{thm:winner-bribery-2nd-chance}), a simple reduction from
        the single-district problems to the corresponding multi-district problems suffices to prove $\np$-hardness for the multi-district case. 
        Specifically, for our reductions in the constructive case (i.e., from $\appmethod$-\textsc{AB} to $\appmethod$-\textsc{MAB}) and in the destructive case (i.e., from $\appmethod$-\textsc{DAB} to $\appmethod$-\textsc{DMAB}), we use the same input as in the single-district problem and set the number of districts to $q = 1$.
        For the winner problems, we reduce from $\appmethod$-\textsc{AWB} to $\appmethod$-\textsc{MAWB} in the constructive case and from $\appmethod$-\textsc{DAWB} to $\appmethod$-\textsc{DMAWB} in the destructive case, using the same simple construction as above, except that here we are not given a desired number of seats.

     \begin{proposition}\label{prop:multibribery_2ndchance}
		For each majority-consistent apportionment method~$\appmethod$, $\appmethod$-\textsc{MAB}, $\appmethod$-\textsc{DMAB}, $\appmethod$-\textsc{MAWB}, and $\appmethod$-\textsc{DMAWB} are $\np$-hard in the second-chance mode.
		If $\appmethod$ is polynomial-time computable, they are $\np$-complete.
	\end{proposition}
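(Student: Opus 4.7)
The plan is to establish the claim by a simple reduction that turns every single-district instance into a multi-district instance with exactly one district. This works because, when $q=1$, the semantics of $\appmethod$-\textsc{MAB}, $\appmethod$-\textsc{DMAB}, $\appmethod$-\textsc{MAWB}, and $\appmethod$-\textsc{DMAWB} coincide verbatim with those of $\appmethod$-\textsc{AB}, $\appmethod$-\textsc{DAB}, $\appmethod$-\textsc{AWB}, and $\appmethod$-\textsc{DAWB}, respectively, and the latter four problems are already known to be $\np$-hard in the second-chance mode by Theorems~\ref{thm:bribery-2nd-chance} and~\ref{thm:winner-bribery-2nd-chance}.

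Concretely, given an instance $(E,\threshold,\seats,P^*,\desiredseats,\budget)$ of $\appmethod$-\textsc{AB} in the second-chance mode, I would build the $\appmethod$-\textsc{MAB} instance with number of districts $q=1$, list of elections $[E]$, threshold $\threshold_1=\threshold$, seat count $\seats_1=\seats$, same distinguished party $P^*$, desired number of seats $\desiredseats$, and budget $\budget$. The construction is clearly polynomial-time and, by definition, a successful campaign in the constructed instance (bribing at most $\budget$ votes across the single district so that $P^*$ obtains at least $\desiredseats$ seats) exists if and only if one exists in the original single-district instance. The same construction, applied to the destructive variant and to the two winner variants (where no desired seat count appears in the input), establishes $\np$-hardness for $\appmethod$-\textsc{DMAB}, $\appmethod$-\textsc{MAWB}, and $\appmethod$-\textsc{DMAWB}.

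For containment in $\np$ under the assumption that $\appmethod$ is polynomial-time computable, I would use the standard guess-and-check certificate: nondeterministically guess, for each district, a subset of at most $\budget$ voters (in total across districts) together with their modified ballots; verify in polynomial time that the total number of modified votes is at most~$\budget$; recompute, in each district, the set $\partiesabove$ of parties above the threshold and the induced second-chance support allocation; and finally run $\appmethod$ in each district to obtain the seat allocation. Summing seats across districts, check the corresponding acceptance condition ($\seatalloc(P^*)\ge\desiredseats$ or $\seatalloc(P^*)\le\desiredseats$ for \textsc{MAB}/\textsc{DMAB}, or the strict seat comparisons for \textsc{MAWB}/\textsc{DMAWB}). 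Each step is polynomial in the input size because the second-chance transfer only requires, for every vote whose top choice is below the threshold, scanning its ranking for the highest-ranked party in $\partiesabove$.

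There is no serious obstacle here: the only points worth stating explicitly are that the single-district NP-hardness results used as the source already cover all four variants (two for \textsc{AB}/\textsc{DAB} and two for \textsc{AWB}/\textsc{DAWB}), and that majority consistency of $\appmethod$ is inherited unchanged by the constructed instance since it is a property of the apportionment method itself and does not depend on the districting. Hence the reduction preserves all hypotheses and conclusions, yielding $\np$-hardness for all four multi-district problems, and $\np$-completeness whenever $\appmethod$ is polynomial-time computable.
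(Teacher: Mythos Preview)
Your proposal is correct and follows essentially the same approach as the paper: reduce each single-district problem to its multi-district counterpart by setting $q=1$, invoking Theorems~\ref{thm:bribery-2nd-chance} and~\ref{thm:winner-bribery-2nd-chance} for the hardness, and use a standard guess-and-check argument for membership in~$\np$. The paper's argument is terser but identical in substance.
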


\section{Conclusion and Future Work}

\begin{table}[t]
	\caption{
	  Overview of our results on the complexity of
          bribery in apportionment elections.
                The \np-hardness results in the rightmost column are
                \np-completeness results for majority-consistent
                apportionment methods computable in polynomial time.
                All \wonehardness{} results are with respect to the number of districts.
        }
	\label{tab:results_overview}
	\begin{tabularx}{\textwidth}{ll|CCC|CCC}
          \toprule
		&  & \multicolumn{3}{c}{Top-choice mode} & \multicolumn{3}{|c}{Second-chance mode} \\
		\midrule
		&  & FPTP & divisor sequence & LRM & \multicolumn{3}{c}{majority-consistent} \\
		\midrule
		\multirow{2}{*}{AB} & constr. & $\p$ (Prop~\ref{prop:fptp}) & $\p$ (Thm~\ref{thm:bribery-top-choice}) & $\p$ (Thm~\ref{thrm:bribery-top-choice-lrm}) & \multicolumn{3}{c}{\np-hard (Thm~\ref{thm:bribery-2nd-chance})} \\
		&  destr. & $\p$ (Prop~\ref{prop:fptpdestructive}) & $\p$ (Thm~\ref{thm:destr-bribery-top-choice}) & $\p$ (Thm~\ref{thm:destr-bribery-top-choice-lrm}) & \multicolumn{3}{c}{\np-hard (Thm~\ref{thm:bribery-2nd-chance})}\\
		\midrule
		\multirow{2}{*}{AWB} & constr. & $\p$ (Prop~\ref{prop:fptp}) & $\p$ (Thm~\ref{thm:winner-single-bribery}) & $\p$ (Thm~\ref{thm:winner-single-bribery}) & \multicolumn{3}{c}{\np-hard (Thm~\ref{thm:winner-bribery-2nd-chance})} \\
		& destr. & $\p$ (Prop~\ref{prop:fptpdestructive}) & $\p$ (Prop~\ref{prop:destr-winner-brib})& $\p$ (Prop~\ref{prop:destr-winner-brib-lrm}) & \multicolumn{3}{c}{\np-hard (Thm~\ref{thm:winner-bribery-2nd-chance})}\\
		\midrule
		\multirow{2}{*}{MAB} & constr. & $\p$ (Prop~\ref{prop:divisor_lrm_multi_bribery}) & $\p$ (Prop~\ref{prop:divisor_lrm_multi_bribery}) & $\p$ (Prop~\ref{prop:divisor_lrm_multi_bribery}) & \multicolumn{3}{c}{\np-hard (Prop~\ref{prop:multibribery_2ndchance})} \\
		&  destr. & $\p$ (Prop~\ref{prop:destr_divisor_lrm_multi_bribery}) & $\p$ (Prop~\ref{prop:destr_divisor_lrm_multi_bribery}) & $\p$ (Prop~\ref{prop:destr_divisor_lrm_multi_bribery}) &\multicolumn{3}{c}{\np-hard (Prop~\ref{prop:multibribery_2ndchance})}\\
		\midrule
		\multirow{4}{*}{MAWB} & constr. & \np-compl (Thm~\ref{thm:fptp-np-hard-wone-hard-winner}) & \np-compl (Thm~\ref{thm:pAG-np-h}) & \np-compl (Thm~\ref{thm:pAG-np-h}) &\multicolumn{3}{c}{\np-hard (Prop~\ref{prop:multibribery_2ndchance})} \\
		&  & \wone{}-hard (Thm~\ref{thm:fptp-np-hard-wone-hard-winner}) & \wone{}-hard (Thm~\ref{thm:wone-hard-districts}) & \wone{}-hard (Thm~\ref{thm:wone-hard-districts}) & \multicolumn{3}{c}{\ } \\
		& destr. &  $\p$ (Thm~\ref{thm:fptp-dmawb-in-p}) & \np-compl (Thm~\ref{thm:dest-multi-winner}) & \np-compl (Thm~\ref{thm:dest-multi-winner}) & \multicolumn{3}{c}{\np-hard (Prop~\ref{prop:multibribery_2ndchance})} \\
		&  & & \wone{}-hard (Thm~\ref{thm:wone-hard-multi-destructive}) & \wone{}-hard (Thm~\ref{thm:wone-hard-multi-destructive}) & \multicolumn{3}{c}{\ } \\
          \bottomrule
	\end{tabularx}
\end{table}

We have studied strategic campaign management (modeled as bribery problems) for apportionment elections in the single- and multi-district setting and introduced the second-chance mode of voting, where voters for parties below the threshold get a second chance to use their vote.
An overview of our theoretical complexity results for constructive bribery problems is given in Table~\ref{tab:results_overview}.
We also ran extensive experimental analyses studying the effectiveness of optimal campaigns, in particular as opposed to using heuristic bribing strategies and with respect to the influence of the threshold and the influence of the number of districts. 
We found that in most cases, moving votes optimally is (significantly) more effective than simple bribing strategies, our optimal campaigns can exploit and significantly benefit from a threshold, and the fewer districts there are, the more vote moves are required to obtain (respectively, take away) a certain number of seats.

Our algorithms require precise information about vote counts,
i.e., information that is difficult to acquire \emph{prior to} an election.
We consider it an interesting task for future work to study
bribery and strategic campaigns in apportionment elections based on imperfect information (e.g., poll data).
The comparison of simple bribing schemes in the experimental setting of Section~\ref{subsec:experiments-heuristics} can be seen as a first step in this direction.

Further, we focused on popular apportionment schemes, but there are other such methods as well, such as the quota method of Balinski and Young~\citep{bal-you:j:quota-method-of-apportionment,bal-you:j:apportionment-schemes-and-quota-method} or Frege's apportionment method, described in a long-lost writing of Frege's that was rediscovered in Jena, German, in 1998 (see, e.g., the work of Harrenstein et al.~\citep{har-lac-lac:j:mathematical-analysis-of-election-system-by-gottlob-frege}).

Moreover, we propose to study other types of strategic campaigns (e.g., cloning of parties; in a spirit similar to cloning in single-winner and multiwinner elections~\cite{tid:j:independence-of-clones,elk-fal-sli:j:cloning,nev-rot:c:complexity-of-cloning-candidates-in-multiwinner-elections}). 
We have already studied electoral control problems (see the book chapters by Faliszewski and Rothe~\cite{fal-rot:b:handbook-comsoc-control-and-bribery} and Baumeister and Rothe~\cite{bau-rot:b-2nd-edition:economics-and-computation-preference-aggregation-by-voting}), in particular constructive and destructive control by adding or deleting parties or voters.
It turns out that both the top-choice and the second-chance mode are resistant to all four party control problems.
For the proofs, it suffices to adapt the proofs of Bartholdi et al.~\cite{bar-tov-tri:j:control} and Hemaspaandra et al.~\cite{hem-hem-rot:j:destructive-control} showing that plurality voting is resistant to the corresponding electoral control problems.
For voter control in the top-choice mode, Algorithm~\ref{alg:bribery-constructive} can be adapted to show that all four cases of voter control are in $\p$ for divisor sequence methods.
However, this only works when the threshold is fixed, i.e., not given as a percentage of the number $n$ of voters.
Regarding the second-chance mode, we have been able to show that all majority-consistent apportionment methods are resistant to voter control if the threshold is fixed.
 
Another direction for future research is to study the complexity of these problems in restricted domains such as (nearly) single-peaked preferences~\cite{fal-hem-hem:j:single-peaked-nearly,erd-lac-pfa:c:nearly-sp,fal-hem-hem-rot:j:single-peaked-preferences}.
Also, studying the effectiveness of strategic campaigns in the second-chance mode using
integer linear programming or approximation algorithms is an interesting direction for future work.

To make our strategic campaigns even more realistic, we propose to study more sophisticated cost functions such as \emph{distance bribery}~\cite{bau-hog-rey:c:generalized-distance-bribery} where the cost of bribing a voter depends on how much we change the vote.
We conjecture the problem to be harder under the assumption of distance bribery, founded on the observation that Lemma~\ref{lem:moving-voters} no longer holds.
That is, there are cases where it is more effective to move votes within $\parties_{- P^*}$ than to move them to~$P^*$. 
To illustrate this, suppose we have two seats available for three parties, a support allocation given by $\supportalloc(P^*) = 7$, $\supportalloc(P_A) = 4$, and $\supportalloc(P_B) = 2$, and a threshold of $\threshold < 2$.
According to D'Hondt, $P^*$ and $P_A$ each receive one seat.
Suppose the budget is $\budget = 1$ but the cost for changing a vote from $P_A$ to $P^*$ is~$2$, and the cost for changing a vote from $P_B$ to $P^*$ is even higher.
We thus cannot move a single vote to~$P^*$, i.e., we cannot gain any seats for $P^*$ by this strategy.
However, if the cost for moving a vote from $P_A$ to $P_B$ is~$1$, we gain one seat for $P^*$ by moving a vote from $P_A$ to~$P_B$.

While NP-hardness is desirable in the context of strategic campaigns, in the context of, e.g., margin of victory or robustness, the interpretations can be flipped, which can also be studied as future work.
Finally, we suggest studying the extent to which voter satisfaction with the parliament increases when the second-chance mode is used.

\section* {Acknowledgements}
We thank Niclas Boehmer and Martin Bullinger for their insightful comments and ideas that we discussed with them during Seminar~19381 ``Application-Oriented Computational Social Choice'' at Schloss Dagstuhl in September 2019.
We also thank the anonymous IJCAI'20 and SOFSEM'24 reviewers for helpful comments.
This work was supported in part by Austrian Science Fund (FWF) under grant P31890, by Deutsche Forschungsgemeinschaft (DFG) under grants BR~5207/1 and NI~369/15 (project number 284041127) and RO~1202/21-1 and RO~1202/21-2 (project number 438204498), from the funds assigned to AGH by Polish Ministry of Science and Higher Education, and by a Friedrich Wilhelm Bessel Research Award given by Alexander von Humboldt Foundation to Piotr Faliszewski.

\bibliographystyle{abbrvnat}
\bibliography{master}
\end{document}